\newtheorem{cor}{Corollary}
\DeclareMathOperator*{\Dg}{\mathsf{diag}}
\newcommand{\hma}[1]{\textcolor{black}{#1}}
\newtheorem{remark}{Remark}
\newtheorem{lemma}{Lemma}
\newtheorem{thm}{Theorem}
\def\BibTeX{{\rm B\kern-.05em{\sc i\kern-.025em b}\kern-.08em
    T\kern-.1667em\lower.7ex\hbox{E}\kern-.125emX}}
\begin{document}
\title{On the analysis of a higher-order Lotka-Volterra model: an application of $\mathcal{S}$-tensors and the polynomial complementarity problem}
\author{Shaoxuan Cui$^{1,5}$, Qi Zhao$^{2}$, Guofeng Zhang$^{3}$, \IEEEmembership{Member, IEEE}, Hildeberto Jardón-Kojakhmetov$^{1}$ and Ming Cao$^{4}$, 
\thanks{$^{1}$ S. Cui, and H. Jard\'on-Kojakhmetov are with the Bernoulli Institute for Mathematics, Computer Science and Artificial Intelligence, University of Groningen, Groningen, 9747 AG Netherlands {\tt\small \{s.cui, h.jardon.kojakhmetov\}@rug.nl}}
\thanks{$^{2}$ Q. Zhao is with the School of Data Science, Qingdao University of Science and Technology, Qingdao, 266061 China {\tt\small zhaoqi\_1995@163.com}}%
\thanks{$^{4}$ M. Cao is with the Engineering and Technology Institute Groningen, University of Groningen, Groningen, 9747 AG Netherlands {\tt\small m.cao@rug.nl}}
\thanks{$^{5}$ S.Cui was supported by China Scholarship Council.}}

\maketitle

\begin{abstract}
It is known that the effect of species' density on species' growth is non-additive in real ecological systems. 
 This challenges the conventional Lotka-Volterra model, where the interactions are always pairwise and their effects are additive. To address this challenge, we introduce HOIs (Higher-Order Interactions) which are able to capture, for example, the indirect effect of one species on a second one correlating to a third species. Towards this end, we study a general higher-order Lotka-Volterra model. We provide an existence result of a positive equilibrium for a non-homogeneous polynomial equation system with the help of $\mathcal{S}$-tensors. Afterward, by utilizing the latter result, as well as the theory of monotone systems and results from the polynomial complementarity problem, we provide comprehensive results regarding the existence, uniqueness, and stability of the corresponding equilibrium. These results can be regarded as natural extensions of many analogous ones for the classical Lotka-Volterra model, especially in the case of full cooperation, competition among two factions, and pure competition. Finally, illustrative numerical examples are provided to highlight our contributions.
\end{abstract}

\begin{IEEEkeywords}
Tensor, $\mathcal{S}$-tensor, Polynomial complementarity problem, Higher-order Interactions, Lotka-Volterra model, Stability analysis
\end{IEEEkeywords}

\section{Introduction}

\IEEEPARstart{T}he Lotka-Volterra model is one of the most fundamental and widely adopted population models in mathematical biology and ecology, originating from Lotka \cite{lotka1920analytical} and Volterra \cite{volterra1928variations}. An early analysis of the 2-species Lotka-Volterra model was conducted by \cite{goh1976global}. Then, the stability results of the multi-species cooperative model (see Chapter 4 and Definition 16 \cite{sb2010}) were derived by \cite{goh1979stability}, while %the authors of
 \cite{takeuchi1978stability} studied the stability of a generalized multi-species model with both competition and mutualism. Abundant extensive contributions \cite{takeuchi1996global,sb2010,FB-LNS} provide a detailed and comprehensive introduction to the conventional Lotka-Volterra models and their stability results. However, all these conventional models treat the species \emph{pair} as a fundamental unit and only capture \emph{pairwise} interactions, whose effects on the species' growth are additive.
 
Prompted by studies in ecology, such pairwise interaction and its purely additive setting are shown to be insufficient to represent real complex ecological systems, supported, for example, by 
\cite{abrams1983arguments}. Recently \cite{mayfield2017higher}, \emph{HOIs} (Higher-order Interactions) have been introduced to represent non-additive effects and further incorporate some empirical evidence, such as the one showing that HOIs play a significant role in natural plant communities. Followed by the aforementioned idea, %A. D. Letten and D. B. Stouffer 
\cite{letten2019mechanistic} introduced  HOIs into the Lotka–Volterra competition and then demonstrated, by using empirical data and simulations, that HOIs appear under almost all assumptions and help to improve the accuracy of model's predictions. Despite the advantages brought by HOIs, the model becomes mathematically more challenging to analyze. To understand what role HOIs play in influencing the species' coexistence, \cite{singh2021higher}, %utilizing the higher-order Lotka-Volterra model, 
studies the aforementioned problem through simulations. Even more recently, in \cite{gibbs2022coexistence}, numerical simulations with techniques from statistical physics are used to estimate the HOIs' influence on species coexistence. Rigorous mathematical results regarding the existence and stability of equilibria in the higher-order Lotka-Volterra model are still largely missing, mainly because the higher-order system is highly nonlinear.

Since the Lotka-Volterra model is a polynomial system, computing a positive equilibrium is equivalent to solving a system of polynomial equations. With the development of the tensor algebra, a polynomial equation system can be efficiently captured by the tensor-vector product \cite{ding2016solving,wang2019existence,liu2022further}. Researchers have mainly focused on homogeneous polynomial systems and have further shown that such systems may possess a unique positive solution for some particular classes of tensors and under some appropriate conditions. For example, \cite{ding2016solving} provides a result of the existence and uniqueness of a positive solution for $\mathcal{M}$-tensors, while \cite{wang2019existence} extends the results to $\mathcal{H}^+$-tensors, and  \cite{liu2022further} to $\mathcal{S}$-tensors, see section \ref{sec:preliminaries}.  %$\mathcal{S}$-tensors are the most general tensors in this category and we will introduce all these concepts later in this paper. 
However, most of these results are restricted to a homogeneous polynomial case and are not directly applicable to the Lotka-Volterra model.

It is known that the solution set of the linear complementarity problem corresponds to the equilibria of a classical Lotka-Volterra satisfying some extra conditions \cite{takeuchi1996global,takeuchi1980existence}, which are also necessary for the equilibrium to be stable.
%. They further show that these extra conditions are a necessary condition for the equilibrium to be stable \cite{takeuchi1996global}. 
With the introduction of HOIs, the higher-order Lotka-Volterra system is no longer related to the linear complementarity problem but associated with the polynomial complementarity problem introduced in \cite{huang2019tensor,gowda2016polynomial}. The results regarding the polynomial complementarity problem are also a potential tool to analyze the higher-order Lotka-Volterra system.

Alongside developments regarding the stability of the conventional Lotka-Volterra model, there is a long history concerning monotone systems' theory \cite{hirsch2006monotone,smith1988systems}, which is a useful tool whenever the system is cooperative or its Jacobian can be permuted into an irreducible Metzler matrix. Very recently, monotone systems' theory has been applied to study the bi-virus competition model \cite{ye2022convergence}, whereas the tri-virus competition model is shown not to be a monotone system \cite{gracy2022endemic}. In \cite{smith1986competing}, an abstract system of two-subcommunity competition is studied, but the main results are restricted to the positive equilibrium and not applicable to the higher-order system. Furthermore, a generalization of cooperativity is introduced and further analyzed in \cite{kawano2022contraction}. All these tools serve as a good foundation for studying a higher-order Lotka-Volterra model.
 
The contributions of this paper are summarized as follows: firstly, we provide an existence result of a positive equilibrium for a non-homogeneous polynomial system and give a lower and upper bound for the solution of a polynomial complementarity problem under mild conditions, which is an extension of the current results in tensor algebras \cite{ding2016solving,wang2019existence,liu2022further}. Then, inspired by \cite{letten2019mechanistic} and taking both cooperation (mutualism) and competition into account, we propose a general higher-order Lotka-Volterra model. We provide results regarding the existence, uniqueness, and stability of equilibria of the higher-order Lotka-Volterra model. More precise results are given for the case of full cooperation, competition between two factions, and pure competition. Some results are achieved via monotone system theory; others are established by using the properties of tensors and results from the polynomial complementarity problem. To the best of our knowledge, this methodology is novel in the field of the study of dynamical systems. Finally, we provide numerical examples to highlight our theoretical results.

\emph{Notation:} Throughout this paper, $\mathbb{R}$ denotes the set of real numbers, $\mathbb{R}_+$ is the set of nonnegative real numbers, and $\mathbb{R}_{++}$ the set of positive real numbers. For the similar notation $\mathbb R^n$ ($\mathbb{R}^n_+$ or $\mathbb{R}^n_{++}$), the superscript stands for the dimension of the space. Whenever $a,b\in \mathbb{R}^n$, we use the notation $a \geq (\leq) b$ to denote that $a_i \geq (\leq) b_i$, for all $i=1,\ldots,n$; $a>  (<) b$ to denote that $a_i >(<) b_i$, for all $i=1,\ldots,n$. This element-wise comparison is also used for matrices and tensors. 
For simplicity, the equilibrium point denotes both the equilibrium point itself and the vector constructed from the equilibrium point. For a vector $X=(x_1,\ldots,x_n)\in\mathbb R^n$, the vector's $L_{\infty}$ norm is defined as $\|X\|_{\infty}=\max _{i}\left|x_i\right|$.

\section{Tensor-based approach}\label{sec:preliminaries}

In this section, we briefly introduce the concepts of tensors and polynomial complementarity problems that are useful later in the paper. We also propose some novel results, which are extensions of some known results in tensor algebra \cite{ding2016solving,wang2019existence,liu2022further}. We will then show that all these results are very useful tools to analyze the higher-order Lotka-Volterra system.

\subsection{Tensor}
A tensor $T\in\mathbb{R}^{n_1\times n_2 \times \cdots \times n_k}$ is a multidimensional array, where the order is the number of its dimensions $k$ and each dimension $n_i$, $i=1,\cdots,k$ is a mode of the tensor. A tensor is cubical if every mode has the same size, that is $T\in\mathbb{R}^{n\times n \times \cdots \times n}$. From now on we denote a $k$-th order $n$ dimensional cubical tensor as $T\in\mathbb{R}^{n\times n \times \cdots \times n}=\mathbb{R}^{[k,n]}$. Throughout this paper, a `tensor' always refers to a cubical tensor. A tensor $T$ is called supersymmetric if $T_{j_1 j_2 \ldots j_k}$ is invariant under any permutation of the indices.
The identity tensor $\mathcal{I}=\left(\delta_{i_1 \cdots i_m}\right)$ is defined by
\begin{equation*}
   \delta_{i_1 \cdots i_m}= \begin{cases}1 & \text { if } i_1=i_2=\cdots=i_m, \\ 0 & \text { otherwise.}\end{cases}
\end{equation*}

We then consider the following notation:
 $A x^{m-1}$ and $x^{[m-1]}$ are vectors, whose $i$-th components are
$$
\begin{aligned}
\left(A x^{m-1}\right)_i & =\sum_{i_2, \ldots, i_m=1}^n A_{i, i_2 \cdots i_m} x_{i_2} \cdots x_{i_m}, \\
\left(x^{[m-1]}\right)_i & =x_i^{m-1};
\end{aligned}
$$
where $m$ is the order of the corresponding tensor $A$.
For a tensor, consider the following eigenvalue eigenvector problem:
\begin{equation}\label{eq:eigenproblem}
A x^{m-1}=\lambda x^{[m-1]},
\end{equation}
where if there is a real number $\lambda$ and a nonzero real vector $x$ that are solutions of \eqref{eq:eigenproblem}, then $\lambda$ is called an H-eigenvalue of $A$ and $x$ is the H-eigenvector of $A$ associated with $\lambda$. Throughout this paper, the words eigenvalue and eigenvector as well as H-eigenvalue and H-eigenvector are used interchangeably.
The spectral radius of the tensor $A$ is
$
\rho(A)=\max \{|\lambda|: \lambda \text { is an eigenvalue of } A\}.
$

\subsection{$\mathcal{M}$-tensors and $\mathcal{H}^{+}$-tensors}
Here, we briefly recall the concepts of $\mathcal{M}$-tensor and $\mathcal{H}^{+}$-tensor introduced in \cite{wang2019existence,ding2013m,zhang2014m,cui2024metzler}.
A tensor $A \in \mathbb{R}^{[m, n]}$  is an $\mathcal{M}$-tensor \cite{ding2013m,zhang2014m} if it can be represented as $A=s \mathcal{I}-B$, where $\mathcal I\in\mathbb R^{[m,n]}$ is the identity tensor, $B\in\mathbb R^{[m,n]}$ is a nonnegative tensor (i.e., each entry of $B$ is nonnegative), and $s \geq \rho(B)$. Furthermore, $A$ is called a nonsingular $\mathcal{M}$-tensor if $s>\rho(B)$. A tensor $A \in \mathbb{R}^{[m, n]}$ is a Metzler tensor \cite{cui2024metzler} if it can be represented as $A=s \mathcal{I}+B$ where $s\in \mathbb{R}$ and $B\in\mathbb R^{[m,n]}$ is a nonnegative tensor. In other words, a Metzler tensor is a tensor whose off-diagonal elements are all nonnegative.

Let $A=\left(A_{i_1 i_2 \ldots i_m}\right)\in\mathbb{R}^{[m, n]}$. We define the tensor $\langle A\rangle=\left(m_{i_1 i_2 \ldots i_m}\right)$ as \emph{the comparison tensor of $A$} by
$$
m_{i_1 i_2 \ldots i_m}=\left\{\begin{array}{cl}
\left|A_{i_1 i_2 \ldots i_m}\right|, & \left(i_1 i_2 \ldots i_m\right)=\left(i_1 i_1 \ldots i_1\right), \\
-\left|A_{i_1 i_2 \ldots i_m}\right|, & \left(i_1 i_2 \ldots i_m\right) \neq\left(i_1 i_1 \ldots i_1\right) .
\end{array}\right.
$$
Notice that a tensor $A\in\mathbb{R}^{[m, n]}$ is an $\mathcal{H}$-tensor \cite{wang2019existence} if its comparison tensor is an $\mathcal{M}$-tensor; it is a nonsingular $\mathcal{H}$-tensor, if its comparison tensor is a nonsingular $\mathcal{M}$-tensor. A nonsingular $\mathcal{H}$-tensor $A$ with all its diagonal elements $a_{i i \ldots i}>0$ is called an $\mathcal{H}^{+}$-tensor \cite{wang2019existence}. 

A tensor ${A}\in\mathbb{R}^{[m, n]}$ is called diagonally dominant \cite{qi2005eigenvalues} if
$$
\left|A_{i i \ldots i}\right| \geq \sum_{\left(i_2, \ldots, i_m\right) \neq(i, \ldots, i)}\left|A_{i i_2 \ldots i_m}\right|
$$
for all $i=1,2, \ldots, n$; and is called strictly diagonally dominant if
$$
\left|A_{i i \ldots i}\right|>\sum_{\left(i_2, \ldots, i_m\right) \neq(i, \ldots, i)}\left|A_{i i_2 \ldots i_m}\right| 
$$
for all $i=1,2, \ldots, n$.
Nonsingular $\mathcal{M}$-tensors and strictly diagonally dominant tensors with positive diagonal elements are two special types of $\mathcal{H}^{+}$-tensors. These have the following useful properties:

%\hjk{homogenize the notation $\mathbf{0}$ or $0$ for the zero vector throughout the text.}

\begin{lemma}[\cite{wang2019existence}]
If $A\in\mathbb{R}^{[m, n]}$ is an $\mathcal{H}^{+}$-tensor and $\langle A\rangle$ is its comparison tensor, then there exists a positive vector $x>\mathbf{0}$ such that $A x^{m-1}>\mathbf{0}$ and $\langle A\rangle x^{m-1}>\mathbf{0}$.
\end{lemma}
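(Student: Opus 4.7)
The plan is to reduce the assertion to a well-known property of nonsingular $\mathcal{M}$-tensors applied to $\langle A\rangle$, and then deduce the inequality for $A$ itself by a coordinate-wise comparison that exploits positivity of the diagonal entries of an $\mathcal{H}^{+}$-tensor.

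First I would unpack the definitions: by hypothesis $A$ is a nonsingular $\mathcal{H}$-tensor with $A_{ii\ldots i}>0$ for every $i$, so its comparison tensor $\langle A\rangle$ is a nonsingular $\mathcal{M}$-tensor. Write $\langle A\rangle = s\,\mathcal{I} - B$ with $B\geq 0$ and $s>\rho(B)$. The key auxiliary fact I would invoke is the Perron–Frobenius-type statement for nonnegative tensors: the spectral radius $\rho(B)$ is an H-eigenvalue of $B$ with a nonnegative H-eigenvector. When $B$ is irreducible this eigenvector $x$ is strictly positive and one obtains directly $\langle A\rangle x^{m-1} = (s-\rho(B))\,x^{[m-1]} > 0$. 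For the general (possibly reducible) case I would use the standard perturbation trick: set $B_\varepsilon = B + \varepsilon E$ with $E$ the tensor of all ones; then $B_\varepsilon$ is irreducible, $\rho(B_\varepsilon)\to\rho(B)$ as $\varepsilon\downarrow 0$, and for small enough $\varepsilon$ we still have $s>\rho(B_\varepsilon)$. Applying the irreducible case to $B_\varepsilon$ yields a positive vector $x$ with $(s\mathcal{I}-B_\varepsilon)\,x^{m-1}>0$, and since $B\leq B_\varepsilon$ this gives $\langle A\rangle x^{m-1}\geq (s\mathcal{I}-B_\varepsilon)\,x^{m-1}>0$, as desired.

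With such an $x>0$ in hand, the second inequality $A x^{m-1}>0$ follows by a termwise comparison. Indeed, because $A_{ii\ldots i}>0$, the diagonal entries coincide, $\langle A\rangle_{ii\ldots i} = |A_{ii\ldots i}| = A_{ii\ldots i}$, while for every off-diagonal tuple $(i_2,\ldots,i_m)\neq(i,\ldots,i)$ one has $A_{i i_2\ldots i_m}\geq -|A_{i i_2\ldots i_m}| = \langle A\rangle_{i i_2\ldots i_m}$. Since $x>\mathbf{0}$ makes every product $x_{i_2}\cdots x_{i_m}$ positive, summing these inequalities entrywise yields
\begin{equation*}
(A x^{m-1})_i \;\geq\; (\langle A\rangle x^{m-1})_i \;>\; 0
\end{equation*}
for each $i$, which is exactly $A x^{m-1}>\mathbf{0}$.

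The main obstacle is the first step: rigorously extracting a strictly positive vector $x$ with $\langle A\rangle x^{m-1}>0$ from the bare hypothesis that $\langle A\rangle$ is a nonsingular $\mathcal{M}$-tensor. The linear-algebra analogue is routine, but in the tensor setting it rests on Perron–Frobenius results for nonnegative tensors (Chang–Pearson–Zhang / Yang–Yang) together with a careful irreducibility perturbation, and all of the work sits there. Once that positive vector is secured, the remainder of the argument is a short entrywise comparison.
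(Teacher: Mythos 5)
The paper does not prove this lemma---it is imported verbatim from \cite{wang2019existence}---so there is no in-paper argument to compare against; judged on its own, your proof is correct and follows the standard route from the literature: the semi-positivity characterization of nonsingular $\mathcal{M}$-tensors (positive tensors are irreducible, Chang--Pearson--Zhang gives a positive Perron eigenvector in the irreducible case, and monotonicity/continuity of $\rho$ under the $B+\varepsilon E$ perturbation handles the reducible case), followed by the entrywise comparison $A\geq\langle A\rangle$, which is valid precisely because the diagonal entries are positive and every monomial $x_{i_2}\cdots x_{i_m}$ is positive. The only external facts you lean on (Perron--Frobenius for nonnegative tensors and continuity of the spectral radius) are exactly the ones the cited source uses, so the argument is complete.
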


\subsection{$\mathcal{S}$-tensors and their properties}
Here, we briefly recall the concept of an $\mathcal{S}$-tensor and we leverage their properties \cite{liu2022further} to achieve some further results regarding non-homogeneous systems related to $\mathcal{S}$-tensors.

A tensor $A\in \mathbb{R}^{[m, n]}$ is an $\mathcal{S}$-tensor \cite{liu2022further}, if there exists a positive vector $x$ such that $Ax^{k-1}>\mathbf{0}$. Clearly, an $\mathcal{H}^{+}$-tensor is an $\mathcal{S}$-tensor.

The following lemmas tell us that, in particular, if $A$ is an $\mathcal H^+$-tensor, then the tensor equation $Ax^{k-1}=b$ has a unique positive solution for every positive $b$.

\begin{lemma}[\cite{liu2022further}]\label{lem:stensor}
    If a tensor $A\in\mathbb{R}^{[m, n]}$ satisfies: i) the set $D_{A}=\left\{x \in \mathbb{R}_{++}^n: A x^{m-1}>0\right\}$ is nonempty (i.e. $A$ is an $\mathcal{S}$-tensor) and ii) the map $\eta \mathcal{I}-A: \mathbb{R}_{++}^n \rightarrow \mathbb{R}_{++}^n$, defined by $x \mapsto(\eta \mathcal{I}-A) x^{m-1}$, is an increasing map on the set $D_{A}$ for some $\eta>0$, then for every positive vector $b\in\mathbb R^n$ the tensor equation $A x^{m-1}=b$ has a unique positive solution.
\end{lemma}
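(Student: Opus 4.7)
The plan is to recast the tensor equation $Ax^{m-1}=b$ as a fixed-point problem on $\mathbb{R}_{++}^n$ and then combine monotone iteration (using assumption (ii)) with a strict sub-homogeneity property induced by the positivity of $b$ to obtain existence and uniqueness separately.

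First, I would introduce the map $F:\mathbb{R}_{++}^n\to\mathbb{R}_{++}^n$ given componentwise by $F(x)_i := \bigl(\tfrac{1}{\eta}((\eta\mathcal{I}-A)x^{m-1}+b)_i\bigr)^{1/(m-1)}$. By hypothesis (ii) the argument of the root is strictly positive for every $x>0$, so $F$ is well defined; a direct rearrangement shows that the positive fixed points of $F$ are exactly the positive solutions of $Ax^{m-1}=b$. Since $x\mapsto(\eta\mathcal{I}-A)x^{m-1}$ is componentwise increasing on $D_A$ and the scalar $(m-1)$-th root is increasing, $F$ is componentwise increasing on $D_A$.

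For existence, I would use (i) to fix $y\in D_A$; homogeneity guarantees that $\alpha y\in D_A$ for every $\alpha>0$ since $A(\alpha y)^{m-1}=\alpha^{m-1}Ay^{m-1}>0$. Taking $\alpha := \max_i (b_i/(Ay^{m-1})_i)^{1/(m-1)}$ and $\beta := \min_i (b_i/(Ay^{m-1})_i)^{1/(m-1)}$ yields $A(\alpha y)^{m-1}\ge b$ and $A(\beta y)^{m-1}\le b$, equivalently $F(\alpha y)\le\alpha y$ and $F(\beta y)\ge\beta y$. By monotonicity of $F$ on $D_A$, the Picard iteration $x^{k+1}=F(x^k)$ starting at $\alpha y$ is non-increasing and bounded below by $\beta y$, while the one starting at $\beta y$ is non-decreasing and bounded above by $\alpha y$; continuity of $F$ turns the two limits $x_*,x^*$ into positive fixed points with $\beta y\le x_*\le x^*\le \alpha y$.

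For uniqueness, the key is that the positive inhomogeneity $b$ makes $F$ strictly sub-homogeneous on $\mathbb{R}_{++}^n$: for $\lambda>1$ and $x>0$,
\begin{align*}
\eta F(\lambda x)_i^{m-1} &= \lambda^{m-1}\bigl((\eta\mathcal{I}-A)x^{m-1}\bigr)_i + b_i \\
&< \lambda^{m-1}\bigl((\eta\mathcal{I}-A)x^{m-1}\bigr)_i + \lambda^{m-1}b_i = \eta(\lambda F(x))_i^{m-1},
\end{align*}
so $F(\lambda x)<\lambda F(x)$ componentwise. Given two positive fixed points $x_1,x_2$, define $\lambda^*:=\min\{\lambda\ge 1:\lambda x_1\ge x_2\}$, which is finite. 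If $\lambda^*>1$, then monotonicity followed by strict sub-homogeneity gives $x_2=F(x_2)\le F(\lambda^* x_1)<\lambda^* F(x_1)=\lambda^* x_1$ strictly, so by continuity some $\lambda<\lambda^*$ still satisfies $\lambda x_1\ge x_2$, contradicting the minimality of $\lambda^*$. Hence $\lambda^*=1$, i.e.\ $x_1\ge x_2$, and swapping the roles of $x_1$ and $x_2$ forces $x_1=x_2$. I expect the hard part to be precisely this last step: monotonicity alone only sandwiches $x_*\le x^*$ without collapsing it, and only the strict sub-homogeneity (a genuine consequence of $b$ being positive) together with the Krasnoselskii-type minimal-ratio argument actually forces the two fixed points to coincide.
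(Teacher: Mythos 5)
First, note that the paper does not actually prove this lemma: it is quoted from \cite{liu2022further}, and the closest in-paper analogue is the proof of Theorem \ref{thm:st}, which generalizes it. Measured against that, your uniqueness argument is essentially the paper's own: your strict sub-homogeneity inequality $F(\lambda x)<\lambda F(x)$ for $\lambda>1$, combined with the minimal-ratio $\lambda^{*}$, is the multiplicative form of the paper's $\tau=\min_i x_i/y_i$ contradiction argument, and it is correct. It is also careful in exactly the right place: monotonicity of $F$ is only ever invoked at points of the form $\mu x_j$ for fixed points $x_j$, and these automatically lie in $D_A$ because $A(\mu x_j)^{m-1}=\mu^{m-1}b>0$.

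The existence half, however, has a genuine gap. Hypothesis (ii) only asserts that $x\mapsto(\eta\mathcal{I}-A)x^{m-1}$ (hence your $F$) is increasing \emph{on the set $D_A$}, i.e.\ for ordered pairs of points both lying in $D_A$. Your Picard iteration needs $F(x^{k+1})\le F(x^{k})$ and $F(\beta y)\le F(x^{k})$ at every step, which requires each iterate $x^{k}=F^{k}(\alpha y)$ to lie in $D_A$; but $F(x)>0$ does not imply $A(F(x))^{m-1}>0$, and the order interval $[\beta y,\alpha y]$ is not contained in $D_A$ in general (only the ray through $y$ is, by homogeneity). Any attempt to show $F(x)\in D_A$ from $F(x)\le x\in D_A$ runs into circularity, since the comparison $(\eta\mathcal{I}-A)F(x)^{m-1}\le(\eta\mathcal{I}-A)x^{m-1}$ itself presupposes $F(x)\in D_A$. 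The paper's Theorem \ref{thm:st} sidesteps precisely this difficulty by choosing a different fixed-point map: it splits $A=\alpha\mathcal{I}-\mathcal{B}+\mathcal{N}$ with $\mathcal{B},\mathcal{N}$ nonnegative and iterates $T=f^{-1}\circ g$ with $f(x)=(\alpha\mathcal{I}+\mathcal{N})x^{m-1}$ and $g(x)=\mathcal{B}x^{m-1}+b$, which is increasing on all of $\mathbb{R}_{++}^n$, so Lemma \ref{lem:banach} applies on the whole interval $[\underline{x},\bar{x}]$ without any $D_A$-membership issue. To repair your proof you would either need to strengthen (ii) to monotonicity on the interval $[\beta y,\alpha y]$, or switch to a globally monotone splitting of this kind.
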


Notice indeed, that if $A$ is an $\mathcal{H}^{+}$-tensor or an $\mathcal{M}$-tensor, then both conditions in Lemma \ref{lem:stensor} hold automatically, see \cite[section 3]{wang2019existence}. Then, we have the following.

\begin{lemma}[\cite{wang2019existence}]\label{lem:h+tensor}
    If $A\in\mathbb{R}^{[m, n]}$ is an $\mathcal{H}^{+}$-tensor, then for every positive vector $b\in \mathbb R_{++}^n$, the tensor equation $A x^{m-1}=b$ has a unique positive solution.
\end{lemma}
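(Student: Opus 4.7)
The statement is essentially a specialization of Lemma \ref{lem:stensor}, so my plan is to establish it by verifying both hypotheses of that lemma for an arbitrary $\mathcal{H}^+$-tensor $A$, and then to invoke Lemma \ref{lem:stensor} directly to obtain existence and uniqueness of the positive solution.

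Hypothesis (i) is immediate: since $A$ is an $\mathcal{H}^+$-tensor, the lemma preceding the $\mathcal{S}$-tensor subsection supplies a positive vector $x > 0$ such that $A x^{m-1} > \mathbf{0}$. This places $x$ in $D_A$, so $D_A$ is nonempty and, in particular, $A$ is an $\mathcal{S}$-tensor.

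Hypothesis (ii) is the substantive step. I would need to produce $\eta > 0$ for which the map $x \mapsto (\eta\mathcal{I} - A) x^{m-1}$ is increasing on $D_A$. Taking $\eta$ strictly larger than every diagonal entry $A_{ii\cdots i}$ handles the diagonal contribution, since each $A_{ii\cdots i}>0$ is finite. The off-diagonal terms are the actual difficulty: for a general $\mathcal{H}^+$-tensor the off-diagonal entries may have arbitrary sign, so componentwise monotonicity does not fall out immediately as it does in the $\mathcal{M}$-tensor case. To overcome this, I would exploit the comparison tensor $\langle A\rangle$, which by definition of $\mathcal{H}^+$ is a nonsingular $\mathcal{M}$-tensor, together with a diagonal scaling determined by a positive vector $y>0$ satisfying $\langle A\rangle y^{m-1}>\mathbf{0}$ (again supplied by Lemma 1). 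Since $\langle A\rangle$ has nonpositive off-diagonal entries, the shifted map $\eta\mathcal{I} - \langle A\rangle$ is componentwise increasing for sufficiently large $\eta$; transferring this property to $A$ through the scaling, as done in \cite[Section~3]{wang2019existence}, then yields the required monotonicity on $D_A$.

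With both hypotheses verified, Lemma \ref{lem:stensor} applies directly and yields, for every positive $b \in \mathbb{R}^n_{++}$, a unique positive solution of $A x^{m-1} = b$. The main obstacle is precisely the verification of hypothesis (ii): because the off-diagonal sign pattern of an $\mathcal{H}^+$-tensor is less structured than that of an $\mathcal{M}$-tensor, one cannot simply read off monotonicity from the entries of $\eta\mathcal{I}-A$, and the comparison tensor together with the diagonal-scaling trick is essential to control the relevant partial derivatives.
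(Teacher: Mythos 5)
Your proposal follows exactly the paper's route: the paper derives this lemma as an immediate consequence of Lemma~\ref{lem:stensor}, asserting in one line that both of its hypotheses ``hold automatically'' for $\mathcal{H}^{+}$-tensors and deferring the verification to \cite[Section 3]{wang2019existence}, which is precisely your reduction. Your write-up is in fact more explicit than the paper about where the real work lies (hypothesis (ii), the monotonicity of $\eta\mathcal{I}-A$ despite the unsigned off-diagonal entries), but since both you and the paper ultimately outsource that step to the cited reference, the two arguments are essentially the same.
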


However, we are interested in the non-homogeneous equation system
\begin{equation}\label{eq:nhte}
    A_{k}x^{k-1}+A_{k-1} x^{k-2}+\cdots  A_{2} x=b,
\end{equation}
where $b$ is a positive vector. We will show later that this equation system is relevant to the higher-order Lotka-Volterra model. The following Lemma will thus be important.

%We will use the following Lemma to provide the result.

\begin{lemma}[Banach fixed point theorem, Lemma 3.1 \cite{wang2019existence}]\label{lem:banach}
    Let $\mathbb{P}$ be a regular cone in an ordered Banach space $\mathbb{E}$ and $[u, v] \subset \mathbb{E}$ be a bounded ordererd interval. Suppose that $T:[u, v] \rightarrow \mathbb{E}$ is an increasing continuous map which satisfies
$$
u \leq T(u) \quad \text { and } \quad v \geq T(v).
$$
Then $T$ has at least one fixed point in $[u, v]$. Moreover, there exists a minimal fixed point $x_*$ and a maximal fixed point $x^*$ in the sense that every fixed point $\bar{x}$ satisfies $x_* \leq \bar{x} \leq x^*$, with $x_*$ and $x^*$ also in the interval $[u,v]$. Finally, the sequence defined by
%iteration method
$$
x_{k+1}=T\left(x_k\right), \quad k=0,1,2, \ldots,
$$
converges to $x_*$ from below if $x_0=u$, i.e.,
$$
u=x_0 \leq x_1 \leq \cdots \leq x_*,
$$
and converges to $x^*$ from above if $x_0=v$, i.e.,
$$
v=x_0 \geq x_1 \geq \cdots \geq x^*.
$$
\end{lemma}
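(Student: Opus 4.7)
The plan is to construct the extremal fixed points explicitly by iteration from the endpoints of $[u,v]$ and to rely on the regularity of the cone $\mathbb{P}$ to guarantee convergence. First I would define two sequences: $x_0 = u$ with $x_{k+1} = T(x_k)$, and $y_0 = v$ with $y_{k+1} = T(y_k)$. The hypotheses $u \leq T(u)$ and $v \geq T(v)$ give $x_0 \leq x_1$ and $y_0 \geq y_1$. An immediate induction using the monotonicity of $T$ then yields
\begin{equation*}
u = x_0 \leq x_1 \leq x_2 \leq \cdots \leq y_2 \leq y_1 \leq y_0 = v,
\end{equation*}
where the middle comparison $x_k \leq y_k$ follows by applying $T$ inductively to $u \leq v$.

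Next I would invoke the defining property of a regular cone: every monotone increasing sequence that is bounded above converges in $\mathbb{E}$, and symmetrically for decreasing sequences bounded below. Applied to $\{x_k\}$ (increasing, bounded above by $v$) and $\{y_k\}$ (decreasing, bounded below by $u$), this produces limits $x_* \in [u,v]$ and $x^* \in [u,v]$. Passing to the limit in $x_{k+1} = T(x_k)$ and $y_{k+1} = T(y_k)$ using the continuity of $T$ yields $T(x_*) = x_*$ and $T(x^*) = x^*$, so both are fixed points. This also establishes the monotone convergence statements in the lemma.

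It remains to show minimality and maximality. Let $\bar x$ be any fixed point of $T$ in $[u,v]$. Then $u \leq \bar x \leq v$, and applying $T$ repeatedly to this inequality, together with $T(\bar x) = \bar x$ and the monotonicity of $T$, gives $x_k \leq \bar x \leq y_k$ for every $k \geq 0$ by induction. Taking the limit as $k \to \infty$ gives $x_* \leq \bar x \leq x^*$, as required.

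The main obstacle in this proof is not logical but structural: one must ensure that the monotone iterates actually converge in the norm topology of $\mathbb{E}$, since in a general ordered Banach space monotone order-bounded sequences need not converge. This is precisely what the regularity hypothesis on $\mathbb{P}$ buys, and every other step (induction for monotonicity of the iterates, continuity to pass to the limit, squeezing any other fixed point between the iterates) is a routine consequence once convergence is secured.
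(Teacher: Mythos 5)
Your proof is correct: the monotone iteration from both endpoints, the use of cone regularity to extract norm limits, continuity of $T$ to identify them as fixed points, and the squeeze $x_k \leq \bar{x} \leq y_k$ for extremality is exactly the standard argument for this result. The paper itself gives no proof---it imports the lemma verbatim from the cited reference---and your argument matches the classical one used there, so there is nothing to reconcile beyond the minor (and standard) observation that the cone is closed, which is what lets you pass order inequalities to the limit in the final step.
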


We now show that, under certain conditions, the non-homogeneous tensor equation \eqref{eq:nhte} has a unique solution.

\begin{thm}\label{thm:st}
    The non-homogeneous tensor equation \eqref{eq:nhte} with $b=\mathbf{1}$ has a unique positive solution if 
    $A_{k-1},A_{k-2},\cdots,A_2$ are all $\mathcal{S}$-tensors associated with the same positive vector $v$, i.e., $A_{k-1}v^{k-1}>0$, $A_{k-1}v^{k-2}>0,\cdots , A_2v>0$.
\end{thm}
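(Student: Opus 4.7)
The plan is to apply the monotone fixed-point theorem (Lemma \ref{lem:banach}) after recasting \eqref{eq:nhte} as an equivalent fixed-point equation. Writing $F(x) := \sum_{j=2}^{k} A_j x^{j-1}$, the equation reads $F(x) = \mathbf{1}$, and for a shift parameter $\eta>0$ to be chosen, I would introduce
\begin{equation*}
T(x) := \left( x^{[k-1]} + \tfrac{1}{\eta}\bigl(\mathbf{1} - F(x)\bigr) \right)^{1/[k-1]},
\end{equation*}
with componentwise powers, so that positive fixed points of $T$ coincide exactly with positive solutions of \eqref{eq:nhte}.

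The common $\mathcal{S}$-tensor vector $v$ then supplies an ordered interval of candidate iterates. Since each $A_j v^{j-1}$ is strictly positive, the vector polynomial
\begin{equation*}
F(\lambda v) \;=\; \sum_{j=2}^{k} \lambda^{j-1}\, A_j v^{j-1}
\end{equation*}
has all-positive coefficients, so $F(\lambda v)\to 0$ componentwise as $\lambda\to 0^{+}$ while each component tends to $+\infty$ as $\lambda\to\infty$. Picking $\alpha>0$ small enough that $F(\alpha v) \le \mathbf{1}$ and $\beta>\alpha$ large enough that $F(\beta v) \ge \mathbf{1}$, and setting $u := \alpha v$, $w := \beta v$, direct substitution in the definition of $T$ yields $u \le T(u)$ and $w \ge T(w)$, supplying the two endpoint inequalities demanded by Lemma \ref{lem:banach}.

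The main obstacle is choosing $\eta$ so that $T$ is componentwise increasing on $[u,w]$; equivalently, the auxiliary map $G(x) := \eta x^{[k-1]} - F(x)$ must be increasing there. Since the off-diagonal entries of the Jacobian of $G$ are $-\partial F_i/\partial x_\ell$ and hence independent of $\eta$, simply enlarging $\eta$ does not suffice for arbitrary $\mathcal{S}$-tensors; what is needed is the non-homogeneous analogue of condition (ii) in Lemma \ref{lem:stensor}. The idea is to apply that increasing-map property summand-by-summand to each $A_j x^{j-1}$ on the compact box $[u,w]$ (using the common $v$), obtaining individual shifts $\eta_j$, and then aggregate via a global shift $\eta \ge \sum_{j} \eta_j$ that simultaneously makes $\eta_j\mathcal I - A_j$ increasing while also dominating the diagonal term $\eta(k-1)x_i^{k-2}$. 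This step is the technical heart of the proof.

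Once monotonicity of $T$ on $[u,w]$ is established, Lemma \ref{lem:banach} produces both a minimal fixed point $x_* \in [u,w]$ and a maximal fixed point $x^* \in [u,w]$, each solving \eqref{eq:nhte}. Since the scaling parameters $\alpha,\beta$ may be shrunk and enlarged at will, any hypothetical positive solution lies in some such interval and is therefore sandwiched between $x_*$ and $x^*$; a scaling-type comparison, again leveraging the strict positivity of every $A_j v^{j-1}$ at the common vector $v$, then forces $x_* = x^*$ and delivers uniqueness.
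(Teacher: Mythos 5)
There is a genuine gap, and it sits exactly where you place it: the monotonicity of your fixed-point map $T$. Your candidate resolution is to invoke, summand by summand, the increasing-map property of $\eta_j\mathcal I - A_j$ from condition (ii) of Lemma \ref{lem:stensor}. But that property is \emph{not} a consequence of $A_j$ being an $\mathcal{S}$-tensor; it is an independent hypothesis in Lemma \ref{lem:stensor}, and Theorem \ref{thm:st} deliberately does not assume it (the paper's remark after the theorem stresses precisely this point). Concretely, if some $A_j$ has a positive off-diagonal entry $(A_j)_{i\,i_2\cdots i_j}>0$ with an index $\ell\neq i$ among $i_2,\dots,i_j$, then $\partial\bigl((\eta\mathcal I - A_j)x^{j-1}\bigr)_i/\partial x_\ell$ is strictly negative on $\mathbb{R}^n_{++}$ for every $\eta$, since $\eta$ only touches the diagonal. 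No choice of shifts $\eta_j$, individual or aggregated, can repair this, so your $G(x)=\eta x^{[k-1]}-F(x)$ is simply not increasing for general $\mathcal{S}$-tensors and Lemma \ref{lem:banach} cannot be applied to your $T$. (Your sub/supersolution construction via $F(\lambda v)$ and the general shape of the uniqueness argument are fine in spirit, but they rest on this broken step; the claim that every positive solution is sandwiched between $x_*$ and $x^*$ also needs care, since enlarging the interval $[u,w]$ can change the extremal fixed points.)

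The missing idea is a different splitting of the equation. The paper writes each $A_j=\alpha_j\mathcal I-\mathcal B_j+\mathcal N_j$, with $\mathcal B_j\ge 0$ collecting the negated negative entries and $\mathcal N_j\ge 0$ the positive off-diagonal entries, and then recasts \eqref{eq:nhte} as $f(x)=g(x)$ with $f(x)=\sum_j(\alpha_j\mathcal I+\mathcal N_j)x^{j-1}$ and $g(x)=\sum_j\mathcal B_jx^{j-1}+b$. Both $f$ and $g$ are increasing on $\mathbb{R}^n_{++}$ because they are built only from nonnegative tensors — the problematic positive off-diagonal entries are routed into $f$ rather than subtracted — so $T=f^{-1}\circ g$ is increasing with no shift condition at all. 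The common vector $v$ then yields the ordered sub/supersolutions by scaling, exactly as in your sketch, and uniqueness follows from a $\tau=\min_i x_i/y_i$ comparison that uses the strict positivity of $\sum_j A_j(\tau y)^{j-1}$. If you replace your $\eta$-shifted map with this $f$--$g$ decomposition, the rest of your outline goes through.
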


\begin{proof}
    %The proof is analog to the idea in \cite{liu2022further}.
%
   %Followed from \cite[lemma 2.3]{wang2019existence}, any $\mathcal{H}^+$-tensor $A_i$ can be formulated as $A_i=\mathcal{W}_i +\mathcal{N}_i=\alpha_i \mathcal{I}-\mathcal{B}_i+\mathcal{N}_i$,
  % where $\mathcal{W}_i=\alpha_i \mathcal{I}-\mathcal{B}_i$ is a $\mathcal{M}-$tensor which includes diagonal elements and nonpositive elements of $A_i$, and $\mathcal{N}_i$ is an nonnegative tensor which includes positive elements of $A_i$ other than diagonal elements. %From \cite[Section 3]{wang2019existence}, we know that $(\eta_i \mathcal{I} +\mathcal{B}_i-\mathcal{N}_i)x^{i-1}$ is an increasing map for some suffieciently large $\eta_i >0$ on $\mathbb{R}^n_{++}$.
%
Let $A_i\in\mathbb R^{[i,n]}$.
   Following \cite{liu2022further},  any tensor $A_i$ can be rewritten as $A_i=\alpha_i \mathcal{I}-\mathcal{B}_i+\mathcal{N}_i$,
 where $\alpha_i=\max \left\{0, (A_i)_{1 \cdots 1}, (A_i)_{2 \cdots 2}, \ldots, (A_i)_{n \cdots n}\right\}, \mathcal{B}_i$ is a nonnegative tensor whose diagonal entries $(\mathcal{B}_i)_{j j \cdots j}=\alpha-$ $(A_i)_{j j \ldots j}\;(j=1,2, \ldots, n)$ and nondiagonal entries equal to the opposite of the negative elements of $A_i$ other than diagonal elements, and $\mathcal{N}_i$ is a nonnegative tensor which includes the positive elements of $A_i$ other than diagonal elements.

   Next, let $f(x)=\sum_{i=2}^{k}(\alpha_i \mathcal{I}+\mathcal{N}_i)x^{i-1}$. We further define $g(x)= \sum_{i=2}^{k} \mathcal{B}_i x^{i-1}+b$. Now, since $b$ is a positive vector and $\mathcal{B}_i$ is nonnegative, we have that $g(x)$ is an increasing map on $\mathbb{R}^n_{++}$. 
   The map $f(x): \mathbb{R}^n_{++} \rightarrow \mathbb{R}^n_{++}$ is an increasing continuous map and thus it has an increasing continuous inverse $f^{-1}(x)$ on $\mathbb{R}^n_{++}$. Hence, a solution of the tensor equation \eqref{eq:nhte} can be written as:
\begin{equation*}
x=T(x)=f^{-1}(g(x)).
\end{equation*}
Thus, the solution of the tensor equation \eqref{eq:nhte} corresponds to a fixed point of $T(x)=f^{-1}(g(x))$. The map $T(x)$ is an increasing map in $\mathbb{R}^n_{++}$.
We notice that since $A_i v^{i-1}>\mathbf{0}$, we have $A_i (wv)^{i-1}>\mathbf{0}$ for any positive scalar $w>0$. Thus, we can choose a $w>0$ such that $\sum_{i=2}^k A_i (wv)^{i-1}$ is sufficiently large (or small).
 Now we define 
\begin{equation}\label{eq:bargamma1}
\begin{split}
\Bar{\gamma}&=\max _{i=1,2, \ldots, n} \frac{1}{\left(\sum_{i=2}^k A_i (wv)^{i-1}\right)_i}\\
&=\frac{1}{\min _{i=1,2, \ldots, n}\left(\sum_{i=2}^k A_i (wv)^{i-1}\right)_i}.
\end{split}
\end{equation}

Here, we choose a $w$ such that $\Bar{\gamma}=1$. Let $\Bar{x}=wv$. We have
\begin{equation}
\begin{aligned}
g(\bar{x}) & =\sum_i (\mathcal{B}_i) (w v)^{i-1}+b \\
& \leq  \sum_i (\mathcal{B}_i)(w v)^{i-1}+  \sum_i A_i(w v)^{i-1} \\
& =  \sum_i (\alpha_i\mathcal{I} +\mathcal{N}_i) (w v)^{i-1}=f(\bar{x}).
\end{aligned}
\end{equation}
This shows that $T(\Bar{x})=f^{-1}(g(\Bar{x}))\leq \Bar{x}$.

Similarly, let us define 
\begin{equation}\label{eq:undergamma}
\begin{split}
\underline{\gamma}&=\min _{i=1,2, \ldots, n} \frac{1}{\left(\sum_{i=2}^k A_i (tv)^{i-1}\right)_i}\\
&=\frac{1}{\max _{i=1,2, \ldots, n}\left(\sum_{i=2}^k A_i (tv)^{i-1}\right)_i}.
\end{split}
\end{equation}
Here, we choose a $t$ such that $\underline{\gamma}=1$. Letting $\underline{x}=tv$ 
we notice that $t\leq w$ implies that $\underline{x}\leq \Bar{x}$. Analogously,
\begin{equation}
\begin{aligned}
g(\underline{x}) & =\sum_i (\mathcal{B}_i) (tv)^{i-1}+b \\
& \geq  \sum_i (\mathcal{B}_i)(tv)^{i-1}+  \sum_i A_i(tv)^{i-1} \\
& =  \sum_i (\alpha_i\mathcal{I} +\mathcal{N}_i )(tv)^{i-1}=f(\underline{x}).
\end{aligned}
\end{equation}
This shows that $T(\underline{x})=f^{-1}(g(\underline{x}))\geq \underline{x}$. By Lemma \ref{lem:banach}, there exists at least one positive solution.

Next, we prove the uniqueness by contradiction. Suppose there are 2 fixed points $T(x)=x>0, T(y)=y>0$. Define $\tau=\min \frac{x_i}{y_i}$ by using their $i$-th component, then $x \geq \tau y $ and $ x_j=\tau y_j$ for some $j$. 
If $\tau<1$, we have 
\begin{equation}
\begin{aligned}
g(\tau y) & =\sum_i (\mathcal{B}_i) (\tau y)^{i-1}+b \\
& =\sum_i (\mathcal{B}_i)(\tau y)^{i-1}+  \sum_i A_i y^{i-1} \\
& >\sum_i (\mathcal{B}_i)(\tau y)^{i-1}+  \sum_i A_i (\tau y)^{i-1} \\
& =  \sum_i (\alpha_i\mathcal{I}+\mathcal{N}_i)  (\tau y)^{i-1}=f(\tau y).
\end{aligned}
\end{equation}

This leads to $\tau {y_j}<[T(\tau y)]_j \leqslant[T(x)]_j=x_j=\tau {y_j}$, which is a contradiction. Thus, it must hold $\tau\geq 1$, which leads to $x\geq y$.  So, let $\xi=\max \frac{x_i}{y_i}$. Then, $x \leqslant \xi y$ and $ x_j=\xi y_j$ for some $j$. If $\xi>1$, we have 
\begin{equation}
\begin{aligned}
g(\xi y) & =\sum_i (\mathcal{B}_i) (\xi y)^{i-1}+b \\
& =\sum_i (\mathcal{B}_i)(\xi y)^{i-1}+  \sum_i A_i y^{i-1} \\
& <\sum_i (\mathcal{B}_i)(\xi y)^{i-1}+  \sum_i A_i (\xi y)^{i-1} \\
& =  \sum_i (\alpha_i\mathcal{I} +\mathcal{N}_i) (\xi y)^{i-1}=f(\xi y).
\end{aligned}
\end{equation}
Then, $\xi {y_j}>[T(\xi y)]_j \geq[T(x)]_j=x_j=\xi {y_j}$. This is again a contradiction. Thus, it must hold $\xi\leq 1$, which leads to $y\geq x$. Combined with the above argument $x\geq y$, we have $x=y$. Therefore the fixed point is unique.
\end{proof}

\begin{remark}
    Theorem \ref{thm:st} requires $b=\mathbf{1}$. However, the result is still very general. Consider \eqref{eq:nhte} with an arbitrary positive $b$ and any arbitrary tensors $A_i$. Let $(\Tilde{A}_i)_{j_1,\cdots,j_i}=\frac{(A_i)_{j_1,\cdots,j_i}}{b_{j_1}}$. 
    Then, \eqref{eq:nhte} is equivalent to $\Tilde{A}_{k-1}x^{k-1}+\Tilde{A}_{k-1} x^{k-2}+\cdots  \Tilde{A}_{2} x=\mathbf{1}$. Furthermore, the way we construct $f(x), g(x)$ for the proof of Theorem \ref{thm:st} is different from the proof of Lemma \ref{lem:stensor}. Our approach can deal with a more complicated non-homogeneous case and doesn't rely on condition ii) in Lemma \ref{lem:stensor}. As a special case of Theorem \ref{thm:st}, the following Corollary further generalizes the result of Lemma \ref{lem:stensor}.
\end{remark}

\begin{cor}\label{cor:st}
    If $A$ is an $\mathcal{S}$-tensor, then the homogeneous tensor equation $Ax^{k-1}=b$ with $b=\mathbf{1}$ has a unique positive solution.
\end{cor}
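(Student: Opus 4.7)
The plan is to view Corollary \ref{cor:st} as the monomial specialisation of Theorem \ref{thm:st} obtained by taking $A_k=A$ and $A_{k-1}=A_{k-2}=\cdots=A_2=0$. The only subtlety is that the literal hypothesis of Theorem \ref{thm:st} asks for each summand to be an $\mathcal{S}$-tensor, while the zero tensor is not one. I would therefore point out that, in the proof of Theorem \ref{thm:st}, the lower-order $\mathcal{S}$-tensor assumptions are used exclusively to guarantee positivity of the aggregate $\sum_{i=2}^{k} A_i(wv)^{i-1}$ in \eqref{eq:bargamma1} and \eqref{eq:undergamma}. In the present single-tensor setting this aggregate collapses to $A(wv)^{k-1}=w^{k-1}Av^{k-1}$, which is positive simply because $A$ is an $\mathcal{S}$-tensor with positive witness vector $v$. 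So the proof of Theorem \ref{thm:st} transcribes verbatim.

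For completeness I would then sketch the specialisation. Following the decomposition $A=\alpha\mathcal{I}-\mathcal{B}+\mathcal{N}$ with $\mathcal{B},\mathcal{N}$ nonnegative, set $f(x)=(\alpha\mathcal{I}+\mathcal{N})x^{k-1}$ and $g(x)=\mathcal{B}x^{k-1}+\mathbf{1}$, so that $Ax^{k-1}=\mathbf{1}$ is equivalent to the fixed-point problem $x=T(x):=f^{-1}(g(x))$ on $\mathbb{R}^n_{++}$. Using the $\mathcal{S}$-tensor property and the homogeneity $A(wv)^{k-1}=w^{k-1}Av^{k-1}$, I choose $w$ large enough that $\bar{x}:=wv$ obeys $A\bar{x}^{k-1}\geq\mathbf{1}$, from which $g(\bar{x})\leq f(\bar{x})$ and hence $T(\bar{x})\leq\bar{x}$. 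Symmetrically, I choose $t>0$ small enough that $\underline{x}:=tv$ obeys $A\underline{x}^{k-1}\leq\mathbf{1}$ and hence $T(\underline{x})\geq\underline{x}$, with $t\leq w$ ensuring $\underline{x}\leq\bar{x}$. Lemma \ref{lem:banach} then yields at least one fixed point in $[\underline{x},\bar{x}]$.

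Uniqueness is obtained by the same ratio argument as in Theorem \ref{thm:st}: given two positive solutions $x$ and $y$, define $\tau=\min_i x_i/y_i$. If $\tau<1$, the strict homogeneity $A(\tau y)^{k-1}=\tau^{k-1}Ay^{k-1}$ gives $g(\tau y)>f(\tau y)$ componentwise, and evaluating at the index $j$ attaining the minimum leads to the contradiction $\tau y_j<[T(\tau y)]_j\leq[T(x)]_j=x_j=\tau y_j$. Hence $\tau\geq 1$, i.e.\ $x\geq y$, and the symmetric argument with $\xi=\max_i x_i/y_i$ forces $y\geq x$. I do not expect any real obstacle here; the only point worth flagging is the bookkeeping remark that the proof of Theorem \ref{thm:st} tolerates vanishing lower-order tensors, which makes the corollary genuinely immediate rather than a result requiring separate machinery.
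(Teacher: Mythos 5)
Your proof is correct and follows essentially the same route as the paper: the paper gives no separate argument for this corollary, presenting it simply as a special case of Theorem~\ref{thm:st}, and your transcription of that theorem's proof to the single-tensor setting (decomposition $A=\alpha\mathcal{I}-\mathcal{B}+\mathcal{N}$, the fixed-point map $T=f^{-1}\circ g$ bracketed by $\underline{x}=tv$ and $\bar{x}=wv$, and the $\min/\max$-ratio uniqueness argument) is exactly what is intended. Your observation that the vanishing lower-order tensors are not themselves $\mathcal{S}$-tensors—so the theorem cannot be quoted verbatim and its proof must instead be re-run, using homogeneity $A(wv)^{k-1}=w^{k-1}Av^{k-1}$ to supply the needed positivity—is a legitimate point of care that the paper glosses over.
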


Next, we present a result similar to Theorem \ref{thm:st} but for  $\mathcal{M}$-tensors.

\begin{thm}\label{thm:mt}
    The non-homogeneous tensor equation \eqref{eq:nhte} has at least one nonnegative solution if 
    $A_{k-1},A_{k-2},\cdots,A_2$ are all $\mathcal{M}$-tensors with the form $A_i=\alpha_i \mathcal{I}- \mathcal{B}_i$, and are all associated with the same positive vector $v$, i.e., $A_{k-1}v^{k-1}>0$, $A_{k-1}v^{k-2}>0,\cdots , A_2v>0$, where $\mathcal{B}_i$ is a nonnegative tensor, and $\alpha_i$ a positive scalar.  Furthermore, let $f(x)=\sum_{i=2}^{k}(\alpha_i \mathcal{I})x^{i-1}$ and $g(x)=\sum_{i=2}^{k}\mathcal{B}_i x^{i-1}+b$. They implicitly define the map $T(x)=f^{-1}(g(x))$. If the iteration $x_{k+1}=T\left(x_k\right), \quad k=0,1,2, \ldots$ converges to $x_*>0$ from the initial condition $x_0=0$, then the non-homogeneous tensor equation \eqref{eq:nhte} has at least one positive solution and there is no solution with zero entry (boundary solution); if furthermore the iteration $x_{k+1}=T\left(x_k\right), \quad k=0,1,2, \ldots$ also converges to $x_*$ from the initial condition $x_0=wv$, where $w$ is a positive scalar such that $0<\bar{\gamma}=\max _i \frac{b_i}{A_k(w v)^{k-1}+A_{k-1}(w v)^{k-2}+\cdots+A_2(w v)}\leq 1$ holds, then the positive solution is unique. 
\end{thm}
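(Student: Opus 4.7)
The plan is to follow the same fixed-point template as Theorem \ref{thm:st}, but simplified by the fact that for an $\mathcal{M}$-tensor the decomposition $A_i=\alpha_i\mathcal{I}-\mathcal{B}_i$ has no ``positive off-diagonal'' remainder $\mathcal{N}_i$ to carry through. Equation \eqref{eq:nhte} is rewritten as $f(x)=g(x)$ using the $f,g$ already provided in the statement, and therefore as the fixed-point equation $x=T(x)=f^{-1}(g(x))$. The continuity and monotonicity of $T$ on $\mathbb{R}_+^n$ follow immediately: each $\alpha_i>0$ makes $f$ a componentwise strictly increasing polynomial map with an increasing continuous inverse, while $\mathcal{B}_i\geq \mathbf{0}$ and $b>\mathbf{0}$ make $g$ increasing.

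Next, I would apply Lemma \ref{lem:banach} on the order interval $[0,wv]$, where $w$ is chosen (always possible by the homogeneity $A_i(wv)^{i-1}=w^{i-1}A_iv^{i-1}$ together with the hypothesis $A_iv^{i-1}>\mathbf{0}$) so that $\bar{\gamma}\leq 1$. For the lower endpoint, $T(0)=f^{-1}(b)\geq \mathbf{0}$ trivially. For the upper endpoint, the condition $\bar{\gamma}\leq 1$ is precisely $\sum_{i=2}^{k}A_i(wv)^{i-1}\geq b$, and substituting $A_i=\alpha_i\mathcal{I}-\mathcal{B}_i$ rewrites this as $f(wv)\geq g(wv)$, i.e.\ $T(wv)\leq wv$. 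Lemma \ref{lem:banach} then guarantees at least one fixed point of $T$ in $[0,wv]$, establishing the first (nonnegative-existence) claim.

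The remaining two claims follow from the monotone-convergence part of the same lemma: iterating $T$ from $0$ (respectively $wv$) produces a nondecreasing (respectively nonincreasing) sequence converging to the minimal fixed point $x_*$ (respectively the maximal fixed point $x^*$) in $[0,wv]$. If the upward iteration converges to some $x_*>\mathbf{0}$, then every fixed point $\bar{x}\in[0,wv]$ obeys $\bar{x}\geq x_*>\mathbf{0}$, ruling out any boundary (zero-entry) solution. If the downward iteration also converges to the same limit $x_*$, then $x_*=x^*$, which forces uniqueness in $[0,wv]$.

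The main obstacle, and the reason the theorem must assume convergence of both iterations rather than deriving uniqueness directly, is that the ``contradiction via strict inequality'' argument used for uniqueness in Theorem \ref{thm:st} genuinely relied on the piece $\mathcal{N}_i$ to produce a strict gap between $g(\tau y)$ and $f(\tau y)$; for a pure $\mathcal{M}$-tensor decomposition that gap need not exist, so uniqueness cannot be extracted from the fixed-point setup alone and must be imported as the hypothesis that the monotone iterations from the two ends of $[0,wv]$ coalesce. The only subtle verification I would want to be careful about is that $w$ making $\bar{\gamma}\leq 1$ indeed exists in every case, which reduces to the already-noted positivity of $A_iv^{i-1}$ together with homogeneity, so that inflating $w$ always suffices.
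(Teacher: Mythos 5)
Your proposal is correct and follows essentially the same route as the paper: the same fixed-point map $T=f^{-1}\circ g$, the same application of Lemma \ref{lem:banach} on $[0,wv]$ with $T(0)\geq \mathbf{0}$ and $T(wv)\leq wv$ (your direct verification that $\bar{\gamma}\leq 1$ is equivalent to $f(wv)\geq g(wv)$ is in fact cleaner than the paper's rescaled endpoint $\bar{\gamma}^{1/(k-1)}wv$), and the same use of the minimal/maximal fixed points to rule out boundary solutions and to get uniqueness when the two iterations coalesce. The only quibble is with your closing aside: the strictness in Theorem \ref{thm:st}'s uniqueness argument comes from $b=\sum_i A_i y^{i-1}>\sum_i A_i(\tau y)^{i-1}$ rather than from the $\mathcal{N}_i$ term, but this is commentary and does not affect the proof.
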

\begin{proof}
The solution of the tensor equation \eqref{eq:nhte} can be rewritten as $\sum_{i=2}^{k}(\alpha_i \mathcal{I})x^{i-1}=\sum_{i=2}^{k}\mathcal{B}_i x^{i-1}+b$, which from the definitions made in the statement yields $f(x)=g(x)$. Since $\mathcal{B}_i$ is a nonnegative tensor, $g(x)$ is an increasing map on $\{x|x\geq \mathbf{0}\}$.
Since the function $f(x)$ is an increasing continuous map on $\{x|x\geq \mathbf{0}\}$ Then, it has an increasing continuous inverse $f^{-1}(x)$. Hence, the tensor equation \eqref{eq:nhte} can be written as:
\begin{equation*}
x=T(x)=f^{-1}(g(x)).
\end{equation*}
The solution of the tensor equation \eqref{eq:nhte} corresponds to a fixed point of $T(x)=f^{-1}(g(x))$. Moreover, $T(x)$ is an increasing map on $\{x|x\geq \mathbf{0}\}$. So, the rest of the proof consists on showing that $T$ is a self-map.

We notice that since $A_i v^{i-1}>\mathbf{0}$, we have $A_i (wv)^{i-1}>\mathbf{0}$ for any positive scalar $w>0$. Thus, we can choose a $w>0$ such that $\sum_{i=2}^k A_i (wv)^{i-1}$ is sufficiently large (or small). Now we define 
\begin{equation}\label{eq:bargamma}
\Bar{\gamma}=\max _{i=1,2, \ldots, n} \frac{b_i}{\left(\sum_{i=2}^k A_i (wv)^{i-1}\right)_i}.
\end{equation}
Here, we choose a $w$ such that $0<\Bar{\gamma}\leq 1$. Moreover, it holds $\Bar{\gamma}^p\geq \Bar{\gamma}$ with $0\leq p \leq 1$. Let $\bar{x}=\bar{r}^{\frac{1}{k-1}} w v$. We have

\begin{equation}
\begin{aligned}
g(\bar{x}) & =\sum_i \mathcal{B}_i \bar{\gamma}^{\frac{i-1}{k-1}}(w v)^{i-1}+b \\
& \leq \sum_i \mathcal{B}_i \bar{\gamma}^{\frac{i-1}{k-1}}(w v)^{i-1}+  \bar{\gamma} \sum_i A_i(w v)^{i-1} \\
& \leq  \sum_i (\mathcal{B}_i+A_i)  \bar{\gamma}^{\frac{i-1}{k-1}}(w v)^{i-1} \\
&= \sum_i \alpha_i  \bar{\gamma}^{\frac{i-1}{k-1}}(w v)^{i-1}=f(\bar{x}).
\end{aligned}
\end{equation}

This leads to $T(\bar{x})\leq \bar{x}$. Next, we further look at
\begin{equation}
g(\mathbf{0})  =b =f(y)= \sum_{i=2}^{k}(\alpha_i \mathcal{I})y^{i-1}.
\end{equation}
Since $b$ is positive, $y$ must also be positive. This further implies that $T(\mathbf{0})=f^{-1}(b)=y\geq \mathbf{0}$. From Lemma \ref{lem:banach}, there must be at least one positive solution. We know from Lemma \ref{lem:banach} that the iteration must converge to the boundary solution from below and if $x_* = x^*$, then the positive solution is unique.
\end{proof}

\begin{remark}
    According to Lemma \ref{lem:stensor}, there are usually multiple positive vectors $y$ such that $Ay^{k-1}>0$. Thus, it is not unreasonable to expect that $A_i$ are all $\mathcal{M}$- or $\mathcal{H}^+$-tensors associated with the same positive vector.
\end{remark}

\subsection{Polynomial complementarity problem}
In tensor algebra, the polynomial complementarity problem is strongly related to the properties of tensors \cite{huang2019tensor}. The polynomial complementarity problem (PCP) is defined by
\begin{equation}\label{eq:pcp}
    \begin{split}
        x &\geq \mathbf{0}, \\
        F(x)+q=\sum_{k=2}^m {A}_k x^{k-1}+q &\geq \mathbf{0},  \\ x^{\top}(F(x)+q)=x^{\top}\left(\sum_{k=2}^m {A}_k x^{k-1}+q\right)&=0,
    \end{split}
\end{equation}
% {\small
% \begin{equation}\label{eq:pcp}
% x \geq 0, \quad \sum_{k=2}^m {A}_k x^{k-1}+q \geq 0, \quad \text { and } \quad x^{\top}\left(\sum_{k=2}^m {A}_k x^{k-1}+q\right)=0,
% \end{equation}}
for any given $\left({A}_2, {A}_3, \ldots, {A}_m\right) \in \mathbb{R}^{[2, n]} \times \mathbb{R}^{[3, n]} \times \cdots \times \mathbb{R}^{[m, n]}$, any real vector $q$ and a given function $F(x):=\sum_{k=2}^m {A}_k x^{k-1}$.  
Let $F^{\infty}(x):=\lim _{\lambda \rightarrow \infty} \frac{F(\lambda x)}{\lambda^{m-1}}={A}_m x^{m-1}$, and $\operatorname{SOL}(F, q)$ denote the solution set of the problem \eqref{eq:pcp} for a given function $F$ and parameter $q$. The following are some important results of the polynomial complementarity problem. It is straightforward to see that a positive solution of $F(x)+q=\mathbf{0}$ must be a solution of the corresponding PCP.
The first lemma shows when the problem has a bounded solution set.

\begin{lemma}[Theorem 7.1 \cite{huang2019tensor}, Proposition 2.1 \cite{gowda2016polynomial}]\label{lem:pcp}
    For the polynomial $F$, consider the statements:
\begin{itemize}
    \item[(a)] $\operatorname{SOL}\left(F^{\infty}, 0\right)=\{0\}$.
    \item[(b)] For any bounded set $\Omega \subset \mathbb{R}^n, \bigcup_{q \in \Omega} \operatorname{SOL}(F, q)$ is bounded.
\end{itemize}
Then, (a) implies (b). The reverse implication holds when $F$ is homogeneous.
\end{lemma}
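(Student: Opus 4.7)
The plan is to prove (a) $\Rightarrow$ (b) by contradiction together with normalisation on the unit sphere. Suppose, toward a contradiction, that some bounded $\Omega \subset \mathbb{R}^n$ yields an unbounded union $\bigcup_{q \in \Omega}\operatorname{SOL}(F,q)$. Then I can extract sequences $\{q_j\}\subset\Omega$ and $\{x_j\}$ with $x_j \in \operatorname{SOL}(F,q_j)$ and $\|x_j\|\to\infty$. Setting $y_j := x_j/\|x_j\|$ and invoking compactness of the unit sphere, I pass to a subsequence along which $y_j\to y^{\ast}$ with $\|y^{\ast}\|=1$; in particular $y^{\ast}\neq 0$, and $y^{\ast}\geq 0$ is inherited from $y_j\geq 0$.

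The heart of the argument is to identify $y^{\ast}$ as a nonzero element of $\operatorname{SOL}(F^{\infty},0)$, contradicting (a). I would divide the sign constraint $F(x_j)+q_j\geq 0$ by $\|x_j\|^{m-1}$ and take the limit: boundedness of $\Omega$ kills the $q_j$-term, and each lower-order tensor $A_{\ell}$ with $\ell<m$ appears multiplied by $\|x_j\|^{\ell-m}\to 0$, so only the top-degree contribution $A_m(y^{\ast})^{m-1}=F^{\infty}(y^{\ast})$ survives, yielding $F^{\infty}(y^{\ast})\geq 0$. Dividing the scalar identity $x_j^{\top}(F(x_j)+q_j)=0$ by $\|x_j\|^{m}$ and passing to the limit by exactly the same bookkeeping delivers $(y^{\ast})^{\top}F^{\infty}(y^{\ast})=0$. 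Together these three conditions place $y^{\ast}$ in $\operatorname{SOL}(F^{\infty},0)\setminus\{0\}$, contradicting (a).

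For the converse under the homogeneity assumption, the key observation is that $F$ homogeneous of degree $m-1$ gives $F^{\infty}=F$ identically. If (a) failed there would exist $y^{\ast}\neq 0$ in $\operatorname{SOL}(F,0)$, and homogeneity would make the entire ray $\{\lambda y^{\ast}:\lambda>0\}$ a subset of $\operatorname{SOL}(F,0)$ (all three PCP relations scale consistently in $\lambda$); taking $\Omega=\{0\}$ then exhibits an unbounded union and contradicts (b).

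The main obstacle I anticipate is the careful accounting of powers of $\|x_j\|$ so that exactly the leading tensor $A_m$ is retained in the limit while every lower-degree term and the bounded perturbation $q_j$ are annihilated; this is precisely what forces the normalisation to be by $\|x_j\|$ itself, so that $y^{\ast}$ is trapped on the unit sphere and hence nonzero. A secondary subtlety worth flagging is why homogeneity is genuinely needed for the converse: in the inhomogeneous case a nonzero solution of $\operatorname{SOL}(F^{\infty},0)$ need not scale back into actual solutions of $\operatorname{SOL}(F,q)$, so a recession direction at infinity need not blow up any single solution set.
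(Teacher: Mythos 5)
The paper does not prove this lemma; it is quoted verbatim from the cited sources (Theorem 7.1 of Huang--Qi and Proposition 2.1 of Gowda), so there is no in-paper argument to compare against. Your proof is correct and is essentially the standard normalization/recession argument used in those references: the contradiction-plus-rescaling by $\|x_j\|$ to trap a nonzero limit $y^{\ast}$ on the unit sphere, the degree bookkeeping that kills the terms $A_{\ell}x_j^{\ell-1}/\|x_j\|^{m-1}=\|x_j\|^{\ell-m}A_{\ell}y_j^{\ell-1}$ for $\ell<m$ along with the bounded $q_j$, and the ray argument $\{\lambda y^{\ast}:\lambda>0\}\subset\operatorname{SOL}(F,0)$ for the homogeneous converse are all sound. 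Your closing remark about why homogeneity is genuinely needed for the reverse implication is also the right diagnosis.
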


The next lemma further indicates when the solution set is nonempty.
\begin{lemma}[Theorem 7.2 \cite{huang2019tensor}, Theorem 5.1 \cite{gowda2016polynomial}]\label{lem:pcp2}
    Let $F: \mathbb{R}^n \rightarrow \mathbb{R}^n$ be a polynomial mapping with leading term $F^{\infty}$. Suppose that there is a $d \in \mathbb{R}_{+}^n$ such that one of the following conditions holds:
    \begin{itemize}
        \item[(a)] $\operatorname{SOL}\left(F^{\infty}, 0\right)=\{0\}=\operatorname{SOL}\left(F^{\infty}, d\right)$.
        \item[(b)] $\operatorname{SOL}\left(F^{\infty}, 0\right)=\{0\}=\operatorname{SOL}(F, d)$.
    \end{itemize}

Then, for all $q \in \mathbb{R}^n$, the $\operatorname{PCP}(F, q)$ has a nonempty and compact solution set.
\end{lemma}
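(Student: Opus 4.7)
The plan is to reformulate the PCP as a zero-finding problem for the natural map $\Phi_q(x) := \min(x,\,F(x)+q)$ taken componentwise, so that $x \in \operatorname{SOL}(F,q)$ exactly when $\Phi_q(x)=\mathbf{0}$. Compactness of $\operatorname{SOL}(F,q)$ for each fixed $q$ is then immediate from the standing hypothesis $\operatorname{SOL}(F^\infty,0)=\{0\}$ together with Lemma \ref{lem:pcp}: taking the bounded set $\Omega=\{q\}$, the solution set $\operatorname{SOL}(F,q)$ is bounded, and it is closed by continuity of $\Phi_q$. Hence only nonemptiness requires a genuinely new argument.

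For nonemptiness I would invoke the Brouwer topological degree combined with homotopy invariance. Pick a reference right-hand side $q_0$ (to be specified per case). Apply Lemma \ref{lem:pcp} to the bounded family $\Omega=\{(1-t)q_0+tq:t\in[0,1]\}$ to obtain a radius $R>0$ such that every solution of PCP$(F,(1-t)q_0+tq)$ lies in the open ball $B_R$; in particular $\Phi_{(1-t)q_0+tq}$ does not vanish on $\partial B_R$. By homotopy invariance,
\begin{equation*}
\deg\bigl(\Phi_q,B_R,\mathbf{0}\bigr)=\deg\bigl(\Phi_{q_0},B_R,\mathbf{0}\bigr),
\end{equation*}
so it suffices to exhibit a reference $q_0$ for which the right-hand degree is nonzero, since a nonzero degree forces a zero of $\Phi_q$ in $B_R$.

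For the reference, under case (b) take $q_0=d$ directly: since $\operatorname{SOL}(F,d)=\{0\}$ and $d\in\mathbb{R}_+^n$ gives $F(\mathbf{0})+d=d$ (for a polynomial $F$ without constant term we have $F(\mathbf{0})=\mathbf{0}$), the zero $x=\mathbf{0}$ of $\Phi_d$ is isolated, and the local index at $\mathbf{0}$ can be computed via an auxiliary linear homotopy to the map $x\mapsto x$ on a neighborhood of $\mathbf{0}$, giving index $+1$. Under case (a) the same strategy is applied after a second homotopy $F_\lambda(x):=F(\lambda x)/\lambda^{m-1}$, $\lambda\to\infty$, which interpolates $\Phi_q$ with the ``leading-term'' map $\Phi^{\infty}_d$; by hypothesis $\operatorname{SOL}(F^\infty,d)=\{0\}$, the same local-index computation applies to $\Phi^\infty_d$, yielding nonzero degree.

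The main obstacle, in my view, is verifying the no-escape-to-infinity condition uniformly along the two-parameter homotopy needed for case (a), where both the right-hand side $q$ and the polynomial $F_\lambda$ vary simultaneously. The natural tool is again Lemma \ref{lem:pcp} applied to the compact family $\{F_\lambda: \lambda\in[1,\infty)\}\cup\{F^\infty\}$ (after a suitable rescaling of variables to keep the domain bounded); one must check that the leading-term condition is preserved and yields a uniform radius $R$. A secondary technical point is making the local index calculation at $\mathbf{0}$ rigorous when the map $\Phi_{q_0}$ is only piecewise smooth; this is standard but needs care, and can be circumvented by a small perturbation of $q_0$ into the interior of $\mathbb{R}_+^n$ so that $\mathbf{0}$ becomes a transversal zero.
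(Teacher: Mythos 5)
A preliminary remark: the paper does not prove this lemma at all --- it is imported verbatim, with citations, from Theorem 7.2 of \cite{huang2019tensor} / Theorem 5.1 of \cite{gowda2016polynomial} --- so what you are reconstructing is the proof in those sources. Your architecture (the natural map $\Phi_q(x)=\min(x,F(x)+q)$, boundedness via Lemma \ref{lem:pcp}, homotopy invariance of the Brouwer degree along the segment of right-hand sides and along the scaling $F(\lambda x)/\lambda^{m-1}$) is indeed the one used there, and the compactness half of your argument (boundedness from Lemma \ref{lem:pcp} with $\Omega=\{q\}$, closedness by continuity) is complete and correct.

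The gap sits in the nonemptiness half, at the single place where the hypotheses on $d$ must actually be used: showing the reference degree is nonzero. Your index computation at $\mathbf{0}$ (``auxiliary linear homotopy to $x\mapsto x$, index $+1$'') is valid only when $d\in\mathbb{R}^n_{++}$, since only then does $F(\mathbf{0})+d=d>\mathbf{0}$ force $\Phi_d(x)=x$ on a neighborhood of the origin. The lemma, however, assumes only $d\in\mathbb{R}^n_{+}$; for an index $i$ with $d_i=0$ the $i$-th component of $\Phi_d$ near $\mathbf{0}$ is $\min\bigl(x_i,F_i(x)\bigr)$, and its contribution to the local index is not $+1$ for free --- an isolated zero of a continuous map can have index $0$ (compare $x\mapsto x^2$ on $\mathbb{R}$). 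Your proposed repair, perturbing $q_0=d$ into $\mathbb{R}^n_{++}$, does not close this: after replacing $d$ by $d+\epsilon\mathbf{1}$ you no longer know that $\operatorname{SOL}(F,d+\epsilon\mathbf{1})=\{0\}$. Homotopy invariance still gives $\deg(\Phi_{d+\epsilon\mathbf{1}},B_R,\mathbf{0})=\deg(\Phi_{d},B_R,\mathbf{0})$, but the left-hand side is now the index $+1$ at $\mathbf{0}$ \emph{plus} the uncontrolled (possibly negative) indices of whatever new solutions the perturbation creates inside $B_R$, so you cannot conclude the total is nonzero. Closing this requires the Karamardian-type argument of the cited sources, which exploits the \emph{global} uniqueness $\operatorname{SOL}(F,d)=\{0\}$ (not merely isolatedness at $\mathbf{0}$) through a homotopy carried out on all of $B_R$ rather than locally at the origin; the same caveat applies to case (a) after your scaling homotopy. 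The remaining ingredients --- the uniform bounds along the homotopies via Lemma \ref{lem:pcp} and the normalization argument for the family $F(\lambda x)/\lambda^{m-1}$ --- are sound.
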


The polynomial complementarity problem is said to have the property of global uniqueness and solvability (denoted by GUS-property) if and only if it has a unique solution for every $q \in \mathbb{R}^n$. A mapping $F: \mathbb{S} \subseteq \mathbb{R}^n \rightarrow \mathbb{R}^n$ is said to be a $P$-function on $\mathbb{S}$, if and only if $\max _{i =1,\cdots,n}\left(x_i-y_i\right)\left[F_i(x)-F_i(y)\right]> 0$ for all $x, y \in \mathbb{S}$ with $x \neq y$. The following lemma describes when the PCP has the GUS-property.

\begin{lemma}[Theorem 7.4 \cite{huang2019tensor}, Theorem 6.1 \cite{gowda2016polynomial}]\label{lem:pcp3}
Let $F: \mathbb{R}^n \rightarrow \mathbb{R}^n$ be a polynomial mapping such that $\operatorname{SOL}\left(F^{\infty}, 0\right)=\{0\}$. Then, the following are equivalent:\\
(a) $\operatorname{PCP}(F, q)$ has the GUS-property.\\
(b) $\operatorname{PCP}(F, q)$ has at most one solution for every $q \in \mathbb{R}^n$.

Moreover, condition (b) holds when $F$ is a $P$-function on $\mathbb{R}_{+}^n$.
\end{lemma}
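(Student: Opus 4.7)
The plan is to handle the two parts of the statement separately: the equivalence of (a) and (b) under the hypothesis $\operatorname{SOL}(F^\infty,0)=\{0\}$, and the claim that the $P$-function condition forces (b).

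\emph{Equivalence of (a) and (b).} The implication (a) $\Rightarrow$ (b) is immediate from the definition of the GUS-property. For (b) $\Rightarrow$ (a), I would exploit the fact that $F(0)=0$, which holds because $F(x)=\sum_{k=2}^m A_k x^{k-1}$ has no constant term. Consequently, for any $d\in\mathbb{R}_+^n$, the vector $x=0$ satisfies $x\geq \mathbf{0}$, $F(x)+d=d\geq \mathbf{0}$, and $x^\top(F(x)+d)=0$, so $0\in\operatorname{SOL}(F,d)$. By hypothesis (b), this is the unique solution, hence $\operatorname{SOL}(F,d)=\{0\}$. Combined with $\operatorname{SOL}(F^\infty,0)=\{0\}$, Lemma \ref{lem:pcp2}(b) applies and guarantees that $\operatorname{PCP}(F,q)$ has a nonempty (and compact) solution set for every $q\in\mathbb{R}^n$. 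Together with the uniqueness from (b), this yields the GUS-property.

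\emph{$P$-function implies (b).} The approach is by contradiction: assume $F$ is a $P$-function on $\mathbb{R}_+^n$ and that two distinct vectors $x,y$ are solutions of $\operatorname{PCP}(F,q)$. I would show that $(x_i-y_i)(F_i(x)-F_i(y))\leq 0$ for every index $i$, so that $\max_i (x_i-y_i)(F_i(x)-F_i(y))\leq 0$, contradicting the $P$-function property. The componentwise argument is a short case analysis using the complementarity conditions $x_i(F_i(x)+q_i)=0$ and $y_i(F_i(y)+q_i)=0$ together with the nonnegativities of $x, y, F(x)+q, F(y)+q$. If $x_i>y_i\geq 0$, then $x_i>0$ forces $F_i(x)+q_i=0$, while $F_i(y)+q_i\geq 0$ yields $F_i(y)\geq F_i(x)$, producing the desired sign; the case $x_i<y_i$ is symmetric and $x_i=y_i$ is trivial.

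\emph{Main obstacle.} The delicate step is the existence half of (b) $\Rightarrow$ (a): one must correctly identify a vector $d$ for which $\operatorname{SOL}(F,d)=\{0\}$ so that Lemma \ref{lem:pcp2} becomes applicable, and this is precisely where the zero-constant-term structure of the polynomial $F$ is essential. For a general non-homogeneous mapping this reduction to Lemma \ref{lem:pcp2} would not be automatic, whereas for our polynomial $F$ it is essentially free. By contrast, the $P$-function argument is standard and the only care needed is in interpreting the ``maximum is strictly positive'' clause in the $P$-function definition, since a single coordinate with strict positivity suffices to generate the contradiction.
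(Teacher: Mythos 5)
Your proof is correct. Note that the paper does not prove this lemma at all --- it is quoted verbatim from the cited references (Theorem 7.4 of Huang--Qi and Theorem 6.1 of Gowda) --- but your argument reconstructs exactly the route taken there: the implication (b) $\Rightarrow$ (a) follows by observing $F(0)=0$ so that $0\in\operatorname{SOL}(F,d)$ for $d\in\mathbb{R}_+^n$, whence (b) gives $\operatorname{SOL}(F,d)=\{0\}$ and Lemma~\ref{lem:pcp2}(b) supplies nonemptiness for every $q$; and the $P$-function part is the classical componentwise complementarity argument showing $\max_i(x_i-y_i)(F_i(x)-F_i(y))\le 0$ for two distinct solutions, contradicting the definition.
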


Now, equipped with the preliminaries introduced above, we consider the special case where $F(x)=Ax+Bx^2$. The polynomial complementarity problem now becomes a quadratic complementarity problem as follows:
% {\small
% \begin{equation}\label{eq:qcp}
% x \geq 0, \quad  Bx^2+Ax+q \geq 0, \quad \text { and } \quad x^{\top}\left( Bx^2+Ax+q\right)=0.
% \end{equation}}
\begin{equation}\label{eq:qcp}
\begin{split}
    x \geq \mathbf{0}\\
    Bx^2+Ax+q \geq \mathbf{0}\\
    x^{\top}\left( Bx^2+Ax+q\right)=0.
\end{split}
\end{equation}

Next, we want to derive the lower and upper bound of the quadratic complementarity problem \eqref{eq:qcp} under some mild conditions. Firstly, we recall some concepts introduced in \cite{xu2019estimations}. %\hjk{You may want to improve the notation, this does not look good}\csx{actually I copied them from the original paper}

Let ${A}=\left(a_{i_1 i_2 \cdots i_m}\right) \in \mathbb{R}^{[m, n]}$. Define
\begin{equation}\label{eq:r+}
r_i({A})_{+}:=\sum_{a_{i i_2 i_3 \cdots i_m} \geq 0,\left(i_2, i_3, \ldots, i_m\right) \neq(i, i, \ldots, i)} a_{i i_2 i_3 \cdots i_m},
\end{equation}
and
\begin{equation}\label{eq:r-}
r_i({A})_{-}:=\sum_{a_{i i_2 i_3 \cdots i_m}<0,\left(i_2, i_3, \ldots, i_m\right) \neq(i, i, \ldots, i)}\left|a_{i i_2 i_3 \cdots i_m}\right|.
\end{equation}
When there is no $a_{i i_2 i_3 \cdots i_m} \geq 0$ for all $\left(i_2, i_3, \ldots, i_m\right) \neq(i, i, \ldots, i)$, or $a_{i i_2 i_3 \cdots i_m}<0$ for all $\left(i_2, i_3, \ldots, i_m\right) \neq(i, i, \ldots, i)$, then $r_i({A})_{+}:=0$ and $r_i({A})_{-}:=0$ respectively.

The tensor ${A}=\left(a_{i_1 i_2 \cdots i_m}\right) \in \mathbb{R}^{[m, n]}$ is said to be a \emph{generalized row strictly diagonally dominant} tensor if and only if for all $i \in[n]$,
$
\left|a_{i i \cdots i}\right|-r_i({A})_{-}>0,
$
where $r_i({A})_{-}$is defined by \eqref{eq:r-}. If ${A}$ additionally satisfies $a_{i i \cdots i}>0$ for all $i \in[n]$, then we call ${A}$ a generalized row strictly diagonally dominant tensor with all positive diagonal entries.

\begin{remark}
We now know that a strictly diagonally dominant tensor with positive diagonal elements is a generalized row strictly diagonally dominant tensor with all positive diagonal entries and an $\mathcal{S}-$tensor as well as an $\mathcal{H}^+-$tensor. All the following results regarding a generalized row strictly diagonally dominant tensor with all positive diagonal entries is also applicable to a strictly diagonally dominant tensor with positive diagonal elements.
\end{remark}

%\hjk{It may make sense to use abbreviations...}

Now, let \begin{equation}\label{eq:omega}
\Omega(q):=\left\{i \in[n]: q_i<0\right\}, \quad \forall q \in \mathbb{R}^n \backslash \mathbb{R}_{+}^n.
\end{equation}

The set $\Omega(q)$ refers to the set of indices where the $i$-th component of $q$ is strictly negative.

\begin{lemma}\label{lem:qcp1}
   Let $B \in \mathbb{R}^{[3, n]}$ and $A\in\mathbb{R}^{[2, n]}$ be generalized row strictly diagonally dominant tensors with all positive diagonal entries, $q \in \mathbb{R}^n \backslash \mathbb{R}_{+}^n$ be any given vector, and $\Omega(q)$ be defined by \eqref{eq:omega}. If $x \in \mathbb{R}^n$ is a solution of \eqref{eq:qcp} with $\|x\|_{\infty}=x_k$ for some $k$, then $k \in \Omega(q)$ and
$$
\left(Bx^2+Ax\right)_k+q_k=\mathbf{0}.
$$
\end{lemma}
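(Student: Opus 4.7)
The plan is to split the argument into three short steps: first show $x_k>0$, then deduce the equality $(Bx^{2}+Ax)_k+q_k=0$ from complementarity, and finally use the generalized diagonal dominance of both $B$ and $A$ to bound $(Bx^{2}+Ax)_k$ from below by a strictly positive quantity, which in turn forces $q_k<0$ and hence $k\in\Omega(q)$.

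For the first step, I would rule out the trivial candidate $x=\mathbf{0}$ by invoking the hypothesis $q\in\mathbb{R}^n\setminus\mathbb{R}^n_+$: some entry $q_j$ is strictly negative, so at $x=\mathbf{0}$ one has $(Bx^{2}+Ax+q)_j=q_j<0$, contradicting the feasibility constraint in \eqref{eq:qcp}. Hence $x\neq \mathbf{0}$, and since $\|x\|_\infty=x_k$ with $x\geq \mathbf{0}$, this yields $x_k>0$. The complementarity relation $x^{\top}(Bx^{2}+Ax+q)=0$ combined with $x\geq \mathbf{0}$ and $Bx^{2}+Ax+q\geq \mathbf{0}$ implies the componentwise identity $x_i(Bx^{2}+Ax+q)_i=0$ for every $i$; applying it at $i=k$ where $x_k>0$ gives the desired equality $(Bx^{2}+Ax)_k+q_k=0$.

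It then remains to show $(Bx^{2}+Ax)_k>0$. Expanding the $k$-th component and separating off the purely diagonal terms $B_{kkk}x_k^{2}$ and $A_{kk}x_k$, the remaining off-diagonal contributions are sums of the form $B_{ki_2 i_3}x_{i_2}x_{i_3}$ and $A_{ki}x_i$. Using the monotonicity $0\leq x_i\leq x_k$ inherited from $\|x\|_\infty=x_k$, every negative off-diagonal contribution is bounded below by $-|B_{ki_2 i_3}|x_k^{2}$ (respectively $-|A_{ki}|x_k$), while positive off-diagonal contributions may simply be discarded in a lower bound. Summing and recalling the definitions \eqref{eq:r+} and \eqref{eq:r-}, I obtain $(Bx^{2}+Ax)_k \geq (B_{kkk}-r_k(B)_-)x_k^{2}+(A_{kk}-r_k(A)_-)x_k$. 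The generalized row strict diagonal dominance with positive diagonals makes both bracketed factors strictly positive, and $x_k>0$ then yields $(Bx^{2}+Ax)_k>0$; combined with the equality from the previous step this gives $q_k<0$, i.e.\ $k\in\Omega(q)$. The only delicate point of the proof is the index bookkeeping in the off-diagonal lower bound, where one must pass cleanly from the full expansion of $(Bx^{2})_k$ and $(Ax)_k$ to the quantities $r_k(B)_-$ and $r_k(A)_-$; no cancellation or nontrivial estimate is required beyond that.
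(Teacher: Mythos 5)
Your proof is correct and uses essentially the same argument as the paper: the identical key estimate $(Bx^2+Ax)_k \geq (B_{kkk}-r_k(B)_-)x_k^2 + (A_{kk}-r_k(A)_-)x_k > 0$ derived from $0\leq x_i\leq x_k$, together with the observation that $x=\mathbf{0}$ is infeasible when $q\notin\mathbb{R}^n_+$. The only difference is cosmetic: you argue directly (equality first, then deduce $q_k<0$), while the paper assumes $k\notin\Omega(q)$ and derives a contradiction with complementarity.
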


\begin{proof}
The proof is similar to \cite{xu2019estimations}.
    If $k \notin \Omega(q)$, then $q_k \geq 0$. Since 0 is not a solution of \eqref{eq:qcp} when $q \notin \mathbb{R}_{+}^n$, it follows that $x_k=\|x\|_{\infty}>0$. Thus, we have
\begin{equation}
\begin{aligned}
&\frac{\left(Bx^2+Ax+q\right)_k}{\|x\|_{\infty}^{2}} =\frac{\left(Bx^2+Ax\right)_k}{\|x\|_{\infty}^{2}}+\frac{q_k}{\|x\|_{\infty}^{2}} \\
& \geq \frac{\left(Bx^2+Ax\right)_k}{\|x\|_{\infty}^{2}} 
 =\sum_{i_2, i_3}^n B_{k i_2 i_3} \frac{x_{i_2} x_{i_3}}{\|x\|_{\infty}^{2}} + \sum_{i_2} A_{k i_2} \frac{x_{i_2}}{\|x\|_{\infty}^2}\\
& \geq B_{k k k}-r_k(B)_{-} + \frac{A_{k k}-r_k(A)_{-}}{\|x\|_{\infty}}\\
& >0 .
\end{aligned}
\end{equation}
Then we have
$
x_k\left(Bx^2+Ax+q\right)_k>0,
$
which contradicts the fact that $x$ solves \eqref{eq:qcp}. Thus, it holds that $k \in$ $\Omega(q)$. Moreover, we have $\left(Bx^2+Ax\right)_k+q_k=0$ because $x_k>0$ and $x_k\left(Bx^2+Ax+q\right)_k=0$.
\end{proof}

Let $s(B)_k=B_{k k k}+r_k(B)_{+}$ and $s(A)_k=A_{kk}+r_k(A)_{+}$. Furthermore, define $\delta(B)_k= B_{k k k}-r_k(B)_{-}$ and $\delta(A)_k= A_{k k}-r_k(A)_{-}$.

\begin{thm}\label{thm:qcpbound}
    Let $B \in \mathbb{R}^{[3, n]}$ and $A\in\mathbb{R}^{[2, n]}$ be generalized row strictly diagonally dominant tensors with all positive diagonal entries, $q \in \mathbb{R}^n \backslash \mathbb{R}_{+}^n$ be any given vector. If $x \in \mathbb{R}^n$ is a solution of \eqref{eq:qcp}, then we have that $\min_{k\in \Omega(q)}\frac{-s(A)_k+\sqrt{s(A)_k^2-4s(B)_k q_k}}{2s(B)_k}\leq \|x\|_{\infty} \leq \max_{k\in \Omega(q)}\frac{-\delta(A)_k+\sqrt{\delta(A)_k^2-4\delta(B)_k q_k}}{2\delta(B)_k}$.
\end{thm}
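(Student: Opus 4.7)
The plan is to apply Lemma \ref{lem:qcp1} to localize the component achieving $\|x\|_\infty$, reduce the identity $(Bx^2+Ax)_k = -q_k$ to scalar quadratic inequalities in $x_k$, and then read off the bounds from the quadratic formula. First, I would invoke Lemma \ref{lem:qcp1} to obtain an index $k^\star$ with $x_{k^\star}=\|x\|_\infty$, $k^\star\in\Omega(q)$, and $(Bx^2+Ax)_{k^\star}+q_{k^\star}=0$. Since $x\ge\mathbf{0}$ and $x_{i}\le x_{k^\star}$ for all $i$, every monomial $B_{k^\star i_2 i_3}x_{i_2}x_{i_3}$ and $A_{k^\star i_2}x_{i_2}$ can be sandwiched between its worst-case values. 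Splitting the sum into nonnegative and negative off-diagonal terms (exactly as in the definitions of $r_k(\cdot)_+$ and $r_k(\cdot)_-$) yields the two-sided estimate
\begin{equation*}
\delta(B)_{k^\star}x_{k^\star}^{2}+\delta(A)_{k^\star}x_{k^\star}\;\le\;(Bx^2+Ax)_{k^\star}\;\le\;s(B)_{k^\star}x_{k^\star}^{2}+s(A)_{k^\star}x_{k^\star}.
\end{equation*}

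For the upper bound, I use the left inequality together with $(Bx^2+Ax)_{k^\star}=-q_{k^\star}$ to deduce $\delta(B)_{k^\star}x_{k^\star}^{2}+\delta(A)_{k^\star}x_{k^\star}+q_{k^\star}\le 0$. Because $A$ and $B$ are generalized row strictly diagonally dominant with positive diagonals, $\delta(B)_{k^\star}>0$ and $\delta(A)_{k^\star}>0$; since $k^\star\in\Omega(q)$ we also have $q_{k^\star}<0$, which guarantees that the discriminant $\delta(A)_{k^\star}^{2}-4\delta(B)_{k^\star}q_{k^\star}$ is strictly positive and that the two real roots of the quadratic have opposite signs. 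Thus $x_{k^\star}$ lies below the unique positive root, giving
\begin{equation*}
\|x\|_{\infty}=x_{k^\star}\;\le\;\frac{-\delta(A)_{k^\star}+\sqrt{\delta(A)_{k^\star}^{2}-4\delta(B)_{k^\star}q_{k^\star}}}{2\delta(B)_{k^\star}}\;\le\;\max_{k\in\Omega(q)}\frac{-\delta(A)_{k}+\sqrt{\delta(A)_{k}^{2}-4\delta(B)_{k}q_{k}}}{2\delta(B)_{k}}.
\end{equation*}

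For the lower bound, I repeat the same trick on the right-hand inequality to get $s(B)_{k^\star}x_{k^\star}^{2}+s(A)_{k^\star}x_{k^\star}+q_{k^\star}\ge 0$, again with positive leading coefficient, positive linear coefficient, and negative constant term, hence two real roots of opposite signs. Therefore $x_{k^\star}$ must exceed the positive root, and since we do not know a priori which $k^\star$ is realized, bounding from below by the minimum over $\Omega(q)$ yields
\begin{equation*}
\|x\|_{\infty}\;\ge\;\frac{-s(A)_{k^\star}+\sqrt{s(A)_{k^\star}^{2}-4s(B)_{k^\star}q_{k^\star}}}{2s(B)_{k^\star}}\;\ge\;\min_{k\in\Omega(q)}\frac{-s(A)_{k}+\sqrt{s(A)_{k}^{2}-4s(B)_{k}q_{k}}}{2s(B)_{k}}.
\end{equation*}

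The main obstacle is not technical difficulty but bookkeeping: one must correctly pair the positive-part and negative-part sums with the correct direction of inequality, verify that $\delta(B)_k$, $\delta(A)_k$, $s(B)_k$, $s(A)_k$ are all strictly positive under the diagonal dominance hypothesis, and justify that the ``positive root'' of each quadratic is the relevant one (which follows from $q_k<0$ forcing the product of the roots to be negative). Everything else is a direct application of Lemma \ref{lem:qcp1} and the quadratic formula.
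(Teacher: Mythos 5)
Your proof is correct and follows essentially the same route as the paper: both invoke Lemma \ref{lem:qcp1} to place the maximizing index in $\Omega(q)$ with $(Bx^2+Ax)_k+q_k=0$, sandwich that component between $\delta(\cdot)_k$ and $s(\cdot)_k$ expressions using $0\le x_i\le \|x\|_\infty$, and solve the resulting scalar quadratic inequalities. The only (cosmetic) difference is that the paper first divides through by $\|x\|_\infty^2$ before bounding, whereas you work with $x_{k^\star}$ directly; your write-up is if anything slightly more explicit about why the positive root is the relevant one.
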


\begin{proof}
The proof is similar to \cite{xu2019estimations}.
    According to Lemma \ref{lem:qcp1}, there exists $k \in \Omega(q)$ such that $x_k=\|x\|_{\infty}>0$ and $\left(Bx^2+Ax\right)_k+q_k=0$. Thus, we have 
    \begin{equation}
\begin{aligned}
0 & <\frac{-q_k}{\|x\|_{\infty}^{2}}
 =\left(B\left(\frac{x}{\|x\|_{\infty}}\right)^{2}+\frac{A\left(\frac{x}{\|x\|_{\infty}}\right)}{\|x\|_{\infty}}\right)_k \\
& =\sum_{i_2, i_3}^n B_{k i_2 i_3} \frac{x_{i_2} x_{i_3}}{\|x\|_{\infty}^{2}}+\frac{1}{\|x\|_{\infty}}\sum_{i_2} A_{ki_2} \frac{x_{i_2} x_{i_3}}{\|x\|_{\infty}}\\
& \leq B_{k k k}+r_k(B)_{+} + \frac{A_{k k}+r_k(A)_{+}}{\|x\|_{\infty}}=s(B)_k+\frac{s(A)_k}{\|x\|_{\infty}}.
\end{aligned}
\end{equation}

Similarly, we have that $\frac{-q_k}{\|x\|_{\infty}^{2}}\geq \delta(B)_k+\frac{\delta(A)_k}{\|x\|_{\infty}}$. Notice that $s(B)_k,s(A)_k,\delta(B)_k,\delta(A)_k$ are all positive. Finally, we have $\min_{k\in \Omega(q)}\frac{-s(A)_k+\sqrt{s(A)_k^2-4s(B)_k q_k}}{2s(B)_k}\leq \|x\|_{\infty} \leq \max_{k\in \Omega(q)}\frac{-\delta(A)_k+\sqrt{\delta(A)_k^2-4\delta(B)_k q_k}}{2\delta(B)_k}$.

\end{proof}

\section{General Higher-order Lotka-Volterra model}

A purely competitive higher-order Lotka-Volterra model is proposed in \cite{letten2019mechanistic}.
%In \cite{letten2019mechanistic}, they propose a Higher-order Lotka–Volterra competition model and take only competition among species into account. 
Here, we consider both, competition and cooperation, among species.
Inspired by \cite{letten2019mechanistic}, we study here a general higher-order Lotka-Volterra model:
\begin{equation}\label{eq:lvg}
\dot{x_{i}}=r_{i}x_{i}\left(1+\sum^{n}_{j = 1}a_{ij}x_{j}+\sum^{n}_{j,k = 1}b_{ijk}x_{j}x_{k}\right),
\end{equation}
 where $x_{i}$ denotes the density of species $i$; $r_{i}$ is the per-capita (per-species) intrinsic rate of increase of the
focal species, which is positive; $a_{ij}$ is the first-order coefficient, denoting $j$'s additive influence on $i$, and $b_{ijk}$ is the second-order (higher-order) coefficient, denoting $j$ and $k$'s joint non-additive influence on $i$, or alternatively $j$'s influence on $i$ correlated with a co-occurring third species $k$. We assume that $a_{ii}\leq 0$ and $b_{iii}\leq 0$, which describes competition within a species.

\begin{remark}
From the perspective of network science, if we ignore higher-order terms, then \eqref{eq:lvg} is a model on a graph. One can construct the corresponding digraph and label all the species as nodes in the digraph. Then, one links node $j$ to $i$ with the weight $a_{ij}$. If we further consider higher-order terms, the model is then based on a hypergraph. Simply speaking, a hypergraph is a higher-order network where one hyperedge can have multiple tails and heads. In our model \eqref{eq:lvg}, we have the last term for three-body interactions. For example, $b_{ijk}$ denotes $j$ and $k$'s joint influence on $i$. So one can create a hyperedge with the weight $b_{ijk}$, where $j$ and $k$ are the heads and $i$ is the tail. For a more detailed explanation of the concept of a hypergraph and the dynamical systems on it, interested readers may refer to \cite{bick2021higher}. For the definition of a directed hyperedge, one may refer to \cite{gallo1993directed}.
\end{remark}

\begin{lemma}
 The system \eqref{eq:lvg} is a positive system.
\end{lemma}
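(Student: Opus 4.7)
The plan is to exploit the structural fact that each right-hand side of \eqref{eq:lvg} carries $x_i$ as an explicit factor, so each coordinate hyperplane $\{x_i=0\}$ is invariant under the flow. First I would set
\begin{equation*}
g_i(x) := r_i\left(1+\sum_{j=1}^{n} a_{ij}x_j + \sum_{j,k=1}^{n} b_{ijk}x_j x_k\right),
\end{equation*}
so that the dynamics read $\dot x_i = g_i(x)\,x_i$. Since $g_i$ is a polynomial, the vector field is locally Lipschitz, so for every initial condition $x(0)\in\mathbb{R}^n_+$ there exists a unique solution on some maximal forward interval $[0,T)$, and along this trajectory $t\mapsto g_i(x(t))$ is continuous and hence locally integrable.

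Next I would freeze the trajectory $x(\cdot)$ and view the $i$-th equation as a scalar linear time-varying ODE in the unknown $x_i$ with coefficient $g_i(x(t))$. The variation-of-constants formula then yields
\begin{equation*}
x_i(t) = x_i(0)\exp\!\left(\int_0^t g_i(x(s))\,ds\right),\qquad t\in[0,T),
\end{equation*}
which is manifestly nonnegative whenever $x_i(0)\geq 0$, and strictly positive whenever $x_i(0)>0$. Applying this component by component, I conclude that $x(t)\in\mathbb{R}^n_+$ for all $t\in[0,T)$, which is precisely the definition of a positive system. As a side observation worth including in the proof, the same representation shows that the open positive orthant $\mathbb{R}^n_{++}$ and each boundary face of $\mathbb{R}^n_+$ (namely $\{x : x_i=0\text{ for }i\in I\}$ for any $I\subseteq\{1,\dots,n\}$) are forward invariant.

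There is essentially no substantive obstacle here; the argument is an exercise in unfolding the factored form of the vector field. The only mild subtlety is that the polynomial right-hand side is not globally Lipschitz, so the solution could in principle blow up in finite time; this does not affect positivity, since the exponential representation is valid on the whole maximal interval $[0,T)$. I would therefore present the proof in the compact form above rather than invoking a more general tangency/Nagumo argument, keeping the exposition self-contained.
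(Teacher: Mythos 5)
Your proof is correct and rests on the same structural observation as the paper's — that $\dot x_i$ carries $x_i$ as an explicit factor — but your execution is more careful. The paper's argument is a one-liner: if $x_i=0$ then $\dot x_i=0$, so the boundary face is invariant and trajectories cannot cross it. Strictly speaking, that inference relies implicitly on uniqueness of solutions (a trajectory starting with $x_i(0)>0$ cannot reach $\{x_i=0\}$ in finite time without colliding with the solution that stays on that face), whereas your variation-of-constants representation
\begin{equation*}
x_i(t)=x_i(0)\exp\!\left(\int_0^t g_i(x(s))\,ds\right)
\end{equation*}
makes the sign preservation explicit on the whole maximal interval of existence and simultaneously yields the invariance of $\mathbb{R}^n_{++}$ and of every boundary face. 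Your remark that possible finite-time blow-up does not interfere with the positivity claim is also a point the paper glosses over (and is relevant, since the cooperative versions of the model can indeed diverge). In short: same underlying idea, but your write-up closes the small logical gaps the paper leaves to the reader.
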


\begin{proof}
    It is sufficient to notice that if $x_i=0$, $\dot{x}_i=0$. Once the trajectory enters the boundary, since its derivative becomes zero, it will never leave the first orthant.
\end{proof}

System \eqref{eq:lvg} can be rewritten in tensor form as:
\begin{equation}\label{eq:lvgt}
\dot{x}=R\Dg(x)\left(\mathbf{1}+Ax+Bx^2\right),
\end{equation}
where $R=\Dg(r_1,\cdots,r_n)$, $A=(a_{ij})_{n\times n}$ and $B=(b_{ijk})_{n\times n \times n}$. It is worth mentioning that the diagonal entries $a_{ii}$ and $b_{iii}$ denote intra-specific interaction, while the off-diagonal $a_{ij}$ and $b_{ijk}$ denote inter-specific interaction.

It is easy to see that the origin is always an equilibrium of the system \eqref{eq:lvg}. Any other equilibrium point, if it exists, is given by the non-homogeneous polynomial equation system $\mathbf{1}+Ax+Bx^2=\mathbf{0}$.% in the tensor form or its subsystem which yields boundary equilibria.

From the definition of the polynomial complementarity problem, we have the following result.
\begin{cor}\label{cor:2}
    The set of all equilibria $x^*$ of system \eqref{eq:lvg} satisfying
    \begin{equation}\label{eq:boundary1}
        1+\sum_{j}a_{ij}x^*_j+\sum_{j,k}b_{ijk}x^*_j x^*_k\leq 0
    \end{equation}
    corresponds to the solution set $\operatorname{SOL}\left(F, -\mathbf{1}\right)$ of the polynomial complementarity problem \eqref{eq:pcp} with $F=-Ax-Bx^2$;
    while the set of all equilibria $x^*$ of system \eqref{eq:lvg} satisfying 
    \begin{equation}\label{eq:boundary2}
        1+\sum_{j}a_{ij}x^*_j+\sum_{j,k}b_{ijk}x^*_j x^*_k\geq 0
    \end{equation}
     corresponds to the solution set $\operatorname{SOL}\left(-F, \mathbf{1}\right)$ of the polynomial complementarity problem \eqref{eq:pcp}.
\end{cor}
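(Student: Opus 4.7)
The plan is essentially a definition-unpacking that matches the three defining conditions of the polynomial complementarity problem \eqref{eq:pcp} with the equilibrium condition of \eqref{eq:lvgt} restricted by the sign constraint \eqref{eq:boundary1} (respectively \eqref{eq:boundary2}). I would prove both inclusions explicitly, and the main task is simply to observe that the complementarity (inner-product) condition is automatically implied by the per-species equilibrium identity.

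First I would recall that, since $r_i>0$, an equilibrium $x^*$ of \eqref{eq:lvgt} satisfies, for every $i$, the scalar identity $x^*_i\bigl(1+(Ax^*)_i+(Bx^{*2})_i\bigr)=0$. By the positive-system property proved earlier, any equilibrium reached from a physically meaningful initial condition lies in $\mathbb{R}^n_+$, so $x^*\geq\mathbf{0}$ is automatic.

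For the first statement, take $F(x)=-Ax-Bx^2$ and $q=-\mathbf{1}$. If $x^*$ is an equilibrium satisfying \eqref{eq:boundary1}, then $x^*\geq\mathbf{0}$, while \eqref{eq:boundary1} rewritten reads $F(x^*)+q=-\bigl(\mathbf{1}+Ax^*+Bx^{*2}\bigr)\geq\mathbf{0}$, and summing the per-species identity over $i$ yields $x^{*\top}(F(x^*)+q)=-x^{*\top}(\mathbf{1}+Ax^*+Bx^{*2})=0$; hence $x^*\in\operatorname{SOL}(F,-\mathbf{1})$. Conversely, if $x^*\in\operatorname{SOL}(F,-\mathbf{1})$, then $x^*\geq\mathbf{0}$, \eqref{eq:boundary1} holds entrywise, and the complementarity condition combined with $x^*\geq\mathbf{0}$ and $\mathbf{1}+Ax^*+Bx^{*2}\leq\mathbf{0}$ forces $x^*_i(1+(Ax^*)_i+(Bx^{*2})_i)=0$ for every $i$, so $x^*$ is an equilibrium of \eqref{eq:lvg}. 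The second statement follows from the same argument with the signs flipped: choose $-F(x)=Ax+Bx^2$ and $q=\mathbf{1}$, so that condition \eqref{eq:boundary2} becomes $-F(x^*)+q\geq\mathbf{0}$ and the rest of the verification is verbatim.

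There is no real obstacle here; the only subtlety worth flagging is why the single inner-product equation $x^{*\top}(\mathbf{1}+Ax^*+Bx^{*2})=0$ is equivalent to the $n$ componentwise equations $x^*_i(1+(Ax^*)_i+(Bx^{*2})_i)=0$. This equivalence uses exactly the sign restriction \eqref{eq:boundary1} (or \eqref{eq:boundary2}) together with $x^*\geq\mathbf{0}$: a sum of nonpositive (resp.\ nonnegative) terms vanishes only if each term vanishes. I would state this observation explicitly, as it is the bridge that makes the correspondence a set-equality rather than a mere inclusion.
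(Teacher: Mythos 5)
Your proposal is correct and matches the paper's (unstated) argument: the paper simply declares the proof ``straightforward by observation,'' and what you have written is exactly that observation carried out in full, including the one genuine subtlety --- that the scalar complementarity condition $x^{*\top}(F(x^*)+q)=0$ splits into the $n$ componentwise equations only because the sign constraint \eqref{eq:boundary1} (resp.\ \eqref{eq:boundary2}) together with $x^*\geq\mathbf{0}$ makes every summand have a fixed sign. The only small imprecision is your appeal to the positive-system lemma to get $x^*\geq\mathbf{0}$; that lemma concerns forward invariance of trajectories, not equilibria, and the cleaner justification is simply that the state space of the population model is $\mathbb{R}^n_+$, so ``equilibrium'' in the corollary already means an equilibrium in the nonnegative orthant.
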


\begin{proof}
The proof is straightforward and can be done by observation.
\end{proof}

\begin{remark}
Notice that the solution of the polynomial complementarity problem satisfies $1+\sum_{j}a_{ij}x^*_j+\sum_{j,k}b_{ijk}x^*_j x^*_k<0$ if $x_i^*=0$, and $1+\sum_{j}a_{ij}x^*_j+\sum_{j,k}b_{ijk}x^*_j x^*_k=0$ if $x_i^*>0$.
In this paper, we will show that \eqref{eq:boundary1} is related to the stability of some boundary equilibria (equilibria on the boundary of the first orthant). Thus, the polynomial complementarity problem will help us to find potential stable equilibria. Moreover, by further utilizing the result of Lemmas \ref{lem:pcp}-\ref{lem:pcp3}, we can know when the equilibrium set satisfying \eqref{eq:boundary1} is non-empty and bounded, or when it is a nonempty singleton. It is known that a classical Lotka-Volterra model is related to the linear complementarity problem \cite{takeuchi1996global,takeuchi1980existence}. Here, we show the relationship between PCP and the higher-order Lotka-Volterra model.
\end{remark} 

Generally, we have the following result regarding the origin.

\begin{thm}\label{thm:origin}
The origin is always an unstable equilibrium of \eqref{eq:lvg}.
\end{thm}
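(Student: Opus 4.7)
The plan is to establish instability via Lyapunov's indirect method, that is, to linearize \eqref{eq:lvg} at the origin and show that the resulting Jacobian has spectrum contained strictly in the right half-plane. Since \eqref{eq:lvg} is polynomial and each component $\dot{x}_i$ carries an explicit factor of $x_i$, the linearization should be particularly clean: every term that survives differentiation at $x=0$ comes from differentiating that outer $x_i$ and evaluating the bracket $1+\sum_j a_{ij}x_j+\sum_{j,k}b_{ijk}x_j x_k$ at the origin, where it equals $1$.

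Concretely, I would compute
\begin{equation*}
\frac{\partial \dot{x}_i}{\partial x_\ell} = r_i \delta_{i\ell}\!\left(1+\sum_j a_{ij}x_j+\sum_{j,k}b_{ijk}x_j x_k\right) + r_i x_i \!\left(a_{i\ell} + \sum_k b_{i\ell k}x_k + \sum_j b_{ij\ell}x_j\right),
\end{equation*}
and then substitute $x=0$. The second group of terms vanishes because of the prefactor $x_i$, and the bracket in the first term equals $1$. Hence the Jacobian at the origin reduces to
\begin{equation*}
J(0) = \Dg(r_1,\ldots,r_n) = R.
\end{equation*}

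Because $r_i>0$ for every $i$ by assumption, all $n$ eigenvalues of $J(0)$ are real and strictly positive. By the standard linearization (Hartman--Grobman) theorem, the origin is therefore an unstable equilibrium of \eqref{eq:lvg}; in fact it is a source, and one can further note that the positive cone $\mathbb{R}^n_+$ is forward-invariant (as already recorded in the preceding lemma), so every trajectory starting in a punctured neighborhood of $0$ in $\mathbb{R}^n_+$ is pushed away from it along the coordinate axes. There is no real obstacle here; the only thing to be careful about is writing out the Jacobian correctly so that the higher-order interactions visibly drop out at $x=0$, making the argument independent of the specific values of $a_{ij}$ and $b_{ijk}$.
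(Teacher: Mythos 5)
Your proposal is correct and follows exactly the paper's argument: linearize at the origin, observe that the Jacobian reduces to $\Dg(r_1,\ldots,r_n)$ because the factor $x_i$ kills all interaction terms, and conclude instability from $r_i>0$. You simply spell out the computation that the paper states tersely.
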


\begin{proof}
    It is straightforward to see that $\mathbf{0}$ is always a solution of $R\Dg(x)\left(\mathbf{1}+Ax+Bx^2\right)$.
For the equilibrium point $\mathbf{0}$, the corresponding Jacobian matrix is $\mathbf{J}_{(\mathbf{0})}=\Dg((r_1,\cdots,r_n)^\top)$.
Since $r_i>0$, the equilibrium point $\mathbf{0}_{n}$ is unstable.
\end{proof}

System \eqref{eq:lvg} can be written as 
\begin{equation}
    \dot{x}=\Dg(x)\left(R\mathbf{1}+RAx+RBx^2\right)= \Dg(x) F(x),
\end{equation}
where $F(x)=(F_1(x),\cdots,F_n(x))^\top$ and $F_i(x)=r_{i}+r_{i}\sum^{n}_{j = 1}a_{ij}x_{j}+r_{i}\sum^{n}_{j,k = 1}b_{ijk}x_{j}x_{k}$. We have the following result as a corollary of \cite[Theorem 5]{ye2021applications}. The definition and the concept are consistent with the paper \cite{ye2021applications}. For a set $\mathcal{M}$ with boundary, we denote the boundary as $\partial \mathcal{M}$, and the interior $\operatorname{Int}(\mathcal{M}) \triangleq \mathcal{M} \backslash \partial \mathcal{M}$. The concept of ``pointing inward'' is defined as in \cite[Definition 1]{ye2021applications}.

\begin{cor}\label{cor:poincare}
    Consider the system \eqref{eq:lvg}. Suppose that there exist constants $\hat{R}, \,\epsilon>0$  such that $\mathcal{W} \triangleq\left\{x:\|x\| \leq \hat{R},\, x_i \geq \epsilon \; \forall i=1, \ldots, n\right\}$ is a positively invariant set and $F(x)$ points inward at every $x \in \partial \mathcal{W}$. Then, there exists a feasible equilibrium in $\operatorname{Int}(\mathcal{W})$. Suppose further that for any equilibrium point $\bar{x} \in \mathcal{W}$:
\begin{equation}\label{eq:con1}
\begin{aligned}
\frac{\partial F_i(\bar{x})}{\partial x_i} & \leq l_{i i}<0 \\
\left|\frac{\partial F_i(\bar{x})}{\partial x_j}\right| & \leq l_{i j}, \forall i \neq j
\end{aligned}
\end{equation}
for some constant matrix $L=[l_{ij}]$, and all the leading principal minors of $-L$ are positive. Then, there is a unique feasible equilibrium $x^* \in \operatorname{Int}(\mathcal{W})$, and $x^*$ is locally exponentially stable. If the condition \eqref{eq:con1} holds for all $x \in \mathcal{W}$, then $x^*$ is globally 
exponentially stable in $\operatorname{Int}(\mathcal{W})$. 
\end{cor}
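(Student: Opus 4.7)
The plan is to reduce the claim directly to \cite[Theorem 5]{ye2021applications} by verifying that the higher-order Lotka--Volterra system fits its abstract framework. First I would observe that \eqref{eq:lvg} already has the factored form $\dot{x}=\Dg(x) F(x)$ required there, with $F$ a smooth polynomial whose first partial derivatives are precisely the quantities controlled by \eqref{eq:con1}. The positive invariance of $\mathcal{W}$ and the inward-pointing property of $F$ on $\partial\mathcal{W}$ are part of our hypotheses. Since every $x\in\mathcal{W}$ satisfies $x_i \geq \epsilon > 0$, the matrix $\Dg(x)$ is nonsingular on $\mathcal{W}$, so equilibria of \eqref{eq:lvg} in $\operatorname{Int}(\mathcal{W})$ coincide with zeros of $F$ in $\operatorname{Int}(\mathcal{W})$.

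For the existence statement I would then appeal to a Poincar\'e/Brouwer-type degree argument: an inward-pointing continuous vector field on a compact set homeomorphic to a closed ball must vanish at some interior point. Applied to $F$ on $\mathcal{W}$ this yields a feasible equilibrium $x^*\in\operatorname{Int}(\mathcal{W})$.

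For uniqueness and stability, at any equilibrium $\bar{x}\in\operatorname{Int}(\mathcal{W})$ the Jacobian of \eqref{eq:lvg} equals $\Dg(\bar{x})\,\nabla F(\bar{x})$, since $F(\bar{x})=0$ kills the product-rule contribution. The bounds \eqref{eq:con1} make $\nabla F(\bar{x})$ entrywise dominated by $L$, whose diagonal is strictly negative and whose off-diagonal magnitudes are captured by $l_{ij}$. The assumption that every leading principal minor of $-L$ is positive then makes $-L$ a nonsingular $\mathcal{M}$-matrix, which by standard $\mathcal{M}$-matrix theory admits a positive diagonal $D$ rendering $D(-L)+(-L)^{\top}D$ positive definite. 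A Volterra-type quadratic Lyapunov function $V(x)=\sum_i d_i\bigl(x_i-\bar{x}_i - \bar{x}_i \ln(x_i/\bar{x}_i)\bigr)$, evaluated along \eqref{eq:lvg}, yields $\dot V \leq -\alpha\|x-\bar{x}\|^2$ for some $\alpha>0$ in a neighborhood of $\bar{x}$, giving local exponential stability; if \eqref{eq:con1} holds throughout $\mathcal{W}$, the same Lyapunov function is strictly decreasing on all of $\operatorname{Int}(\mathcal{W})\setminus\{\bar{x}\}$, simultaneously forcing uniqueness and delivering global exponential stability.

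The principal obstacle will be the matrix-analytic step: translating positivity of the leading principal minors of $-L$ into a positive diagonal scaling $D$ that makes $DL+L^{\top}D$ negative definite, and then ensuring that the scaling also neutralizes the $\Dg(\bar{x})$ factor that sits in front of $\nabla F(\bar{x})$. This is standard but must be handled carefully because $L$ encodes \emph{absolute values} of off-diagonal Jacobian entries, so one is effectively working with the comparison matrix of $\nabla F$; the rest of the argument is routine once this translation is in place.
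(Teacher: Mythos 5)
Your proposal is correct and follows essentially the same route as the paper, which states this result as a direct instance of \cite[Theorem 5]{ye2021applications} for the factored system $\dot{x}=\Dg(x)F(x)$ and offers no further proof; your verification that \eqref{eq:lvg} fits that framework, together with the sketch of the internals (inward-pointing field on the compact convex set $\mathcal{W}$ for existence, nonsingular $\mathcal{M}$-matrix diagonal stability of $-L$ plus the Volterra Lyapunov function for uniqueness and stability), matches how that cited theorem is established. The one "obstacle" you flag is actually benign: with $V(x)=\sum_i d_i\bigl(x_i-\bar{x}_i-\bar{x}_i\ln(x_i/\bar{x}_i)\bigr)$ one gets $\dot{V}=\sum_i d_i (x_i-\bar{x}_i)F_i(x)$, so the $\Dg(x)$ factor cancels exactly and only the diagonal scaling for $L$ remains to be chosen.
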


To utilize Corollary \ref{cor:poincare}, it is required that every orbit of  \eqref{eq:lvg} has an upper bound. However, such an upper bound may not always exist.

As a corollary of \cite[Theorem 4]{goh1978sector}, we have the following:
\begin{cor}\label{cor:sector}
Consider the system \eqref{eq:lvg}. Suppose that there exist constants $\hat{R}, \epsilon>$ 0 such that $\mathcal{W} \triangleq\left\{x:\|x\| \leq \hat{R}, x_i \geq \epsilon \quad \forall i=1, \ldots, n\right\}$ is a positive invariant set.
    Suppose there exists a constant matrix $G=[g_{ij}]$ such that in $\mathbb{R}^n_+\cap \mathcal{W}$
\begin{equation}\label{eq:jacbound}
    \begin{gathered}
\frac{\partial F_i(x)}{\partial x_i} \leqslant g_{i i}<0, \quad i=1,2, \ldots, m, \\
\left|\frac{\partial F_i(x)}{\partial x_j}\right| \leqslant g_{i j}, \quad i \neq j .
\end{gathered}
\end{equation}
If all the leading principal minors of $-G$ are positive, then the positive equilibrium $x^*$, if it exists, is globally asymptotically stable.
\end{cor}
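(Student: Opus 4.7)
The plan is to recognize that the statement is cast as a corollary of Goh's sector stability theorem \cite{goh1978sector}, so the task reduces to (i) checking that \eqref{eq:lvg} fits Goh's canonical form, (ii) exhibiting the Lyapunov function and sketching the derivative estimate that underlies his result, and (iii) invoking the M-matrix structure of $-G$ to conclude. Since positive invariance of $\mathcal{W}$ is already assumed in the statement, and the sector condition \eqref{eq:jacbound} on $\partial F_i / \partial x_j$ is literally Goh's hypothesis, the bulk of the work is transparent once the right Lyapunov function is written down.

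Concretely, I would begin by rewriting the system as $\dot{x} = \Dg(x) F(x)$ as already done in the excerpt, fixing an assumed positive equilibrium $x^* \in \mathcal{W}$ so that $F(x^*) = \mathbf{0}$, and introducing the weighted Volterra-type Lyapunov candidate
\begin{equation*}
V(x) = \sum_{i=1}^n d_i \Bigl( x_i - x_i^* - x_i^* \ln(x_i / x_i^*) \Bigr),
\end{equation*}
with positive weights $d_1, \dots, d_n$ to be chosen. Because $V$ is strictly convex on $\mathbb{R}^n_{++}$ with unique minimum at $x^*$, it is a legitimate Lyapunov candidate on $\operatorname{Int}(\mathcal W)$. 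The mean value theorem applied componentwise gives $F(x) = J(\xi)(x - x^*)$ for some $\xi$ on the segment from $x^*$ to $x$, where $J$ is the Jacobian of $F$. A direct computation then yields
\begin{equation*}
\dot V = \sum_{i=1}^n d_i (x_i - x_i^*) F_i(x) = (x - x^*)^\top D\, J(\xi)(x - x^*),
\end{equation*}
with $D = \Dg(d_1, \dots, d_n)$, so the problem is reduced to showing that the quadratic form on the right is strictly negative for $x \neq x^*$ in $\mathcal{W}$.

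Here is where the sector bound \eqref{eq:jacbound} is used. Bounding diagonal entries by $g_{ii} < 0$ and off-diagonal entries in absolute value by $g_{ij}$ produces the majorization
\begin{equation*}
(x - x^*)^\top D\, J(\xi)(x - x^*) \;\leq\; |x - x^*|^\top \tfrac{1}{2}(D G + G^\top D)\, |x - x^*|,
\end{equation*}
where $|\cdot|$ denotes entrywise absolute value. Now $-G$ has positive diagonal entries and nonpositive off-diagonal entries, so $-G$ is a Z-matrix; combined with the assumption that all leading principal minors of $-G$ are positive, this makes $-G$ a nonsingular M-matrix. A standard result (the diagonal Lyapunov property of M-matrices) then supplies positive weights $d_1, \dots, d_n$ for which $DG + G^\top D$ is negative definite, so $\dot V < 0$ on $\mathcal{W} \setminus \{x^*\}$, and global asymptotic stability of $x^*$ on $\mathcal{W}$ follows by standard Lyapunov/LaSalle arguments.

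The main obstacle is the absolute-value step in the estimate: $J(\xi)$ has signed entries, while $G$ only bounds their magnitudes, so one cannot directly deduce negative definiteness of $D J(\xi) + J(\xi)^\top D$ from that of $DG + G^\top D$ without first passing to $|x - x^*|$. This componentwise majorization is precisely the content of Goh's sector construction, and once it is in place, the M-matrix criterion on $-G$ handles the rest mechanically. Everything else—existence of $x^*$, positive invariance of $\mathcal W$, and the form of $V$—is provided by the hypotheses or is routine.
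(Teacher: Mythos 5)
Your proposal is correct: the paper offers no proof of this corollary, simply invoking \cite[Theorem 4]{goh1978sector}, and your argument --- the weighted Volterra Lyapunov function $V(x)=\sum_i d_i\bigl(x_i-x_i^*-x_i^*\ln(x_i/x_i^*)\bigr)$, the componentwise mean-value majorization by $|x-x^*|^\top\tfrac12(DG+G^\top D)|x-x^*|$, and the diagonal stability of the nonsingular M-matrix $-G$ --- is precisely the content of the cited theorem. The only points worth making explicit are that $\mathcal{W}$ is convex (so the mean-value segment stays where \eqref{eq:jacbound} holds) and that the conclusion is global asymptotic stability relative to $\mathcal{W}$, both of which your argument already accommodates.
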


Corollary \ref{cor:sector} is similar to Corollary \ref{cor:poincare}, but Corollary \ref{cor:sector} doesn't require that $F(x)$ points inward at every $x \in \partial \mathcal{W}$ and Corollary \ref{cor:sector} says nothing about the existence.

More precisely, we are able to provide the following results, which is a further corollary:
\begin{cor}\label{cor:gsglv}
Consider the system \eqref{eq:lvg} with $B$ being a non-positive tensor and $a_{ii}<0$ for all $i$. Suppose that there exist constants $\hat{R},\, \epsilon>0$ such that $\mathcal{W} \triangleq\left\{x:\|x\| \leq \hat{R}, x_i \geq \epsilon \quad \forall i=1, \ldots, n\right\}$ is a positive invariant set.
If there exists some coefficients $d_i$, $i=1,\cdots, n$ such that
$-d_i a_{ii}>\sum_{j\neq i}d_j|a_{ij}|+\hat{R}\sum_{j\neq i, 1\leq k\leq n}(d_j|b_{ijk}|+d_j|b_{ikj}|)$ for all $i$, then the positive equilibrium $x^*$, if it exists, is unique and globally asymptotically stable.
\end{cor}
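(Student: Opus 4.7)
The strategy is to invoke Corollary~\ref{cor:sector} by exhibiting a constant matrix $G=[g_{ij}]$ that majorizes the Jacobian of $F$ on $\mathcal{W}\cap\mathbb{R}^n_+$ in the sense of \eqref{eq:jacbound} and such that $-G$ has all positive leading principal minors. Positive invariance of $\mathcal{W}$ is already assumed in the hypothesis, so nothing else needs to be arranged.

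First I would differentiate $F_i(x)=r_i+r_i\sum_p a_{ip}x_p+r_i\sum_{p,q}b_{ipq}x_p x_q$ to obtain
\begin{align*}
\frac{\partial F_i}{\partial x_i} &= r_i a_{ii}+r_i\sum_{k=1}^{n}(b_{iik}+b_{iki})x_k,\\
\frac{\partial F_i}{\partial x_j} &= r_i a_{ij}+r_i\sum_{k=1}^{n}(b_{ijk}+b_{ikj})x_k,\qquad j\neq i.
\end{align*}
Since $B$ is non-positive and $x_k\ge 0$ on $\mathcal{W}\cap\mathbb{R}^n_+$, the $B$-contribution to $\partial F_i/\partial x_i$ is non-positive, hence $\partial F_i/\partial x_i\le r_i a_{ii}<0$; I set $g_{ii}:=r_i a_{ii}$. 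Using the bound $0\le x_k\le\hat R$ (which holds on $\mathcal{W}\cap\mathbb{R}^n_+$ for any standard monotone norm), a term-by-term estimate yields $|\partial F_i/\partial x_j|\le r_i|a_{ij}|+r_i\hat R\sum_k(|b_{ijk}|+|b_{ikj}|)=:g_{ij}$ for $j\neq i$.

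Next I would verify that $-G$ has all positive leading principal minors. Note that $-G$ is a $Z$-matrix: $-g_{ii}=-r_i a_{ii}>0$ and $-g_{ij}\le 0$ for $j\neq i$. By the classical characterization of nonsingular $\mathcal{M}$-matrices (Berman--Plemmons), a $Z$-matrix has all leading principal minors positive if and only if there exists a vector $d>0$ such that $(-G)d>\mathbf{0}$ componentwise, i.e., $-g_{ii}d_i>\sum_{j\neq i}g_{ij}d_j$ for every $i$. Substituting the expressions for $g_{ii}$ and $g_{ij}$ and dividing by $r_i>0$, this inequality is exactly
\[
-a_{ii}d_i>\sum_{j\neq i}d_j|a_{ij}|+\hat R\sum_{j\neq i}\sum_{k=1}^{n}d_j(|b_{ijk}|+|b_{ikj}|),
\]
which is precisely the hypothesis of the corollary (and forces $d_i>0$ because $a_{ii}<0$).

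Applying Corollary~\ref{cor:sector} then yields uniqueness and global asymptotic stability of the positive equilibrium whenever one exists. The only genuinely non-routine step is the translation between the weighted diagonal-dominance hypothesis and the positivity of the leading principal minors of $-G$; this is carried out via the standard $\mathcal{M}$-matrix characterization cited above, while the rest of the argument is a direct Jacobian bound made possible by the sign assumption $B\le\mathbf{0}$ and the uniform bound $\hat R$ on $\mathcal{W}$.
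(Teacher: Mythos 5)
Your proposal is correct and follows essentially the same route as the paper: bound the Jacobian of $F$ on $\mathcal{W}$ by a constant matrix $G$ with $g_{ii}=r_ia_{ii}$ and $g_{ij}=r_i\bigl(|a_{ij}|+\hat R\sum_k(|b_{ijk}|+|b_{ikj}|)\bigr)$, translate the weighted diagonal-dominance hypothesis into positivity of the leading principal minors of $-G$ (the paper cites Lemma 6 of \cite{ye2021applications} where you cite the standard $\mathcal{M}$-matrix characterization, but it is the same fact), and then apply Corollary~\ref{cor:sector}.
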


\begin{proof}
We just need to show that the conditions in Corollary \ref{cor:sector} are satisfied. More precisely, one needs to show the Jacobian is bounded by a matrix $G$ and all the leading principal minors of $-G$ are positive.
    We have that $\frac{\partial F_i(x)}{\partial x_i}=r_i a_{ii}+r_i \sum_k b_{iki}x_k+r_i \sum_k b_{iik}x_k\leq r_i a_{ii}=G_{ii}<0$. On the other hand, $\left|\frac{\partial F_i(x)}{\partial x_j}\right|\leq r_i \{\sum_{j\neq i}|a_{ij}|+\hat{R}\sum_{j\neq i, 1\leq k\leq n}(|b_{ijk}|+|b_{ikj}|)\}=G_{ij}.$
    The condition $-d_i a_{ii}>\sum_{j\neq i}d_j|a_{ij}|+\hat{R}\sum_{j\neq i, 1\leq k\leq n}(d_j|b_{ijk}|+d_j|b_{ikj}|)$ guarantees that all the leading principal minors of $-G$ are positive, according to the Lemma 6 of \cite{ye2021applications}. Then, we can utilize corollary \ref{cor:sector}.
\end{proof}

\begin{remark}
For the condition $-d_i a_{ii}>\sum_{j\neq i}d_j|a_{ij}|+\hat{R}\sum_{j\neq i, 1\leq k\leq n}(d_j|b_{ijk}|+d_j|b_{ikj}|)$, it means that the diagonal entries of pairwise interaction dominates the whole dynamics.
If all $d_i=1, \forall i$, the matrix $-A$ is diagonally dominant with all positive diagonal elements, which is an $\mathcal{S}-$tensor. If $B$ is nonpositive, then $-B$ is also an $\mathcal{S}-$tensor.
    Thus, Corollary \ref{cor:gsglv} is compatible with Theorem \ref{thm:st}. 
\end{remark}

A slightly more general result is shown by the following.

\begin{cor}\label{cor:gsglv2}
Consider the system \eqref{eq:lvg} with $b_{iii}<0$ and $a_{ii}<0$ for all $i$. Suppose that there exist constants $\hat{R},\, \epsilon>0$ such that $\mathcal{W} \triangleq\left\{x:\|x\| \leq \hat{R}, x_i \geq \epsilon \quad \forall i=1, \ldots, n\right\}$ is a positive invariant set.
If there exists some coefficients $d_i$, $i=1,\cdots, n$ such that
$-d_i (a_{ii}+b_{iii}\epsilon)>\sum_{j\neq i}d_j|a_{ij}|+\hat{R}\sum_{(j,k)\neq (i,i)}(d_j|b_{ijk}|+d_j|b_{ikj}|)$ for all $i$, then the positive equilibrium $x^*$, if it exists, is unique and globally asymptotically stable.
\end{cor}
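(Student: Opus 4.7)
The plan is to mirror the proof of Corollary~\ref{cor:gsglv} and invoke Corollary~\ref{cor:sector}: I will exhibit a constant matrix $G=[g_{ij}]$ satisfying the sector bounds \eqref{eq:jacbound} on $\mathcal{W}$, and then show that the hypothesis forces $-G$ to have all positive leading principal minors. The new difficulty, relative to Corollary~\ref{cor:gsglv}, is that without assuming $B$ non-positive, the cross-terms $r_i\sum_k(b_{iik}+b_{iki})x_k$ appearing in $\partial F_i/\partial x_i$ have no definite sign, so the diagonal can no longer be bounded by $r_i a_{ii}$ alone. The replacement is to use $b_{iii}<0$ together with $x_i\ge\epsilon$ on $\mathcal{W}$ to carve out a strictly negative diagonal contribution, and to absorb the residual $b_{iik},b_{iki}$ terms ($k\neq i$) into the off-diagonal bound via absolute values.

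First I would differentiate $F_i$ to obtain
\begin{equation*}
\frac{\partial F_i}{\partial x_i}=r_i a_{ii}+2r_i b_{iii}x_i+r_i\sum_{k\neq i}(b_{iik}+b_{iki})x_k,
\end{equation*}
\begin{equation*}
\frac{\partial F_i}{\partial x_j}=r_i a_{ij}+r_i\sum_{k}(b_{ijk}+b_{ikj})x_k\qquad (j\neq i).
\end{equation*}
On $\mathcal{W}$, using $x_i\ge\epsilon$, $b_{iii}<0$, $0\le x_k\le\hat{R}$, and the inequality $2b_{iii}\epsilon\le b_{iii}\epsilon$ (which holds because $b_{iii}\epsilon\le 0$), these give the constant sector-bound entries
\begin{equation*}
g_{ii}:=r_i(a_{ii}+b_{iii}\epsilon)+r_i\hat{R}\sum_{k\neq i}(|b_{iik}|+|b_{iki}|),
\end{equation*}
\begin{equation*}
g_{ij}:=r_i|a_{ij}|+r_i\hat{R}\sum_{k}(|b_{ijk}|+|b_{ikj}|),\qquad j\neq i.
\end{equation*}

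To finish, I would verify that $-G$ satisfies the $D$-diagonal-dominance criterion (Lemma~6 of~\cite{ye2021applications}, as used in the proof of Corollary~\ref{cor:gsglv}): there must exist weights $d_i>0$ with $-d_i g_{ii}>\sum_{j\neq i}d_j g_{ij}$. Substituting the formulas for $g_{ii}$ and $g_{ij}$ and dividing through by $r_i>0$, this inequality splits the sum $\sum_{(j,k)\neq(i,i)}d_j(|b_{ijk}|+|b_{ikj}|)$ along $j=i,\,k\neq i$ versus $j\neq i$: moving the $j=i,\,k\neq i$ piece to the left combines with $-d_i(a_{ii}+b_{iii}\epsilon)$ to give exactly $-d_i g_{ii}/r_i$, while the $j\neq i$ piece together with $\sum_{j\neq i}d_j|a_{ij}|$ produces exactly $\sum_{j\neq i}d_j g_{ij}/r_i$. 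Thus the rearranged inequality is precisely the hypothesis. Consequently $-G$ has all positive leading principal minors, Corollary~\ref{cor:sector} yields global asymptotic stability of any positive equilibrium in $\mathcal{W}$, and uniqueness follows immediately from global asymptotic stability. The main obstacle is the combinatorial bookkeeping of the $(j,k)$ pairs: one must verify that the pair $(j,k)=(i,i)$, excluded from the hypothesis' sum, is exactly compensated by the admissible slack $2b_{iii}\epsilon\le b_{iii}\epsilon$ used when defining $g_{ii}$, so that no term is double-counted between the diagonal and off-diagonal sums.
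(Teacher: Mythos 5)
Your proposal is correct and follows essentially the same route as the paper: bound the Jacobian of $F$ on $\mathcal{W}$ by a constant matrix $G$, using $x_i\ge\epsilon$ and $b_{iii}<0$ to absorb $2b_{iii}x_i\le b_{iii}\epsilon$ into the diagonal, and then invoke Lemma~6 of~\cite{ye2021applications} and Corollary~\ref{cor:sector}. The only (harmless) difference is that you bound the residual diagonal cross-terms by $\hat R\sum_{k\neq i}(|b_{iik}|+|b_{iki}|)$ whereas the paper keeps only their positive parts; your slightly more conservative $g_{ii}$ is still exactly what the stated hypothesis dominates, and your bookkeeping of the $(j,k)\neq(i,i)$ split is right.
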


\begin{proof}
We just need to show that the conditions in Corollary \ref{cor:sector} are satisfied. More precisely, one needs to show the Jacobian is bounded by a matrix $G$ and all the leading principal minors of $-G$ are positive.
    We have that $\frac{\partial F_i(x)}{\partial x_i}=r_i a_{ii}+r_i \sum_k b_{iki}x_k+r_i \sum_k b_{iik}x_k\leq r_i (a_{ii}+b_{iii}\epsilon+ \sum_{b_{iki}>0} b_{iki} \hat{R} + \sum_{b_{iki}>0} b_{iik} \hat{R} )=G_{ii}<0$. On the other hand, $\left|\frac{\partial F_i(x)}{\partial x_j}\right|\leq r_i \{\sum_{j\neq i}|a_{ij}|+\hat{R}\sum_{j\neq i, 1\leq k\leq n}(|b_{ijk}|+|b_{ikj}|)\}=G_{ij}.$
    The condition $-d_i (a_{ii}+b_{iii}\epsilon)>\sum_{j\neq i}d_j|a_{ij}|+\hat{R}\sum_{(j,k)\neq (i,i)}(d_j|b_{ijk}|+d_j|b_{ikj}|)$ guarantees that all the leading principal minors of $-G$ are positive, according to the Lemma 6 of \cite{ye2021applications}. Then, we can utilize corollary \ref{cor:sector}.
\end{proof}

We can firstly use Theorem \ref{thm:st} to see whether there exists a positive equilibrium and then use Corollary \ref{cor:gsglv} and \ref{cor:gsglv2} to determine the uniqueness and global stability. That is to first use the following result.

% \begin{thm}\label{thm:gsglvt}
% Consider the system \eqref{eq:lvg} with $A,B$ both strictly diagonally dominant tensor with negative diagonal entries. The system has a unique positive equilibrium and it is globally asymptotically stable.
% \end{thm}
\begin{thm}\label{thm:gsglvt}
System \eqref{eq:lvg} has a unique equilibrium point if $-A$ and $-B$ are all $\mathcal{S}-$tensors.
\end{thm}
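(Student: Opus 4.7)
The plan is to reduce the theorem to a direct invocation of Theorem \ref{thm:st}. First I would observe that a positive equilibrium $x^*>\mathbf{0}$ of \eqref{eq:lvg} (equivalently \eqref{eq:lvgt}) must annihilate the parenthesized bracket, because $R\Dg(x^*)$ is a diagonal matrix with strictly positive entries (since $r_i>0$ and $x_i^*>0$) and is therefore invertible on the positive orthant. Hence any such equilibrium satisfies
$$\mathbf{1} + A x^* + B (x^*)^2 = \mathbf{0},$$
which I would rearrange as the non-homogeneous tensor equation
$$(-B)(x^*)^2 + (-A)x^* = \mathbf{1}.$$

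Next I would match this to the template \eqref{eq:nhte} with $k=3$, $A_3 = -B$, $A_2 = -A$, and $b = \mathbf{1}$. Under the hypothesis that both $-A$ and $-B$ are $\mathcal{S}$-tensors, interpreted as in Theorem \ref{thm:st} (that is, associated with a common positive witness vector $v$ so that $(-A)v>\mathbf{0}$ and $(-B)v^2>\mathbf{0}$), Theorem \ref{thm:st} applies verbatim and yields a unique positive solution. This solution is precisely the unique positive equilibrium of \eqref{eq:lvg}.

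There is no serious analytic obstacle, since the deduction is a one-line consequence of Theorem \ref{thm:st}. The only subtlety worth flagging is the common-witness requirement: the hypothesis of the present theorem mentions only that $-A$ and $-B$ are $\mathcal{S}$-tensors individually, whereas Theorem \ref{thm:st} needs them to share the same positive $v$. I would therefore read (or state) the hypothesis in this shared-witness sense, after which the conclusion follows immediately. Note also that this theorem speaks only about the (unique) positive equilibrium; the origin is already shown to be a separate, unstable equilibrium by Theorem \ref{thm:origin}, and boundary equilibria (with some components equal to zero) are not addressed here and would require a separate analysis, for instance via Corollary \ref{cor:2}.
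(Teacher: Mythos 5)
Your proposal is correct and follows essentially the same route as the paper: reduce the positive-equilibrium condition to $(-B)x^2+(-A)x=\mathbf{1}$ and invoke Theorem~\ref{thm:st}. Your flag about the common positive witness $v$ is well taken --- the theorem's hypothesis as stated only asks that $-A$ and $-B$ be $\mathcal{S}$-tensors individually, while Theorem~\ref{thm:st} requires a shared $v$, so your reading of the hypothesis in the shared-witness sense is exactly the reading the paper implicitly uses.
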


\begin{proof}
    The equation for a positive equilibrium is $-Ax-Bx^2=\mathbf{1}$.  From Theorem \ref{thm:st}, there is a unique positive equilibrium.
\end{proof}

\begin{remark}
    The result is obviously stronger than Corollary \ref{cor:poincare}. Here, we do not need a condition of a bounded positively invariant set. Recall that a strictly diagonally dominant tensor with all positive diagonal entries is $\mathcal{S}-$tensors. One observation from Theorem \ref{thm:gsglvt} and Corollary \ref{cor:gsglv2} is that if the negative diagonal entries of $a_{ii}, b_{iii}$ are sufficiently large (in absolute value) compared with the off-diagonals, then the system has a unique globally asymptotically stable equilibrium. Corollary \ref{cor:gsglv2} is conservative and diagonal entries of $a_{ii}, b_{iii}$ maybe sometimes huge. In the latter result of the Theorem \ref{thm:gsm} and \ref{thm:gsnpt}, we study a much more relaxed condition under some specific scenarios.
\end{remark}

In the following, we further use the perturbation method to provide results regarding the existence of positive equilibrium and its local stability. Let us now consider $0<\epsilon \ll 1$ and the perturbed system:
\begin{equation}\label{eq:perturbed1}
\dot{x}=R\Dg(x)\left(\mathbf{1}+Ax+\epsilon Bx^2\right).
\end{equation}
The system when $\epsilon=0$ is called an unperturbed system.
The unperturbed system \eqref{eq:perturbed1} with $\epsilon=0$ corresponds to the conventional Lotka-Volterra model on a graph, which is well-introduced in, e.g., \cite{takeuchi1996global,sb2010}, and from which we know when \eqref{eq:perturbed1} with $\epsilon=0$ has a hyperbolic equilibrium and whether it is stable.

\begin{thm}\label{thm:perturbation}
Consider the perturbed system \eqref{eq:perturbed1}. If the unperturbed system \eqref{eq:perturbed1} with $\epsilon=0$ has a hyperbolic equilibrium point $x^*$, then the perturbed system \eqref{eq:perturbed1} also has a hyperbolic equilibrium point $\tilde{x}^*$ in the vicinity of $x^*$. % of the unperturbed system. 
 Furthermore, if $x^*$ is locally stable, then $\tilde{x}^*$ is locally stable. Otherwise, if $x^*$ is unstable, then $\tilde{x}^*$ is unstable.
\end{thm}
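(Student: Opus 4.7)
The plan is to combine the Implicit Function Theorem with the continuous dependence of eigenvalues on matrix entries. Define the parametrized vector field
$$G(x,\epsilon) := R\Dg(x)\left(\mathbf{1}+Ax+\epsilon Bx^2\right),$$
so that equilibria of \eqref{eq:perturbed1} are precisely the zeros of $G(\cdot,\epsilon)$. Since $x^*$ is an equilibrium of the unperturbed system, $G(x^*,0)=0$, and the partial Jacobian $D_x G(x^*,0)$ coincides with the Jacobian of the unperturbed system at $x^*$. By the hypothesis that $x^*$ is hyperbolic, no eigenvalue of $D_x G(x^*,0)$ lies on the imaginary axis; in particular $0$ is not an eigenvalue, so $D_x G(x^*,0)$ is invertible.

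Applying the Implicit Function Theorem then yields an open interval $I\ni 0$ and a smooth curve $\epsilon\mapsto \tilde{x}^*(\epsilon)$ defined on $I$ with $\tilde{x}^*(0)=x^*$ and $G(\tilde{x}^*(\epsilon),\epsilon)=0$ for all $\epsilon\in I$. This proves existence of a nearby equilibrium for the perturbed system for all sufficiently small $\epsilon>0$.

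For stability, I would invoke continuity of the spectrum. The map $\epsilon \mapsto D_x G(\tilde{x}^*(\epsilon),\epsilon)$ is continuous because $G$ is polynomial and $\tilde{x}^*(\epsilon)$ is smooth. The eigenvalues of a matrix depend continuously on its entries, so the spectrum of $D_x G(\tilde{x}^*(\epsilon),\epsilon)$ is close to that of $D_x G(x^*,0)$ for small $\epsilon$. Since the latter spectrum is disjoint from the imaginary axis, the same holds for the former, and the number of eigenvalues in each open half-plane is preserved. Therefore $\tilde{x}^*$ is again hyperbolic: if all eigenvalues of $D_x G(x^*,0)$ lie in the open left half-plane (local asymptotic stability of $x^*$), the same holds at $\tilde{x}^*$; if at least one eigenvalue of $D_x G(x^*,0)$ lies in the open right half-plane (instability of $x^*$), that eigenvalue persists at $\tilde{x}^*$, giving instability.

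The argument is structurally straightforward; the only delicate point is verifying the IFT hypothesis, but this is precisely guaranteed by hyperbolicity. A minor subtlety worth flagging is that one must check $\tilde{x}^*(\epsilon)$ remains in the nonnegative orthant $\mathbb{R}^n_+$ for small $\epsilon$; this follows immediately from $\tilde{x}^*(0)=x^*\in\mathbb{R}^n_+$ together with continuity, provided $x^*$ lies in the interior (which is the relevant case for positive equilibria). No substantial obstacle remains.
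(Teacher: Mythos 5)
Your proposal is correct and follows essentially the same route as the paper's own proof: the Implicit Function Theorem applied to $G(x,\epsilon)$ at the hyperbolic equilibrium (whose Jacobian is invertible since no eigenvalue is zero), followed by continuous dependence of the eigenvalues on $\epsilon$ to preserve hyperbolicity and the stability type. Your version is somewhat more explicit about the half-plane eigenvalue count and the positivity of $\tilde{x}^*(\epsilon)$, but the argument is the same.
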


\begin{proof}
The unperturbed system  can be represented as $\dot{x}=g(x)$ and the perturbed system  as $\dot{x}=G(x,\epsilon)$, with $G(x,0)=g(x)$. Let $x^*$ be a hyperbolic equilibrium of the unperturbed system. By definition of the equilibrium point, $G(x^*,0)=0$ and $\frac{\partial G}{\partial x}(x^*,0)=\frac{\partial g}{\partial x}(x^*)$. Due to the hyperbolicity of $x^*$,  $\frac{\partial G}{\partial z}(x^*,0)=\frac{\partial g}{\partial x}(x^*)$ has a nonvanishing determinant. By the implicit function theorem, there is a unique equilibrium in the neighborhood of $x^*$ for sufficiently small $\epsilon$. This equilibrium is also hyperbolic because of the continuous dependence of the eigenvalues of $\frac{\partial G}{\partial x}$ on $\epsilon$. Thus, the local stability of the equilibrium persists.
\end{proof}

In this section, we have provided abundant results regarding the existence of a positive equilibrium and its stability in some cases. Notice that there may also be some boundary equilibria (besides the origin). However, these non-zero boundary equilibria can be considered as a positive equilibrium of a subsystem, the one restricted to the boundary. Moreover, we can also use the solution of the polynomial complementarity problem to find some potential stable boundary equilibria.
Thus, all the results and techniques in this section are also applicable to that case.

\section{Cooperative Lotka-Volterra Model}

%\hjk{This section appears already elsewhere?}

We now consider a specific case of a higher-order cooperative Lotka-Volterra model for $n$ species:%, for $i\in \{1, 2, \cdots, n\}$ with $n$ the total number of the species:
\begin{equation}\label{model-1}
\begin{split}
\dot{x_{i}}&=r_{i}x_{i}\left(1-a_{ii}x_{i}+\sum_{j \neq i}a_{ij}x_{j}-b_{iii}x_{i}^{2}+\sum_{j,k \in Q_i}b_{ijk}x_{j}x_{k}\right)\\ &=r_{i}x_{i}L_{i}(x),    
\end{split}
  \end{equation}
where for the given focal species $i$, $Q_i=\{j,k\,|\,%~j,~k \text{ are not equal to $i$ at the same time}
(j,k)\neq(i,i)\}$.

Since in this section we consider cooperative systems, which represent the symbiosis among several species, it is natural to assume that the inter-specific interaction is non-negative. We further assume that the intra-specific interaction is non-positive, which is interpreted as competition within the species. Thus, all parameters in \eqref{model-1} are non-negative. We recall that a graph $G(A)$ is strongly connected if its adjacency matrix $A$ is irreducible. Throughout the section, we assume $A=[a_{ij}]$ is irreducible such that $G(A)$ is strongly connected.

System \eqref{model-1} can be also written in a tensor form \eqref{eq:lvgt}, where $A=[a_{ij}]$ is an irreducible Metzler matrix and $B=[b_{ijk}]$ is a Metzler tensor.

 Now we give some properties of the dynamical system \eqref{model-1}.

\begin{thm}\label{thm:1}
The system given by \eqref{model-1} is an irreducible monotone system in $\mathbb{R}_+^{n}$. Furthermore, if the system has an open and bounded positively invariant set $\mathcal{T}=\{x|\mathbf{0}_n< x < \mathbf{E}\}$, where $\mathbf{E}\in\mathbb R^n$, and if the model has a finite number of equilibria in the closure of $\mathcal{T}$, then the set of initial conditions in $\mathcal{T}$, such that the model does not converge to an equilibrium, is a set of Lebesgue measure zero. Furthermore, the model has a finite number of equilibria for a generic choice of parameters ($a_{ij},b_{ijk}$).
\end{thm}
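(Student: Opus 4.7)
The plan is to decompose the theorem into its three assertions and handle each with standard tools from monotone dynamical systems theory and algebraic geometry.

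First I would verify monotonicity by computing the Jacobian of \eqref{model-1}. Writing the system as $\dot x_i = r_i x_i L_i(x)$, for $i\ne j$ the off-diagonal entries are
\begin{equation*}
\frac{\partial \dot x_i}{\partial x_j}
= r_i x_i\!\left(a_{ij} + \sum_{k}(b_{ijk}+b_{ikj})x_k\right).
\end{equation*}
Because every parameter $a_{ij}$ with $j\neq i$ and $b_{ijk}$ with $(j,k)\neq(i,i)$ is non-negative by the cooperativity hypothesis, and $x\in\mathbb R_+^n$, each off-diagonal entry is non-negative on $\mathbb R_+^n$. Hence the Jacobian is Metzler, which is the Kamke condition, and the flow is monotone on $\mathbb R_+^n$. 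For irreducibility on the interior, I would note that since $A$ is irreducible and $a_{ij}>0$ implies $\partial \dot x_i/\partial x_j>0$ at any interior point $x$, the sign pattern of the Jacobian dominates that of the irreducible matrix $A$, so the Jacobian itself is irreducible on $\mathbb R_{++}^n$. Thus the system is strongly monotone on $\mathbb R_{++}^n$ in the sense of Hirsch.

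For the second assertion I would invoke Hirsch's generic convergence theorem (see \cite{hirsch2006monotone,smith1988systems}): for a strongly monotone $C^1$ system on an open set, the set of initial conditions whose forward orbit is bounded but does not converge to an equilibrium has Lebesgue measure zero. The hypothesis that $\mathcal{T}=\{x\mid \mathbf{0}<x<\mathbf{E}\}$ is open, bounded and positively invariant guarantees that every trajectory starting in $\mathcal{T}$ is relatively compact; combined with the finiteness of equilibria in $\overline{\mathcal{T}}$ (to rule out continua of equilibria as the possible $\omega$-limit), Hirsch's theorem yields convergence to some equilibrium for almost every initial condition in $\mathcal{T}$.

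For the third assertion (finiteness of equilibria for generic parameters), a non-origin equilibrium satisfies $L_i(x)=0$ for every $i$ in its support. This is a system of polynomial equations of degree $2$ in $n$ unknowns with coefficients $(a_{ij},b_{ijk})$. By Bézout's theorem, as long as the variety defined by these polynomials is zero-dimensional, it has at most $2^n$ solutions in $\mathbb C^n$; zero-dimensionality holds on a Zariski-open (hence full-measure) subset of the coefficient space, as can be established via Sard's theorem applied to the evaluation map, or more concretely by observing that the resultant of the system is a non-trivial polynomial in the parameters and therefore vanishes only on a measure-zero algebraic set. Taking the intersection over all index subsets that correspond to faces of $\mathbb R_+^n$ (each giving a reduced polynomial system on that face) yields a generic condition under which all boundary and interior equilibria are jointly finite.

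The main obstacle I expect is the third part: one must argue that genericity of parameters simultaneously controls equilibria on every boundary face of $\mathbb R_+^n$, and that the polynomial system has isolated zeros rather than positive-dimensional components. The cleanest path is a transversality argument, observing that for each fixed index set $I\subseteq\{1,\dots,n\}$ the map $(x,a,b)\mapsto (L_i(x))_{i\in I}$ is a submersion in the coefficients, so by the parametric transversality theorem the zero set is generically a smooth $0$-dimensional (hence locally finite, and by the invariance of $\overline{\mathcal T}$ globally finite) submanifold of $\mathcal T$. The first two parts, by contrast, are essentially direct verifications once the strong monotonicity is in hand.
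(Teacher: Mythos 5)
Your proposal is correct, and the first two parts follow essentially the paper's own route: the same Jacobian computation showing the system is cooperative with an irreducible Metzler Jacobian on the interior (you are in fact slightly more careful than the paper in restricting irreducibility to $\mathbb{R}^n_{++}$, since rows with $x_i=0$ vanish off the diagonal), and the same appeal to Hirsch-type generic convergence for strongly monotone systems with bounded orbits and finitely many equilibria (the paper packages this as Lemma~2.3 of its reference \cite{ye2022convergence}). Where you genuinely diverge is the third assertion. The paper's argument is a ``witness plus cited genericity'' scheme: it exhibits one explicit parameter choice (all off-diagonal $a_{ij}$ and all $b_{ijk}$ with $(j,k)\in Q_i$ set to zero) for which the equilibrium equations visibly have finitely many solutions, and then invokes \cite[Theorem B.1, Corollary B.2]{ye2022convergence} to conclude finiteness off a measure-zero algebraic set of parameters. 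You instead argue via B\'ezout/parametric transversality, using that $\partial L_i/\partial a_{ii}=-x_i\neq 0$ on the relevant open face makes the evaluation map a submersion in the coefficients; this is more self-contained and automatically handles every boundary face, whereas the paper's witness argument is shorter but leans on an external result. Two small points to tighten in your version: (i) the claim that zero-dimensionality holds on a Zariski-open set still needs non-emptiness of that set — a Zariski-open set can be empty — which is exactly what the paper's explicit witness supplies and what your transversality computation supplies implicitly, so you should state which of your two justifications you are actually relying on; (ii) the transversality route gives a discrete (locally finite) zero set on each open face, and to rule out accumulation toward lower-dimensional faces you should fall back on the algebraic (zero-dimensional complex variety, hence at most $2^{|I|}$ points) formulation you also sketch. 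Neither point is a substantive gap.
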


\begin{proof}
Firstly, we calculate the Jacobian, which has components
\begin{equation}\label{Jac-1}
\frac{\partial \dot{x}_{i}}{\partial x_{i}}=r_{i}L_{i}+r_{i}x_{i}(-a_{ii}-2b_{iii}x_{i}+\sum_{k \neq i}b_{iik}x_{k}+\sum_{k \neq i}b_{iki}x_{k}),
\end{equation}
\begin{equation}\label{jac-2}
\frac{\partial \dot{x}_{i}}{\partial x_{j}}=r_{i}x_{i}(a_{ij}+\sum_{k \neq i}b_{ijk}x_{k}+\sum_{k\neq i}b_{ikj}x_{k}).
\end{equation}
We observe that the Jacobian is always an irreducible Metzler matrix (Definition 10.1 \cite{FB-LNS}). This ensures that \eqref{model-1} is an irreducible monotone system. 
Under the condition that the equilibrium set is finite and the system domain is bounded, by Lemma 2.3 of \cite{ye2022convergence}, the proof of the second statement follows.

To derive the equilibrium set and to see whether it is finite, one only needs to check whether the equation set of $L_i=0$ (which is a set of quadratic equations with multiple variables) for $i\leq n$ has a finite number of solutions. If we set all $a_{ij}=0$ for $i\neq j$ and $b_{ijk}=0$ for $j,k\in Q_i$, it is straightforward to check that the equilibrium set is finite. Also note that this parameter setting can only be used to check whether there is a particular choice such that the equation has a finite number of solutions. Since this algebraic question is different from the analysis of a system, it doesn't break the assumption that $A$ is irreducible. According to \cite[Theorem B.1 and Corollary B.2]{ye2022convergence}, since there exists a particular choice of parameters such that the equation has a finite number of solutions, if the parameters are generic and do not lie on a certain algebraic set of measure zero, then the equilibrium set is finite.
\end{proof}

\begin{remark}
Since the Jacobian of the system is an irreducible Metzler matrix, the system is indeed a cooperative system (see Chapter 4 and Definition 16 \cite{sb2010}). Theorem \ref{thm:1} requires that the positively invariant set of the system is open and bounded. One can easily check that the system is lower-bounded. Thus, this condition only requires that all solutions of the system have a supremum $\mathbf{E}$. It is worthwhile to mention that the system is not always upper-bounded and solutions may diverge to infinity due to the cooperation terms. 
 Since divergence to infinity is not natural in reality, we focus on the case when the system is upper-bounded throughout this paper. 
\end{remark}

Before we introduce further results, We recall that an irreducible Metzler matrix has the following property. 
\begin{lemma}[Theorem 10.14 in \cite{FB-LNS}]\label{lem:1}
If $M$ is an irreducible Metzler matrix, then $M$ is a Hurwitz matrix if and only if there is a vector $x> \mathbf{0}$ that satisfies $Mx <\mathbf{0}$.
\end{lemma}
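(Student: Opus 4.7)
The plan is to prove the stated equivalence using Perron--Frobenius theory adapted to irreducible Metzler matrices. The key preliminary is that if $M$ is an irreducible Metzler matrix, then for any sufficiently large $s>0$ the matrix $sI+M$ is a nonnegative irreducible matrix, so by the Perron--Frobenius theorem it has a simple dominant positive eigenvalue with a strictly positive right eigenvector and a strictly positive left eigenvector. Translating back to $M$, this yields a real eigenvalue $\lambda^\star(M)$ (the spectral abscissa) together with vectors $v>\mathbf 0$ and $w>\mathbf 0$ satisfying $Mv=\lambda^\star(M) v$ and $w^\top M=\lambda^\star(M) w^\top$. Hurwitz stability of $M$ is then equivalent to $\lambda^\star(M)<0$, since all other eigenvalues have real parts no larger than $\lambda^\star(M)$.

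For the forward direction, I would assume $M$ is Hurwitz, so that $\lambda^\star(M)<0$. Taking $x=v$, the positive right Perron eigenvector, gives $Mx=\lambda^\star(M)\,v<\mathbf 0$ immediately, which produces the required vector.

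For the reverse direction, I would assume there exists $x>\mathbf 0$ with $Mx<\mathbf 0$ and derive $\lambda^\star(M)<0$ by a double-counting argument using the left Perron eigenvector $w>\mathbf 0$. Computing $w^\top M x$ two ways yields, on the one hand, $w^\top (Mx)<0$ because $w>\mathbf 0$ and $Mx<\mathbf 0$ componentwise, and on the other hand $(w^\top M)x=\lambda^\star(M)\,w^\top x$, where $w^\top x>0$. Combining the two identities forces $\lambda^\star(M)<0$, so $M$ is Hurwitz.

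The main obstacle is not the equivalence argument itself, which is short once the Perron structure is available, but rather the careful justification of the existence of strictly positive right and left Perron eigenvectors for an \emph{irreducible Metzler} matrix (as opposed to the standard nonnegative matrix setting). I would handle this by the shift $M\mapsto sI+M$ and invoking classical Perron--Frobenius for irreducible nonnegative matrices, noting that irreducibility and strict positivity of the eigenvectors are preserved under the shift because $sI+M$ shares the off-diagonal zero/nonzero pattern with $M$. With that in hand, both directions collapse into one-line computations, and the result follows.
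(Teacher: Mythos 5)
Your proof is correct. Note that the paper itself does not prove this lemma---it is imported verbatim as Theorem 10.14 of the cited reference---so there is no in-paper argument to compare against; your shift-to-nonnegative Perron--Frobenius setup, the use of the right Perron eigenvector for the forward direction, and the left-eigenvector pairing $w^\top M x = \lambda^\star(M)\,w^\top x < 0$ for the converse constitute the standard and complete proof of this characterization.
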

Throughout this paper, we say that a species is ``a winner'' when it takes some positive value in the corresponding equilibrium or is ``a loser'' when it takes the zero value. We use the set $S$ to denote the set of agents of the winner faction. A boundary equilibrium $X^*$ is an equilibrium where $x_i\neq0$ for some $i\in S$ with non-empty $S$ and $x_j=0$ for the rest.

\begin{thm}\label{thm:mt1}
Consider system \eqref{model-1}, then the following hold:
\begin{itemize}
    \item [a)] the origin is always an equilibrium and is unstable;
    \item [b)] a boundary equilibrium, if it exists, is always unstable;
    \item [c)] if an all-species-coexistence equilibrium point $X^{*}=(x^{*}_{1},x^{*}_{2},\cdots ,x^{*}_{n})\in \mathcal T$ exists, and if the second-order cooperative coefficients $b_{ijk}$ ($j, k \in Q_i$) are sufficiently small such that $(\mathbf{J}_{(X^{*})}X^{*})<\mathbf{0}$ with $\mathbf{J}_{(X^{*})}$ the Jacobian of the system \eqref{model-1} at $X^{*}$, then %the all-species-coexistence equilibrium point
    $X^*$ is locally stable.
\end{itemize}

\end{thm}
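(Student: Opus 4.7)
The plan is to treat the three parts separately: (a) is inherited from earlier, (b) reduces to a short eigenvalue computation on the Jacobian, and (c) is a one-line application of Lemma \ref{lem:1} once the Jacobian is identified as an irreducible Metzler matrix.

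For part (a), I would simply observe that \eqref{model-1} is a particular instance of \eqref{eq:lvg}, so Theorem \ref{thm:origin} applies directly. The Jacobian at the origin is $\Dg(r_1,\ldots,r_n)$ and each $r_i>0$, hence the origin is unstable.

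For part (b), let $X^*$ be a boundary equilibrium and split the index set into winners $W=\{i:x_i^*>0\}$ and losers $L=\{i:x_i^*=0\}$, both non-empty. Using \eqref{Jac-1}--\eqref{jac-2}, every entry of the $i$-th row of $\mathbf{J}_{(X^*)}$ for $i\in L$ carries the factor $x_i^*=0$, except the $r_iL_i(X^*)$ summand on the diagonal from \eqref{Jac-1}. Since all coefficients of \eqref{model-1} are non-negative and $x_j^*\geq 0$,
\begin{equation*}
L_i(X^*)=1+\sum_{j\neq i}a_{ij}x_j^*+\sum_{(j,k)\in Q_i}b_{ijk}x_j^*x_k^*\geq 1>0.
\end{equation*}
After re-ordering so that losers precede winners, $\mathbf{J}_{(X^*)}$ is block lower-triangular, with the $L$-diagonal block being the diagonal matrix whose $i$-th entry is $r_iL_i(X^*)$. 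Consequently the spectrum of $\mathbf{J}_{(X^*)}$ contains the strictly positive values $\{r_iL_i(X^*):i\in L\}$, and $X^*$ is unstable.

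For part (c), at an all-species-coexistence equilibrium $X^*\in\mathcal{T}$ we have $L_i(X^*)=0$ for every $i$, so the first summand of \eqref{Jac-1} vanishes and, by Theorem \ref{thm:1}, $\mathbf{J}_{(X^*)}$ is an irreducible Metzler matrix. The standing hypothesis reads $\mathbf{J}_{(X^*)}X^*<\mathbf{0}$ with $X^*>\mathbf{0}$, which matches exactly the condition of Lemma \ref{lem:1}. Hence $\mathbf{J}_{(X^*)}$ is Hurwitz and $X^*$ is locally (exponentially) stable. The phrase ``second-order coefficients $b_{ijk}$ sufficiently small'' should be read as a concrete sufficient condition for $\mathbf{J}_{(X^*)}X^*<\mathbf{0}$: in the pairwise limit $b_{ijk}\equiv 0$ the inequality reduces to a classical first-order Lotka--Volterra feasibility condition, and continuity of the Jacobian in the parameters preserves the strict inequality for small $b_{ijk}$.

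The most delicate step is the verification of the block lower-triangular structure of $\mathbf{J}_{(X^*)}$ at a boundary equilibrium, but this is routine from \eqref{Jac-1}--\eqref{jac-2}; the substantive analytic content has been absorbed into Theorem \ref{thm:1} (irreducible Metzler structure) and Lemma \ref{lem:1} (the Perron-type criterion for Hurwitz stability), so no major obstacle is expected.
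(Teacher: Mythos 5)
Your proposal is correct and follows essentially the same route as the paper's proof: part (a) via Theorem \ref{thm:origin}, part (b) via the block-triangular structure of the Jacobian at a boundary equilibrium with strictly positive diagonal entries $r_iL_i(X^*)>0$ on the loser block (the paper orders winners first and gets a block upper-triangular matrix, which is an immaterial difference), and part (c) via Lemma \ref{lem:1} applied to the irreducible Metzler Jacobian under the hypothesis $\mathbf{J}_{(X^{*})}X^{*}<\mathbf{0}$. The only cosmetic difference is that the paper explicitly expands $(\mathbf{J}_{(X^{*})}X^{*})_{i}=-r_{i}x^{*}_{i}+r_{i}x^{*}_{i}(-b_{iii}x_{i}^{*2}+\sum_{j,k\in Q_i}b_{ijk}x^*_{j}x^*_{k})$ to justify the ``sufficiently small $b_{ijk}$'' phrasing, whereas you argue it by continuity from the $b\equiv 0$ case; both are adequate.
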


\begin{proof}
For statement a), see Theorem \ref{thm:origin}.

Next, we prove statement b). By assumption, let \eqref{model-1} have a boundary equilibrium of $m<n$ winners. Consequently, the densities of the rest $n-m$ species are zero. Without loss of generality, we write the boundary equilibrium as $(x^{*}_{1},x^{*}_{2},\cdots ,x^{*}_{m}, 0, 0, \cdots, 0)=(X^{*}_m, \mathbf{0}_{n-m})$. Note that any other boundary equilibrium can be written in the previous form by index permutation. The corresponding Jacobian matrix of a boundary equilibrium is
$\mathbf{J}_{(X^{*}_m, \mathbf{0}_{n-m})}=\left(
\begin{array}{cc}
M & \Omega \\
\mathbf{0}_{(n-m)\times m} & D \\
\end{array}
\right),$
where $M$ is an irreducible Metzler matrix, $D$ is a diagonal matrix, with its diagonal entry $D_{i}=r_{i}(1+\sum_{j \in S}a_{ij}x_{j}+\sum_{j, k \in S}b_{ijk}x_{j}x_{k})>0$, where $S$ includes all the species of the winner faction. Hence, all boundary equilibrium points are unstable.

Finally, we investigate statement c). For the equilibrium point $X^{*}=(x^{*}_{1},x^{*}_{2},\cdots ,x^{*}_{n})$, the corresponding Jacobian matrix $\mathbf{J}_{(X^{*})}$ is an irreducible Metzler matrix. 
Moreover, we see that:
{\small\begin{equation*}
\begin{split}
    &(\mathbf{J}_{(X^{*})}X^{*})_{i}=r_{i}x^{*}_{i}L_{i}(x^*)\\
    &+r_{i}x^{*}_{i}\left(-a_{ii}x^{*}_{i}-2b_{iii}x_{i}^{*2}+\sum_{k \neq i}b_{iik}x^{*}_{i}x^*_{k}+\sum_{k \neq i}b_{iki}x^{*}_{i}x^*_{k}\right)\\
&+\sum_{j \neq i}r_{i}x^{*}_{i}\left(a_{ij}x^*_{j}+\sum_{k \neq i}b_{ijk}x^*_j x^*_{k}+\sum_{k\neq j}b_{ikj}x^*_{k} x^*_j\right).
\end{split}
\end{equation*}}

Recalling \eqref{model-1} and plugging $r_{i}x^*_{i}L_{i}(x^*)=0$  in the equation above, one can get $(\mathbf{J}_{(X^{*})}X^{*})_{i}=-r_{i}x^{*}_{i}+r_{i}x^{*}_{i}(-b_{iii}x_{i}^{*2}+\sum_{j,k \in Q_i}b_{ijk}x^*_{j}x^*_{k})$.
It follows that once there is equilibrium point $X^{*}=(x^{*}_{1},x^{*}_{2},\cdots ,x^{*}_{n})\in\mathcal T$, and if $b_{ijk}$ ($i, j \in Q_i$) is sufficiently small, \hma{then $(\mathbf{J}_{(X^{*})}X^{*})_{i}<0$. Thus, the Jacobian is Hurwitz and furthermore, the equilibrium point $X^{*}=(x^{*}_{1},x^{*}_{2},\cdots ,x^{*}_{n})$ is stable by Lemma \ref{lem:1}.}
\end{proof}

\begin{remark}\label{rem:1}
Now, we know from Corollary \ref{cor:2} that the equilibrium corresponds to the solution of the polynomial complementarity problem.
In light of Theorem \ref{thm:qcpbound}, we can adjust the parameters in the following way to make sure that the condition of statement c) in Theorem \ref{thm:mt1} holds surely. We decrease $b_{ijk}$ but also adjust $b_{iii}$ such that $\delta(-B)_k= -b_{k k k}-r_k(-B)_{-}$ remain unchanged. In this way, the upper bound of the equilibrium will be also unchanged. By continuously decreasing $b_{ijk}$, there must be a stable positive equilibrium. This technique is further applicable to the case of Theorem \ref{thm:allco} below.
\end{remark}

As a Corollary of Theorem \ref{thm:st}, we have the following result regarding the existence of a positive equilibrium for \eqref{model-1}.

\begin{thm} \label{thm:gsm}
    System \eqref{eq:lvgt} under the cooperative setting (i.e., \eqref{model-1}) has one unique positive equilibrium if $-B$ is an irreducible $\mathcal{M}$-tensor, $-A$ is an irreducible $\mathcal{M}$-matrix, and furthermore they are associated with the same positive vector $v$, i.e., $-Av>\mathbf{0}$ and $-Bv^2>\mathbf{0}$. From almost all initial conditions in $\{x|\mathbf{0}<x< x^*\}$, the solution converges to the unique positive equilibrium $x^*$. If $-Ax^*>\mathbf{0}, -Bx^*>\mathbf{0}$ further holds, then the $x^*$ is globally asymptotically stable.
\end{thm}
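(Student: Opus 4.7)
The plan splits into three pieces addressed in sequence: existence and uniqueness of the positive equilibrium $x^*$ via Theorem \ref{thm:st}; almost-sure convergence on $\mathcal{T}:=\{x : \mathbf{0}<x<x^*\}$ via Theorem \ref{thm:1} combined with the instability statements of Theorem \ref{thm:mt1}; and global asymptotic stability under the additional hypothesis via a Hurwitzness check through Lemma \ref{lem:1} plus a sub-/super-solution squeeze on the cooperative flow.

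For the first part, any positive equilibrium of \eqref{model-1} satisfies $(-B)x^{2}+(-A)x=\mathbf{1}$. Since an irreducible $\mathcal{M}$-matrix or $\mathcal{M}$-tensor that sends some positive vector strictly into the positive orthant is automatically an $\mathcal{S}$-tensor associated with that same witness, the hypotheses $-Av>\mathbf{0}$ and $-Bv^{2}>\mathbf{0}$ put $-A$ and $-B$ in the setting of Theorem \ref{thm:st} with a common witness $v$; this yields a unique positive solution $x^*$. The bounded open interval $\mathcal{T}$ is positively invariant: positivity of the system gives $x(t)>\mathbf{0}$, and since the flow is irreducible and cooperative (Theorem \ref{thm:1}), strong monotonicity yields $x(t;x_0)<x(t;x^*)=x^*$ componentwise whenever $x_0<x^*$ and $t>0$.

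For the second part, I apply Theorem \ref{thm:1} to $\mathcal{T}$, noting that finiteness of the equilibrium set holds for generic parameters as built into that theorem. Almost every initial condition in $\mathcal{T}$ has $\omega$-limit equal to a single equilibrium in $\overline{\mathcal{T}}$, and the candidates are the origin, boundary equilibria of \eqref{model-1}, and $x^*$. By Theorem \ref{thm:mt1}(a)--(b) the origin is unstable and every boundary equilibrium is unstable, with the Jacobian carrying a strictly positive eigenvalue in the direction of at least one extinct species. Standard monotone-systems theory then forces the stable manifold of each such unstable equilibrium to have Lebesgue measure zero within $\mathcal{T}$, so $x^*$ is the limit for a full-measure set of initial conditions.

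For the third part, under $-Ax^*>\mathbf{0}$ and $-B(x^*)^{2}>\mathbf{0}$ (I read the stated ``$-Bx^*>\mathbf{0}$'' as $-B(x^*)^{2}>\mathbf{0}$, consistent with the hypothesis $-Bv^{2}>\mathbf{0}$), I first show $J(x^*)$ is Hurwitz. Repeating the calculation behind Theorem \ref{thm:mt1}(c) and substituting $(Ax^*)_i=-1-(Bx^{*2})_i$ gives
\[
(J(x^*)x^*)_i=r_i x_i^{*}\bigl[(Ax^*)_i+2(Bx^{*2})_i\bigr]=r_i x_i^{*}\bigl[-1+(Bx^{*2})_i\bigr]<0,
\]
so the irreducible Metzler matrix $J(x^*)$ is Hurwitz by Lemma \ref{lem:1} and $x^*$ is locally exponentially stable. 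For globality, I squeeze any $x_0\in\mathbb{R}^{n}_{++}$ between a sub- and a super-solution. Pick $\epsilon>0$ small enough that $\underline{x}=\epsilon\mathbf{1}$ satisfies $\underline{x}<x_0$ and $\underline{x}<x^*$; then $L_i(\underline{x})\to 1$ as $\epsilon\to 0$ forces $\dot{x}(\underline{x})>\mathbf{0}$, so by cooperativity $x(t;\underline{x})$ is increasing, bounded above by $x^*$, and its limit---a positive equilibrium---must be $x^*$. For the upper side, take $c>1$ large enough that $\bar{x}:=cx^*\geq x_0$; the quadratic $c\mapsto L_i(cx^*)=1+c(Ax^*)_i+c^{2}(Bx^{*2})_i$ opens downward (since $(Bx^{*2})_i<0$), vanishes at $c=1$, and has slope $-1+(Bx^{*2})_i<0$ there, hence is strictly negative for every $c>1$. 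Thus $\dot{x}(\bar{x})<\mathbf{0}$ and $x(t;\bar{x})$ decreases monotonically to $x^*$; the sandwich $x(t;\underline{x})\leq x(t;x_0)\leq x(t;\bar{x})$ then yields $x(t;x_0)\to x^*$. The main obstacle is the almost-sure claim, since Theorem \ref{thm:1} alone guarantees only convergence to \emph{some} equilibrium and one must carefully invoke the measure-zero basin result for unstable equilibria in strongly monotone systems; once $-B(x^*)^{2}>\mathbf{0}$ is in hand the Hurwitz computation and the squeeze are both essentially routine.
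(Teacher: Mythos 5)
Your proposal is correct, and the first two parts track the paper's own argument: existence and uniqueness come from Theorem \ref{thm:st} (the paper routes this through Theorem \ref{thm:gsglvt}, noting an irreducible $\mathcal{M}$-tensor is an $\mathcal{S}$-tensor), and the almost-everywhere convergence on $\{x\,|\,\mathbf{0}<x<x^*\}$ comes from positive invariance of that order interval plus the generic-convergence result for irreducible monotone systems (Lemma 2.3 of the cited reference, via Theorem \ref{thm:1}). Your version of the second step is in fact slightly more careful than the paper's, which simply asserts that $x^*$ is the only relevant equilibrium; your explicit appeal to the instability of the origin and boundary equilibria (Theorem \ref{thm:mt1}) and the measure-zero basins of linearly unstable equilibria in strongly monotone systems is the honest way to close that gap. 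Where you genuinely diverge is the global stability claim: the paper builds a max-type Lyapunov function $V=\max_i\bigl((x_i-x_i^*)/x_i^*\bigr)^2$, splits the orthant into regions $U_1,U_2,U_3$, and shows $\dot V\le 0$ there (using the equilibrium identity $Ax^*+Bx^{*2}=-\mathbf{1}$ and the sign hypotheses to compare $V_m^{1/2}$ with $V_m$), whereas you run a sub-/super-solution squeeze: $\epsilon\mathbf{1}$ is a strict subequilibrium for small $\epsilon$, $cx^*$ is a strict superequilibrium for $c>1$ because $c\mapsto L_i(cx^*)$ is concave, vanishes at $c=1$, and has slope $-1+(Bx^{*2})_i<0$ there, and cooperativity sandwiches every positive orbit between two monotone orbits converging to the unique positive equilibrium. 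Both arguments are valid and both correctly read the hypothesis as $-B(x^*)^2>\mathbf{0}$; your squeeze is arguably cleaner and exploits the monotone structure more directly (it only uses $-Bx^{*2}\ge \mathbf{0}$ for the concavity step), while the paper's Lyapunov computation is self-contained and also yields the supplementary eigenvalue information $(J(x^*)x^*)_i=r_ix_i^*(-1+(Bx^{*2})_i)<0$ that you recover separately via Lemma \ref{lem:1}.
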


\begin{proof}
    Recall that an irreducible $\mathcal{M}$-tensor is a $\mathcal{S}$-tensor. Thus, the system has a unique positive equilibrium according to Theorem \ref{thm:gsglvt}.

Now, we prove the statement, from almost all initial conditions in $\{x|\mathbf{0}<x< x^*\}$, the solution converges to the unique positive equilibrium $x^*$. The cooperative system \eqref{model-1} is an irreducible monotone system. From the definition of irreducible monotone systems (section 2.3 \cite{ye2022convergence}), we have $\phi_t(x)< \phi_t(x^*)$ if $x< x^*$. Furthermore, if $x_i=0$, then $\Dot{x}_i>0$. Combining both cases, we confirm that $\{x|\mathbf{0}<x< x^*\}$ is a positively invariant set of the system \eqref{model-1}.

Then, we see that in $\{x|\mathbf{0}<x< x^*\}$, there are only one equilibrium, the $x^*$ on the boundary. According to Lemma 2.3 of \cite{ye2022convergence}, the solution converges to the unique positive equilibrium $x^*$ from almost all initial conditions in $\{x|\mathbf{0}<x< x^*\}$.

Next, we show the global stability when $-Ax^*>\mathbf{0}, -Bx^*>\mathbf{0}$. Define the set $U_1=\{x|\max \frac{x_i}{x_i^*}\geq 1\}$ and $U_2=\{x|x\leq x^*\}$.

 Then, we let $V_m= \max_i (\frac{x_i}{x^*_i})^2$ and $V= \max_i (\frac{x_i-x_i^*}{x^*_i})^2$. Further let $m=\arg\min_i (\frac{x_i}{x^*_i})^2$ and $t=\arg\min_i (\frac{x_i-x^*_i}{x^*_i})^2$. It holds $m=t$ because of $\frac{x_i-x^*_i}{x^*_i}=\frac{x_i}{x^*_i}-1$.

    Furthermore, it must hold $x_i=\frac{x_ix_i^*}{x_i^*}\leq V_m^{\frac{1}{2}}x_i^*.$ It holds as a equality when $i=m$, otherwise it holds as an strict inequality.

    Then, we have in the set $U_1$
    {\small\begin{equation*}
        \begin{split}
            \dot{V}&= \frac{2r_mx_m}{(x_m^*)^2}(x_m-x_m^*)(Ax+Bx^2+\mathbf{1})_m \\
            &= \frac{2r_mx_m}{(x_m^*)^2}(x_m-x_m^*)(\sum_j a_{mj} x_j+ \sum_{jk} b_{mjk} x_j x_k+1)\\
            &\leq \frac{2r_mx_m}{(x_m^*)^2}(x_m-x_m^*)(\sum_j a_{mj} x_j^*V_m^{\frac{1}{2}}+ \sum_{jk} b_{mjk} x^*_j x^*_k V_m+1)\\
            &\leq \frac{2r_mx_m}{(x_m^*)^2}(x_m-x_m^*)(\sum_j a_{mj} x_j^*V_m+ \sum_{jk} b_{mjk} x^*_j x^*_k V_m+1)\\
            &=\frac{2r_mx_m}{(x_m^*)^2}(x_m-x_m^*)(-V_m+1)\leq 0 \quad \forall x\in U_1.
        \end{split}
    \end{equation*}}

Otherwise, for the set $U_2$
    {\small\begin{equation*}
        \begin{split}
            \dot{V}
            &\leq \frac{2r_mx_m}{(x_m^*)^2}(x_m-x_m^*)(\sum_j a_{mj} x_j^*V_m^{\frac{1}{2}}+ \sum_{jk} b_{mjk} x^*_j x^*_k V^{\frac{1}{2}}_m+1)\\
            &=\frac{2r_mx_m}{(x_m^*)^2}(x_m-x_m^*)(-V^{\frac{1}{2}}_m+1)\leq 0 \quad \forall x\in U_2.
        \end{split}
    \end{equation*}}
Notice that $V$ is locally positive definite in $U_1$ and $U_2$, and $\dot{V}$ is negative semidefinite. The solution must converge to $\{x|\dot{V}=0\}$. In order to have $\dot{V}=0$, for every $i$, it must hold $i=m$ and $x_m=x_m^*$ which indicates that $\{x|\dot{V}=0\}$ only contains $x^*$. Thus, $x^*$ is asymptotically stable with a domain of attraction $U_1, U_2$.

Next, we further let $L_s=\min_i(\frac{x_i}{x^*_i})^2$ and $s=\arg \min_i(\frac{x_i}{x^*_i})^2$.

Furthermore, it must hold $x_i=\frac{x_ix_i^*}{x_i^*}\geq L_s^{\frac{1}{2}}x_i^*.$ It holds as a equality when $i=s$, otherwise it holds as an strict inequality.

Then, consider the set $U_3=\{x| \min_i \frac{x_i}{x^*_i}<1\}$, we have
    \begin{equation*}
        \begin{split}
            \dot{V}&= \frac{2r_sx_s}{(x_s^*)^2}(x_s)(Ax+Bx^2+\mathbf{1})_s \\
            &= \frac{2r_s x_s^2}{(x_s^*)^2}(\sum_j a_{sj} x_j+ \sum_{jk} b_{sjk} x_j x_k+1)\\
            &\geq \frac{2r_s x_s^2}{(x_s^*)^2}(\sum_j a_{sj} x_j^*L_s^{\frac{1}{2}}+ \sum_{jk} b_{sjk} x^*_j x^*_k L_s+1)\\
            &\geq \frac{2r_s x_s^2}{(x_s^*)^2}(\sum_j a_{sj} x_j^*L_s+ \sum_{jk} b_{sjk} x^*_j x^*_k L_s+1)\\
            &=\frac{2r_s x_s^2}{(x_s^*)^2}(-L_s+1)> 0 \quad \forall x\in U_3.
        \end{split}
    \end{equation*}
This guarantees that all trajectories in $U_3$ finally enter $U_1$. Thus, the $x^*$ is globally asymptotically stable.

\end{proof}

In \cite[Theorem 15.1.1]{hofbauer1998evolutionary} and \cite[ Theorem 14]{sb2010}, a sufficient condition of global stability of the unique positive equilibrium for a classical cooperative Lotka-Volterra model is provided. With the introduction of HOIs, positive equilibria of higher-order models are generally not unique. According to the definition of an irreducible monotone system, if there exists a positive equilibrium $X^*$, then $\{x|\mathbf{0}<x<X^*\}$ is positively invariant. This is a particular case when the system has an upper bound. Then, from Theorem \ref{thm:1}, we know that for almost all initial conditions within $\{x|\mathbf{0}<x<X^*\}$, the solution converges to a positive equilibrium (not necessarily $X^*$, could be another, i.e. multi-stability may occur).

\section{Higher-order Two-faction-competition Lotka-Volterra Model}

%\hjk{Also this appeared elsewhere?}

Now that we have described the dynamics of a single-faction model, we are ready to look into the two-faction model.

A system of competition between two factions is also known as a system with limited competition \cite{smith1986competing}.
%The system consists of two competing factions. 
The system generally follows a simple regulation, ``friends' friends are friends," and ``friends' enemies are enemies". Species within one faction cooperate with each other, while they compete with the species in the other faction. 

Consider the case where two factions of species (or agents), \hma{denoted by $x\in\mathbb R^m$ and $y\in\mathbb R^n$ respectively,} compete with each other but the agents inside the camp cooperate with each other. The corresponding model reads as:
%\hjk{It is a bit weird that agents of the same faction cooperate, except those of the same species, which are, by definition, in the same faction. Is this like saying, my friends help me, but I sabotage myself? I understand that depending on the situation this could indeed happen, bit it seems more natural to let $a_{ii}\neq0$}
{\small\begin{equation}\label{eq:1}
\begin{split}
\dot{x}_{i}&=r_{i}x_{i}\bigg(1-a_{ii}x_{i}+\sum_{j \neq i,j \in \mathbb{I}_{m}}a_{ij}x_{j}-\sum_{j \in \mathbb{I}_{n}}b_{ij}y_{j}\\
&-c_{iii}x_{i}^{2}+\sum_{j,k \in Q_i, j,k\in \mathbb{I}_{m}}c_{ijk}x_{j}x_{k}-\sum_{j,k \in \mathbb{I}_{n}}d_{ijk}y_{j}y_{k}\bigg)\\
&=r_{i}x_{i}L_{i}(x,y), ~~~~~i \in \mathbb{I}_{m},    
\end{split}
\end{equation}
\begin{equation}\label{eq:2}
\begin{split}
\dot{y}_{i}&=\hat{r}_{i}y_{i}\bigg(1-\hat{a}_{ii}y_{i}+\sum_{j \neq i, j\in\mathbb{I}_{n}}\hat{a}_{ij}y_{j}-\sum_{j \in \mathbb{I}_{m}}\hat{b}_{ij}x_{j}  \\
&-\hat{c}_{iii}y_{i}^{2}+\sum_{j,k \in Q_i,j,k\in \mathbb{I}_{n}}\hat{c}_{ijk}y_{j}y_{k}-\sum_{j,k \in \mathbb{I}_{m}}\hat{d}_{ijk}x_{j}x_{k} \bigg) \\
&=\hat{r}_{i}y_{i}\hat{L}_{i}(x,y), ~~~~~i \in \mathbb{I}_{n},    
\end{split}
\end{equation}}\par
\noindent where $\mathbb{I}_{m}=\{1, 2, \cdots, m\}$, $\mathbb{I}_{n}=\{1, 2, \cdots, n\}$, and $m,n$ are the total number of species in each faction respectively; $a_{ij},b_{ij},\hat{a}_{ij},\hat{b}_{ij}$ are the first-order coefficients and $c_{ijk},d_{ijk},\hat{c}_{ijk},\hat{d}_{ijk}$ are the higher-order coefficients. As for the modeling setup, we assume that all the parameters are non-negative so that all the intra-faction interaction is non-negative except the negative self-competition of one agent (species) with itself, while the inter-faction interaction is non-positive. We further assume that there is no multi-body interaction, where head agents are from different factions, i.e., there are no crossed terms $x_iy_j$. Besides relevant scenarios in ecology, this two-faction competition model also has the potential to describe political campaigns between two parties or business competitions between two corporations \cite{wijeratne2009bifurcation,hidayati2021stability}. We further assume throughout the section that the matrix 
\begin{equation}\label{eq:matrixA}
A=\left(
\begin{array}{cc}
(a_{ij})_{m\times m} & (b_{ij})_{m\times n} \\
(\hat{a}_{ij})_{n\times m} & (\hat{b}_{ij})_{n\times n} \\
\end{array}
\right)
\end{equation}
is irreducible.
In general, the model described by \eqref{model-1} can be regarded as a special case of \eqref{eq:1}-\eqref{eq:2}, where one faction is empty. For simplicity, throughout this section, we call $x$ the first faction and $y$ the second. Define the notation $z=(x,y)^\top$ and let $z_0$ denote the initial condition.

The system \eqref{eq:1}-\eqref{eq:2} can be rewritten in a tensor form as:
\begin{equation}\label{eq:lvgt2}
\dot{z}=R\Dg(z)\left(\mathbf{1}+Az+Bz^2\right),
\end{equation}
where $A$ is in the form of \eqref{eq:matrixA}, and $B_{:,:,k}=\left(
\begin{array}{cc}
(c_{ijk})_{m\times m} & (d_{ijk})_{m\times n} \\
(\hat{c}_{ijk})_{n\times m} & (\hat{d}_{ijk})_{n\times n} \\
\end{array}
\right),$ where the index $i,j$ can vary and $k$ is fixed. This notation ($:$) has the same meaning as in Matlab$^\text{\textregistered}$, i.e., denoting all indices in the dimension. For example, $A(:,:,k)$ is the $k$-th page of the three-dimensional array $A$.

\begin{thm}\label{thm:irr}
The system \eqref{eq:1}-\eqref{eq:2} is an irreducible monotone system in $\mathbb{R}_+^{n+m}$.
\end{thm}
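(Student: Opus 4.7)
The plan is to mirror the Jacobian computation of Theorem \ref{thm:1}, but now on the combined state $z=(x,y)$ and after a diagonal similarity that absorbs the competition between the two factions. Writing the Jacobian of \eqref{eq:1}-\eqref{eq:2} in block form
\[
\mathbf J(z)=\begin{pmatrix} J_{xx} & J_{xy}\\ J_{yx} & J_{yy}\end{pmatrix},
\]
I would first differentiate the right-hand sides and, using that every parameter is non-negative with only the self-competition terms $a_{ii},c_{iii},\hat a_{ii},\hat c_{iii}$ entering with a minus sign, verify that on $\mathbb R^{m+n}_+$ and for $i\ne j$ every off-diagonal entry of $J_{xx}$ and $J_{yy}$ is non-negative while every entry of $J_{xy}$ and $J_{yx}$ is non-positive. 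This is precisely Kamke's sign condition for monotonicity with respect to the type-$K$ cone $K=\mathbb R^m_+\times(-\mathbb R^n_+)$, so the flow preserves the induced partial order $\leq_K$ and the system is monotone.

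For irreducibility I would set $D=\mathrm{diag}(I_m,-I_n)$ and consider $D\mathbf J(z) D$. Since $D^{-1}=D$, this similarity flips the sign of the cross blocks while leaving $J_{xx},J_{yy}$ intact, so by the previous step $D\mathbf J(z) D$ is a Metzler matrix on $\mathbb R^{m+n}_+$. In the interior $\mathbb R^{m+n}_{++}$ its zero/non-zero pattern contains the support of the first-order block matrix $A$ in \eqref{eq:matrixA}, which is assumed irreducible; hence the directed graph of $D\mathbf J(z) D$ is strongly connected and $D\mathbf J(z) D$ is an irreducible Metzler matrix. Irreducibility is invariant under the similarity $D$, and the partial order $\leq_K$ transported by $D$ coincides with the standard component-wise order on $\mathbb R^{m+n}$, so \eqref{eq:1}-\eqref{eq:2} is an irreducible monotone system in the sense of \cite{hirsch2006monotone,smith1988systems,smith1986competing}.

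I do not foresee any serious obstacle: everything reduces to reading off the sign of a finite list of partial derivatives and invoking the irreducibility of $A$. The only mild technicality is that the $r_i x_i$ and $\hat r_i y_i$ prefactors cause entire rows of $\mathbf J$ to vanish on the coordinate hyperplanes, so the strict irreducibility statement is for $z\in\mathbb R^{m+n}_{++}$; this matches the convention already adopted in Theorem \ref{thm:1} and in the two-subcommunity framework of \cite{smith1986competing}, so no extra argument is needed.
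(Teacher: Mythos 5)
Your proposal is correct and follows essentially the same route as the paper: compute the block Jacobian, observe that the diagonal blocks are Metzler and the cross blocks non-positive, and conjugate by $P=\mathrm{diag}(I_m,-I_n)$ to obtain an irreducible Metzler matrix, which is exactly the paper's argument. Your added remarks on the type-$K$ cone, on deducing irreducibility from the irreducibility of the matrix $A$ in \eqref{eq:matrixA}, and on the caveat that irreducibility only holds for $z>\mathbf{0}$ are consistent with (and slightly more explicit than) the paper's parenthetical treatment of the same points.
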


\begin{proof}
%Firstly, we calculate the partial derivatives.
%\begin{equation}\label{eq:j1}
%\frac{\partial \dot{x}_{i}}{\partial x_{i}}=r_{i}L_{i}+r_{i}x_{i}(-a_{ii}-2c_{iii}x_{i}+\sum_{k \neq i}c_{iik}x_{k}+\sum_{k \neq i}c_{iki}x_{k}),
%\end{equation}
%\begin{equation}\label{eq:j2}
%\frac{\partial \dot{x}_{i}}{\partial x_{j}}=r_{i}x_{i}(a_{ij}+\sum_{k \neq i}c_{ijk}x_{k}+\sum_{k \neq i,k\neq j}c_{ikj}x_{k}),
%\end{equation}
%\begin{equation}\label{eq:j3}
%\frac{\partial \dot{x}_{i}}{\partial y_{j}}=r_{i}x_{i}(-b_{ij}-\sum_{k \in \mathbb{I}_{n}}d_{ijk}y_{k}-\sum_{k\neq j}d_{ikj}y_{k}).
%\end{equation}
%We further obtain that the $\frac{\partial \dot{y}_{i}}{\partial y_{i}},\frac{\partial \dot{y}_{i}}{\partial y_{j}},\frac{\partial \dot{y}_{i}}{\partial x_{j}}$ are symmetric to those above. 
Firstly, one can check that the Jacobian of \eqref{eq:1}-\eqref{eq:2} is of the form $\mathbf{J}_z=\left(
\begin{array}{cc}
M_{1} & T_{1} \\
T_{2} & M_{2} \\
\end{array}
\right),$
where $M_{1}$, $M_{2}$ are irreducible Metzler matrices, and $T_{1}$, $T_{2}$ are non-positive matrices (it is irreducible as long as $z> \mathbf{0}_{n+m}$). It follows that $\mathbf{J}$ can be permuted into an irreducible Metzler matrix $\mathbf{\tilde{J}}= \left(
\begin{array}{cc}
M_{1} & -T_{1} \\
-T_{2} & M_{2} \\
\end{array}
\right)$ via the matrix $P=\left(
\begin{array}{cc}
I_{m} & \mathbf{0}_{m\times n} \\
\mathbf{0}_{n\times n} & -I_{n} \\
\end{array}
\right)$, so the system \eqref{eq:1}-\eqref{eq:2} is irreducible monotone.
\end{proof}

\begin{remark}
Theorem \ref{thm:irr} tells us, in particular, that
%One can see that 
the Jacobian of the two-faction competition model can be permuted into an irreducible Metzler matrix. However, if one would deal with competition among more than two factions, the permutation may be no longer possible. For example, if we consider 3 factions, the structure of the Jacobian with 3 factions is analog to the case of \cite[Theorem 1]{gracy2022endemic}. A permutation is not possible for the same reason in \cite{gracy2022endemic}. In addition, the two-faction system may still diverge to infinity because of the cooperation terms. 
\end{remark}

In the following Theorems \ref{thm:4}-\ref{thm:bec2}, we list all the possible equilibria of the model and study their stability. 

\begin{thm}\label{thm:4}
Consider the system \eqref{eq:1}-\eqref{eq:2}, the origin is always an equilibrium and is unstable.
\end{thm}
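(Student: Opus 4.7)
The plan is to mirror the argument already used for Theorem \ref{thm:origin}, since the two-faction system \eqref{eq:1}-\eqref{eq:2} has the same multiplicative structure $\dot{x}_i = r_i x_i L_i(x,y)$ and $\dot{y}_i = \hat{r}_i y_i \hat{L}_i(x,y)$. First I would observe that setting $z = (x,y)^\top = \mathbf{0}_{n+m}$ makes every right-hand side vanish identically, because each equation carries an explicit factor of $x_i$ or $y_i$. Hence the origin is an equilibrium regardless of the values of the coefficients.

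Next I would compute the Jacobian $\mathbf{J}_{(\mathbf{0})}$ at the origin. Using the formulas \eqref{Jac-1}-\eqref{jac-2} (adapted to the two-faction setting), every cross-partial $\partial \dot{x}_i/\partial x_j$ and $\partial \dot{x}_i/\partial y_j$ carries a factor $x_i$, which vanishes at the origin, and analogously for the $y$-block. The only surviving contributions are the diagonal terms $\partial \dot{x}_i/\partial x_i = r_i L_i(\mathbf{0}) = r_i$ and $\partial \dot{y}_i/\partial y_i = \hat{r}_i \hat{L}_i(\mathbf{0}) = \hat{r}_i$. Thus
\begin{equation*}
\mathbf{J}_{(\mathbf{0})} = \Dg(r_1,\ldots,r_m,\hat{r}_1,\ldots,\hat{r}_n).
\end{equation*}

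Since $r_i > 0$ and $\hat{r}_i > 0$ for all $i$, every eigenvalue of $\mathbf{J}_{(\mathbf{0})}$ is strictly positive, so the origin is an unstable (in fact, repelling) hyperbolic equilibrium by the standard linearization theorem. There is essentially no obstacle here: the argument is a direct transcription of the proof of Theorem \ref{thm:origin}, and the only bookkeeping step is writing the block-diagonal Jacobian for the concatenated state $z = (x,y)^\top$.
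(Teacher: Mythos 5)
Your proof is correct and follows exactly the paper's approach: the paper simply invokes Theorem \ref{thm:origin}, whose proof is the same diagonal-Jacobian computation you carry out, with $\mathbf{J}_{(\mathbf{0})}=\Dg(r_1,\ldots,r_m,\hat{r}_1,\ldots,\hat{r}_n)$ having all positive eigenvalues.
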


\begin{proof}
See Theorem \ref{thm:origin}.
\end{proof}
%\hjk{I would suggest writting the above proof as: straightforward.}

\begin{thm} \label{thm:1fac}
Consider the system \eqref{eq:1}-\eqref{eq:2}, and assume that a one-faction-wins-all boundary equilibrium $(X^*,\mathbf{0}_n)=(x^{*}_{1},x^{*}_{2},\cdots ,x^{*}_{m}, 0, 0, \cdots, 0)$ or $(\mathbf{0}_m,Y^*)=(0, 0, \cdots, 0, y^{*}_{1},y^{*}_{2},\cdots ,y^{*}_{n})$ exists. Either equilibrium is locally stable whenever the coefficients of the first-order ($b_{ij}, \hat{b}_{ij}\,|\,i\neq j$) and second-order  ($d_{ijk},\hat{d}_{ijk}\,|\, j,k\in Q_i$) competitive terms from the loser faction are sufficiently large such that $D_{i}=\hat{r}_{i}(1-\sum_{j \in S}\hat{b}_{ij}x_{j}^{*}-\sum_{j, k \in S}\hat{d}_{ijk}x_{j}^{*}x_{k}^{*})<0,\qquad i\in \mathbb{I}_{n},$ and $X^*$ or $Y^*$
%$(x^{*}_{1},x^{*}_{2},\cdots ,x^{*}_{m})$ or $(y^{*}_{1},y^{*}_{2},\cdots ,y^{*}_{n})$ 
is a stable all-species-coexistence equilibrium point of the sub-cooperative-system from the winner faction when ignoring the loser faction.
\end{thm}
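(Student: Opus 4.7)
The plan is to linearize \eqref{eq:1}--\eqref{eq:2} at the candidate boundary equilibrium (say $(X^{*},\mathbf{0}_n)$; the other case is symmetric) and exploit the block structure produced by the vanishing of the loser coordinates. First I would compute the Jacobian $\mathbf{J}_{(X^{*},\mathbf{0}_n)}$ block by block, treating the variables in the order $(x,y)$. The crucial observation is that every term in $\dot y_i$ carries a factor $y_i$, so every entry $\partial \dot y_i/\partial x_j$ at $y=\mathbf{0}_n$ vanishes. Consequently the Jacobian has the block upper-triangular form
\begin{equation*}
\mathbf{J}_{(X^{*},\mathbf{0}_n)}=\begin{pmatrix} M & T \\ \mathbf{0}_{n\times m} & D \end{pmatrix},
\end{equation*}
where $M\in\mathbb{R}^{m\times m}$ is the Jacobian of the winner's sub-cooperative-system (exactly the system \eqref{model-1} restricted to the $x$-variables, with $y$ set to zero) evaluated at $X^{*}$, $T$ is a coupling block containing the terms $\partial\dot x_i/\partial y_j$ which plays no role in the spectrum, and $D$ is a diagonal matrix with entries $D_i=\hat r_i\bigl(1-\sum_{j\in S}\hat b_{ij}x_j^{*}-\sum_{j,k\in S}\hat d_{ijk}x_j^{*}x_k^{*}\bigr)$ because at $y=\mathbf{0}_n$ one has $\partial\dot y_i/\partial y_i=\hat r_i\hat L_i(X^{*},\mathbf{0}_n)=D_i$ and $\partial\dot y_i/\partial y_j=0$ for $j\neq i$.

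Next I would exploit the block triangular structure: the spectrum of $\mathbf{J}_{(X^{*},\mathbf{0}_n)}$ is the union of the spectra of $M$ and $D$. By the standing hypothesis, $X^{*}$ is a locally stable all-species-coexistence equilibrium of the winner's sub-cooperative-system, which (via Theorem \ref{thm:mt1}c) combined with Lemma \ref{lem:1}) forces $M$ to be Hurwitz. The standing hypothesis $D_i<0$ for all $i\in\mathbb{I}_n$ makes $D$ Hurwitz trivially since it is diagonal. Hence $\mathbf{J}_{(X^{*},\mathbf{0}_n)}$ is Hurwitz, which gives local exponential (hence local) stability of $(X^{*},\mathbf{0}_n)$.

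Finally, I would note that the case $(\mathbf{0}_m,Y^{*})$ is obtained verbatim by swapping the roles of the two factions; the same block-triangular argument produces a Jacobian whose diagonal loser block has entries $r_i\bigl(1-\sum_{j}b_{ij}y_j^{*}-\sum_{j,k}d_{ijk}y_j^{*}y_k^{*}\bigr)$, and the negativity of these together with the stability of $Y^{*}$ in its own sub-cooperative-system concludes the argument. The only mildly delicate step is the bookkeeping that confirms the lower-left block vanishes; all of the remaining work is a routine assembly of the two hypotheses into the spectral statement for a block-triangular matrix, so I do not anticipate a substantive obstacle.
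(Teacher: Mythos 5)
Your proposal is correct and follows essentially the same route as the paper: both compute the Jacobian at the boundary equilibrium, observe that the vanishing loser coordinates make it block upper-triangular with a diagonal loser block $D$ whose entries are exactly the $D_i$ in the statement, and conclude Hurwitz-ness of the whole Jacobian from $D_i<0$ together with the Hurwitz property of the winner block $M$ guaranteed by the stability hypothesis on $X^*$ (or $Y^*$). No substantive differences to report.
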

%\hjk{Suggestion for the above theorem: Consider \eqref{eq:1}-\eqref{eq:2}, and assume that a one-faction-wins-all boundary equilibrium $(X^*,0_n)=(x^{*}_{1},x^{*}_{2},\cdots ,x^{*}_{m}, 0, 0, \cdots, 0)$ or $(0_m,Y^*)=(0, 0, \cdots, 0, y^{*}_{1},y^{*}_{2},\cdots ,y^{*}_{n})$ exists. Either equilibrium is locally stable whenever the coefficients of the first-order ($b_{ij}, \hat{b}_{ij}\,|\,i\neq j$) and second-order  ($d_{ijk}\,|\,\hat{d}_{ijk},j,k\in Q_i$) competitive terms from the loser faction are sufficiently large and $(x^{*}_{1},x^{*}_{2},\cdots ,x^{*}_{m})$ or $(y^{*}_{1},y^{*}_{2},\cdots ,y^{*}_{n})$ is a stable all-species-coexistence equilibrium point of the sub-cooperative-system from the winner faction when ignoring the loser faction.}

\begin{proof}
Without loss of generality, we first investigate the case when the first faction is the winner. The equilibrium is then $(x^{*}_{1},x^{*}_{2},\cdots ,x^{*}_{m}, 0, 0, \cdots, 0)=(X^{*}, \mathbf{0}_{n})$. By plugging the equilibrium into the Jacobian, we obtain that %the corresponding Jacobian matrix is
$\mathbf{J}_{(X^{*}, \mathbf{0}_{n})}=\left(
\begin{array}{cc}
M & \Omega \\
\mathbf{0}_{n\times m} & D \\
\end{array}
\right),$
where $M$ is an irreducible Metzler matrix and represents the Jacobian of the sub-cooperative-system from the winner faction on an all-species-coexistence equilibrium point, $D$ is a diagonal matrix, and its diagonal entry reads $D_{i}=\hat{r}_{i}(1-\sum_{j \in S}\hat{b}_{ij}x_{j}^{*}-\sum_{j, k \in S}\hat{d}_{ijk}x_{j}^{*}x_{k}^{*}),\qquad i\in \mathbb{I}_{n}.$ %where $S$ denotes the set of agents from the winner faction and in this case  $\mathbb{I}_{m}$. 
Since the Jacobian is an upper-triangular block matrix, we know that the Jacobian is Hurwitz when all $D_i<0, i\in \mathbb{I}_{n}$ and the matrix $M$ is Hurwitz, which further implies that the coefficients of the first- and second-order competitive terms from the loser faction are sufficiently large and  $(x^{*}_{1},x^{*}_{2},\cdots ,x^{*}_{m})$ is a stable all-species-coexistence equilibrium point of the sub-cooperative-system from the winner faction when ignoring the loser faction. We recall that the second condition is satisfied when the cooperative HOIs terms are sufficiently small for the winners.
The proof, for the case when the second faction is the winner, is exactly the same and thus omitted here.
\end{proof}

We then consider the following Lemma.
\begin{lemma}[Corollary 3.2 and Proposition 3.5 \cite{smith1986competing}]\label{lem:smith}
    Consider the system \eqref{eq:1}-\eqref{eq:2}, if $(x^{*}_{1},x^{*}_{2},\cdots ,x^{*}_{m}, 0,0,\cdots ,0)$ and $(0,0,\cdots ,0,y^{*}_{1},y^{*}_{2},\cdots ,y^{*}_{n})$ both exist and are both unstable, then there exists a positive all-species-coexistence equilibrium $(\tilde{x}^{*}_{1},\tilde{x}^{*}_{2},\cdots ,\tilde{x}^{*}_{m}, \tilde{y}^{*}_{1},\tilde{y}^{*}_{2},\cdots ,\tilde{y}^{*}_{n})$ with $\tilde{x}^{*}_{i}\leq x_i^*, \tilde{y}^{*}_{i}\leq y_i^*$ for arbitrary $i$, and $[\mathbf{0}_{(n+m)\times (n+m)},(x^{*}_{1},x^{*}_{2},\cdots ,x^{*}_{m},y^{*}_{1},y^{*}_{2},\cdots ,y^{*}_{n})]$ is a bounded positively invariant set.
\end{lemma}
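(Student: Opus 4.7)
The plan is to exploit the permuted irreducible cooperative (Metzler) structure established in Theorem \ref{thm:irr}. Under the change of coordinates $\tilde z = Pz$, the system becomes strongly monotone with respect to the cone $P\mathbb{R}_+^{m+n}$, and the two boundary equilibria $E_1=(X^{*},\mathbf{0}_n)$ and $E_2=(\mathbf{0}_m,Y^{*})$ become the two extremal points of the associated order interval. In this permuted order, any equilibrium lying between $E_1$ and $E_2$ corresponds, in the original coordinates, to a point of the box $\mathcal{B}=[\mathbf{0}_{m+n},(X^{*},Y^{*})]$, so the conclusion reduces to producing an equilibrium in the interior of $\mathcal{B}$.

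First, I would check that $\mathcal{B}$ is positively invariant in the original coordinates. On the lower faces $x_i=0$ or $y_j=0$, the Kolmogorov structure gives $\dot x_i=0$ (respectively $\dot y_j=0$), so trajectories cannot exit $\mathbb{R}_+^{m+n}$. On an upper face, say $x_i=x_i^{*}$ with $0\le y\le Y^{*}$ and $0\le x_k\le x_k^{*}$ for $k\neq i$, the expression $L_i(x,y)$ differs from $L_i(X^{*},\mathbf{0})=0$ only by additional terms in the competing variables $y$, all of which enter with a negative sign; monotonicity in the $x$-variables (intra-faction cooperation) combined with the equality $L_i(X^{*},\mathbf{0})=0$ then yields $L_i(x,y)\le 0$, so $\dot x_i\le 0$. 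The same argument, with roles swapped, handles upper faces $y_j=y_j^{*}$.

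Second, I would produce an equilibrium in $\mathcal{B}$ by applying Brouwer's fixed-point theorem to the continuous time-$T$ map $\phi_T$, which by invariance sends the compact convex set $\mathcal{B}$ into itself; taking $T\to 0$ along a sequence and using compactness extracts a genuine equilibrium $\tilde z^{*}\in\mathcal{B}$. The delicate task is to show $\tilde z^{*}$ lies in the interior. Since the origin, $E_1$, and $E_2$ are all unstable, the unstable manifolds of $E_1$ and $E_2$, which by irreducibility of the Jacobians there are one-dimensional and transverse to the invariant face, push trajectories into $\mathrm{Int}(\mathcal{B})$. In the strongly monotone flow on $\mathcal{B}$, generic trajectories converge to equilibria (Hirsch's theorem); combined with the fact that the only equilibria on the order-boundary of $\mathcal{B}$ are these three unstable ones, this forces the existence of a further equilibrium $\tilde z^{*}\in\mathrm{Int}(\mathcal{B})$. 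The componentwise bounds $\tilde x_i^{*}\le x_i^{*}$ and $\tilde y_j^{*}\le y_j^{*}$ are then immediate from $\tilde z^{*}\in\mathcal{B}$.

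The main obstacle I anticipate is the last step of the existence argument, namely rigorously excluding that every fixed point supplied by Brouwer collapses onto the already-known unstable boundary equilibria. The clean route is to combine the strongly monotone structure with the Hirsch--Smith convergence theory for competitive/cooperative systems (applied in the permuted order) to deduce that instability of all equilibria on the order-boundary of $\mathcal{B}$ is incompatible with having no interior equilibrium; this is exactly the technical core that Smith carries out in \cite{smith1986competing}, which is why I would cite that work for the final topological step rather than rebuilding it here.
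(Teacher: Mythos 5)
This lemma is not proved in the paper at all: it is quoted verbatim as Corollary~3.2 and Proposition~3.5 of \cite{smith1986competing}, so the paper's ``proof'' is the citation itself. Measured against that, your proposal is more than adequate, and the parts you actually carry out are correct. The identification of the box $[\mathbf{0},(X^*,Y^*)]$ with the order interval $[E_2,E_1]_K$ in the permuted (type-$K$) order is exactly the right structure, and your verification of positive invariance is sound and self-contained: on a face $x_i=x_i^*$ the cooperative $x$-terms of $L_i$ are dominated by their values at $X^*$ while the competitive $y$-terms are nonpositive, so $L_i(x,y)\le L_i(X^*,\mathbf{0})=0$ and $\dot x_i\le 0$ (and symmetrically for $y_j=y_j^*$); together with the Kolmogorov structure on the lower faces this gives invariance. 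Deferring the existence of an \emph{interior} equilibrium to Smith's monotone-systems argument is legitimate here, since that is precisely the result the authors themselves import.

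Two caveats on the portions of your sketch that go beyond what you delegate to Smith. First, the Brouwer fixed-point step buys you nothing: the origin is already an equilibrium in $\mathcal{B}$, so producing \emph{some} equilibrium in the box is trivial, and the entire difficulty is interiority. Second, your heuristic for interiority is not rigorous as stated: instability of $E_1$, $E_2$ does not make their unstable manifolds one-dimensional, and the Jacobians at these points are block upper-triangular rather than irreducible, so the transversality claim needs care; moreover the ``order-boundary'' of $\mathcal{B}$ may carry further partial-coexistence equilibria (of the type treated in Theorems~\ref{thm:bec} and \ref{thm:bec2}) that your argument does not address. Smith's actual route is to use the $K$-monotone semiflow to produce monotone orbits emanating from $E_1$ (decreasing in $\le_K$) and from $E_2$ (increasing in $\le_K$) whose limits are equilibria strictly inside the order interval; since you explicitly cite that work for this step, the gap is acknowledged rather than hidden, and the proposal is consistent with the paper's treatment.
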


Alternatively, we provide another result regarding the existence of a positive all-species-coexistence equilibrium. This result doesn't require the pre-knowledge about the boundary equilibrium.

\begin{cor}
    Consider the system \eqref{eq:1}-\eqref{eq:2}, if $-A$ is an $\mathcal{H}^+$ matrix, $-B$ is an $\mathcal{H}^+$ tensor, and they are associated with the same positive vector $v$, i.e., $-Av>\mathbf{0}$ and $-Bv^2>\mathbf{0}$, then there exist one unique positive equilibrium.
\end{cor}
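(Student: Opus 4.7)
The plan is to recognize this corollary as a direct application of Theorem \ref{thm:st} to the tensor form \eqref{eq:lvgt2} of the two-faction system. First I would observe that any positive equilibrium $z>\mathbf{0}$ must satisfy $\mathbf{1}+Az+Bz^2=\mathbf{0}$, since $R$ is a positive diagonal matrix and $\Dg(z)$ is invertible on $\mathbb{R}^{n+m}_{++}$. Rearranging signs, this is equivalent to
\begin{equation*}
(-A)\,z+(-B)\,z^2=\mathbf{1},
\end{equation*}
which is exactly the non-homogeneous tensor equation \eqref{eq:nhte} for $k=3$, with $A_2:=-A$, $A_3:=-B$, and right-hand side $b=\mathbf{1}$.

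Second, I would verify that the hypotheses of Theorem \ref{thm:st} are met. The paper explicitly records that every $\mathcal{H}^{+}$-tensor is an $\mathcal{S}$-tensor; hence $-A$ and $-B$ are both $\mathcal{S}$-tensors. The standing assumptions $-Av>\mathbf{0}$ and $-Bv^2>\mathbf{0}$ assert that these two $\mathcal{S}$-tensors share a common witness vector $v>\mathbf{0}$, which is precisely the ``associated with the same positive vector'' condition required in the statement of Theorem \ref{thm:st}.

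Third, invoking Theorem \ref{thm:st} delivers a unique positive solution of $(-A)z+(-B)z^2=\mathbf{1}$, which by the equivalence above is the unique positive equilibrium of \eqref{eq:1}-\eqref{eq:2}. I would also briefly remark that the right-hand side $b=\mathbf{1}$ is positive, so no further rescaling (as discussed in the remark following Theorem \ref{thm:st}) is needed. There is no substantive obstacle here; the only care required is to keep the sign conventions consistent when identifying the polynomial equation with \eqref{eq:nhte}, and to confirm that the $\mathcal{H}^{+}$ hypothesis (on both the matrix $-A$ and the third-order tensor $-B$) is indeed strong enough to guarantee the $\mathcal{S}$-tensor property invoked by the theorem, which it is.
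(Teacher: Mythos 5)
Your proposal is correct and follows essentially the same route as the paper: the paper's own proof is the one-line observation that an $\mathcal{H}^{+}$-tensor is an $\mathcal{S}$-tensor, implicitly passing through Theorem~\ref{thm:gsglvt} (and hence Theorem~\ref{thm:st}) applied to $(-A)z+(-B)z^2=\mathbf{1}$. You have simply spelled out the sign identification and the common-witness-vector hypothesis explicitly, which the paper leaves implicit.
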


\begin{proof}
    It suffices to notice that a $\mathcal{H}^+$ tensor is an $\mathcal{S}$-tensor.
\end{proof}

Notice that $-A$ and $-B$ have non-negative diagonal elements. The definition of an $\mathcal{H}^+$ tensor requires all positive diagonal elements. So, both settings are compatible with each other. 
    Furthermore, once the negative diagonal entries of $a_{ii}, b_{iii}$ are sufficiently large (in absolute value) compared with the off-diagonals, then the system has a unique globally asymptotically stable equilibrium from Theorem \ref{thm:gsglvt} and Corollary \ref{cor:gsglv2}.

The stability of a positive all-species-coexistence equilibrium can be checked by the following Theorem.

\begin{thm}\label{thm:allco}
Consider the system \eqref{eq:1}-\eqref{eq:2}, if the all-species-coexistence equilibrium $(x^{*}_{1},x^{*}_{2},\cdots ,x^{*}_{m}, y^{*}_{1},y^{*}_{2},\cdots ,y^{*}_{n})$ exists, then it is locally stable when the coefficients of all the first-order competitive terms ($b_{ij},\hat{b}_{ij},i\neq j$) and all the second-order terms except the self-influence term ($c_{ijk},d_{ijk},\hat{c}_{ijk},\hat{d}_{ijk},j,k\in Q_i$) are sufficiently small such that $\left(P\mathbf{J}_{(X^{*}, Y^{*})}P\left(
\begin{array}{c}
X^{*} \\
Y^{*} \\
\end{array}
\right)\right)_{i}<0$ holds.
\end{thm}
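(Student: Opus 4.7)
The approach I would take is to reduce local asymptotic stability of $(X^{*}, Y^{*})$ to verifying that the Jacobian $\mathbf{J}_{(X^{*}, Y^{*})}$ is Hurwitz, and then to obtain this Hurwitz property via Lemma~\ref{lem:1} applied to the conjugated matrix $\tilde{\mathbf{J}} := P\mathbf{J}_{(X^{*}, Y^{*})}P$, with $P$ the involution from the proof of Theorem~\ref{thm:irr}.

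The first step is to invoke that proof to conclude that $\tilde{\mathbf{J}}$ is an irreducible Metzler matrix: its diagonal blocks are the irreducible Metzler blocks $M_1, M_2$ as before, and conjugation by $P$ flips the signs of the non-positive cross-faction blocks $T_1, T_2$ so that they become non-negative. Since $P^2 = I_{m+n}$, the matrices $\mathbf{J}_{(X^{*}, Y^{*})}$ and $\tilde{\mathbf{J}}$ are similar and share eigenvalues, so it suffices to show $\tilde{\mathbf{J}}$ is Hurwitz.

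The second step applies Lemma~\ref{lem:1} to $\tilde{\mathbf{J}}$: being irreducible Metzler, it is Hurwitz if and only if there is a strictly positive $u$ with $\tilde{\mathbf{J}} u < \mathbf{0}$. Because $(X^{*}, Y^{*})^\top$ is an all-species-coexistence equilibrium, it is strictly positive, and the choice $u = (X^{*}, Y^{*})^\top$ reduces the requirement to
\begin{equation*}
\left(P\mathbf{J}_{(X^{*}, Y^{*})}P\,(X^{*}, Y^{*})^\top\right)_i < 0 \quad\text{for all } i,
\end{equation*}
which is precisely the hypothesis of the theorem. Combining the two steps, $\mathbf{J}_{(X^{*}, Y^{*})}$ is Hurwitz and hence $(X^{*}, Y^{*})$ is locally asymptotically stable.

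To see that this condition can indeed be enforced by taking the specified competitive and higher-order coefficients small, I would mimic the manipulation in statement (c) of Theorem~\ref{thm:mt1}: using $L_i(x^{*}, y^{*}) = 0$ (and likewise $\hat{L}_i(x^{*}, y^{*}) = 0$) together with Euler's identity applied block-wise to the polynomial $L_i$, each component of $\tilde{\mathbf{J}}(X^{*}, Y^{*})^\top$ simplifies to $r_i x_i^{*}$ (resp.\ $\hat{r}_i y_i^{*}$) times $-1$ plus an expression collecting only the non-self second-order terms of $L_i$ and the competitive first-order terms from the opposite faction, with the latter entering through the sign flip induced by $P$. When $b_{ij}, \hat{b}_{ij}$ and the non-self $c_{ijk}, d_{ijk}, \hat{c}_{ijk}, \hat{d}_{ijk}$ are small, the $-r_i x_i^{*}$ term dominates and the required inequality holds. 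The main obstacle is the careful bookkeeping of the signs produced by conjugation with $P$, so that the cross-faction competitive couplings combine with the non-self HOIs in exactly the way needed to make the smallness assumption yield a negative right-hand side rather than the opposite.
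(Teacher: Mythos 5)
Your proposal is correct and follows essentially the same route as the paper's proof: conjugate the Jacobian by $P$ to obtain an irreducible Metzler matrix with the same spectrum, apply Lemma~\ref{lem:1} with the positive vector $u=(X^{*},Y^{*})^\top$, and use $L_i(z^*)=0$ (resp.\ $\hat L_i(z^*)=0$) to reduce each component of $P\mathbf{J}P\,(X^{*},Y^{*})^\top$ to $-r_ix_i^{*}$ plus terms controlled by the competitive and non-self higher-order coefficients. The paper simply carries out the sign bookkeeping you defer, arriving at $-r_ix_i^{*}+r_ix_i^{*}\bigl(-c_{iii}x_i^{*2}+\sum_{j,k\in Q_i}c_{ijk}x_j^{*}x_k^{*}+3\sum_{j,k\in Q_i}d_{ijk}y_j^{*}y_k^{*}+2\sum_{j\neq i}b_{ij}y_j^{*}\bigr)$, which confirms that smallness of the listed coefficients yields the required negativity.
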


\begin{proof}
We know that the Jacobian $\mathbf{{J}}$ of the model \eqref{eq:1}-\eqref{eq:2} can be permutated into an irreducible Metzler matrix $\mathbf{\tilde{J}}$. Thus, $\mathbf{{J}}$ and $\mathbf{\tilde{J}}$ have the same eigenvalues. Therefore, $\mathbf{{J}}$ is Hurwitz if $\mathbf{\tilde{J}}$ is Hurwitz.
%Then, letting the notation 
Letting $Z^*=(X^*,Y^*)^\top$, we have 
{\small\begin{equation*}
\begin{aligned}
&\left(\mathbf{\tilde{J}}_{(X^{*}, Y^{*})}\left(
\begin{array}{c}
X^{*} \\
Y^{*} \\
\end{array}
\right)\right)_{i}=r_{i}x^{*}_{i}L_{i}(z^*)\\
&+r_{i}x^{*}_{i}\Big(-a_{ii}x^{*}_{i}-2c_{iii}x_{i}^{*2}+\sum_{k \neq i}c_{iik}x^{*}_{i}x^*_{k}+\sum_{k \neq i}c_{iki}x^{*}_{i}x^*_{k}\Big)\\
&~~~+\sum_{j \neq i}r_{i}x^{*}_{i}\Big(a_{ij}x^*_{k}+\sum_{k \neq i}c_{ijk}x^*_{j}x^*_{k}+\sum_{k\neq i}c_{ikj}x^*_{j}x^*_{k}\Big)\\
&~~~+\sum_{j \neq i}r_{i}x^{*}_{i}\Big(b_{ij}y^*_{j}+\sum_{k \in \mathbb{I}_{n}}d_{ijk}y^*_{j}y^*_{k}+\sum_{k \in \mathbb{I}_{n}}d_{ikj}y^*_{j}y^*_{k}\Big)\\
&=-r_{i}x^{*}_{i}+r_{i}x^{*}_{i}\Big(-c_{iii}x_{i}^{*2}+\sum_{j,k \in Q_i}c_{ijk}x^*_{j}x^*_{k}\\
&+3\sum_{j,k \in Q_i}d_{ijk}y^*_{j}y^*_{k}+2\sum_{j \neq i}b_{ij}y^*_{j}\Big), \quad i \in \mathbb{I}_{m}.
\end{aligned}
\end{equation*}}

On the other hand, for $i=j+m, j\in \mathbb{I}_{n}$, 
$\left(\mathbf{\tilde{J}}_{(X^{*}, Y^{*})}\left(
\begin{array}{c}
X^{*} \\
Y^{*} \\
\end{array}
\right)\right)_{i}
=-\hat{r}_{i}y^{*}_{i}
+\hat{r}_{i}y^{*}_{i}\Big(-\hat{c}_{iii}y_{i}^{*2}
+\sum_{j,k \in Q_i}\hat{c}_{ijk}y^*_{j}y^*_{k}
+3\sum_{j,k \in Q_i}\hat{d}_{ijk}x^*_{j}x^*_{k}+2\sum_{j \neq i}\hat{b}_{ij}x^*_{j}\Big).
$

According to Lemma \ref{lem:1}, $\mathbf{\tilde{J}}$ is Hurwitz if $b_{ij},c_{ijk},d_{ijk},\hat{b}_{ij},\hat{c}_{ijk},\hat{d}_{ijk}$ are sufficiently small such that $\left(\mathbf{\tilde{J}}_{(X^{*}, Y^{*})}\left(
\begin{array}{c}
X^{*} \\
Y^{*} \\
\end{array}
\right)\right)_{i}<0$ for all $i$. 
%This completes the proof.
\end{proof}

\begin{remark}
In \cite{goh1979stability} (Theorem 3 and (A8) in Appendix) and \cite{smith1986competing} (Theorem 3.8), a sufficient condition for the global stability of a positive equilibrium (all-species-coexistence equilibrium) for the abstract system $\dot{N}_i=N_i F_i\left(N_1, N_2, \ldots, N_m\right), \quad i=1,2, \ldots, m$ was provided by Lyapunov function \cite{goh1979stability} or monotone system theory \cite{smith1986competing}. Since both systems proposed in our paper can be written in such a form, the results in \cite{goh1979stability} are also valid for our models. However, Theorem 3.8 \cite{smith1986competing} can not be applied to our system because HOIs break condition 3.2 in \cite{smith1986competing}. The results in \cite{goh1979stability,smith1986competing} miss the possibility of bistability and what kind of role the HOIs play in the species' coexistence. Our paper fills this gap. Moreover, the arguments similar to remark \ref{rem:1} are applicable for the case of Theorem \ref{thm:allco}. For the global stability, the corollary \ref{cor:gsglv2} is directly applicable.
\end{remark}

\begin{thm}\label{thm:bec}
Consider the system \eqref{eq:1}-\eqref{eq:2}, the boundary equilibrium $( x_1^*,\ldots,x_l^*,\mathbf{0}_{m-l},\mathbf{0}_n),\, l< m$ or $(\mathbf{0}_{m}, y_1^*,\ldots,y_p^*,\mathbf{0}_{n-p}),\, p<n$, if it exists, is unstable.
\end{thm}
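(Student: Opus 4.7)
The plan is to mimic the block-triangular Jacobian argument from statement (b) of Theorem \ref{thm:mt1}, now adapted to the two-faction system. First I would consider the boundary equilibrium $z^* = (x_1^*,\ldots,x_l^*,\mathbf{0}_{m-l},\mathbf{0}_n)$ with $l < m$ and permute coordinates so that the $l$ surviving components come first, followed by the $m-l$ vanishing components of the first faction and then all $n$ (vanishing) components of the second faction. Since every off-diagonal partial derivative appearing in \eqref{Jac-1}--\eqref{jac-2} and in the analogous expressions for \eqref{eq:2} carries an overall factor of $x_i$ or $y_i$, every row of the Jacobian indexed by a zero coordinate has vanishing off-diagonal entries. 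This produces an upper block-triangular Jacobian
\[
J_{z^*} = \begin{pmatrix} M & \Omega_1 & \Omega_2 \\ 0 & D_1 & 0 \\ 0 & 0 & D_2 \end{pmatrix},
\]
where $M$ is the Jacobian of the $l$-species cooperative subsystem evaluated at $(x_1^*,\ldots,x_l^*)$, and $D_1$, $D_2$ are diagonal matrices whose entries contribute additional eigenvalues of $J_{z^*}$.

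The decisive step would be to compute $D_1$ explicitly. For $i \in \{l+1,\ldots,m\}$, substituting $x_{l+1}=\cdots=x_m=0$ and $y=0$ into $L_i$ collapses it to $1 + \sum_{j\le l} a_{ij} x_j^* + \sum_{j,k\le l} c_{ijk} x_j^* x_k^*$ (the $Q_i$ restriction is automatic since $i>l$ while $j,k\le l$), so
\[
(D_1)_{ii} = r_i\Bigl(1 + \sum_{j\le l} a_{ij} x_j^* + \sum_{j,k\le l} c_{ijk} x_j^* x_k^*\Bigr).
\]
By the standing non-negativity of the intra-faction coefficients $a_{ij}$ and $c_{ijk}$, this quantity is bounded below by $r_i > 0$. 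Block-triangularity then hands over a strictly positive real eigenvalue of $J_{z^*}$, from which instability follows.

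The symmetric case $(\mathbf{0}_m, y_1^*,\ldots,y_p^*,\mathbf{0}_{n-p})$ with $p < n$ would follow by swapping the roles of the two factions; the strictly positive eigenvalues then come from the diagonal block indexed by the losers $y_{p+1},\ldots,y_n$, whose entries equal $\hat{r}_i\bigl(1 + \sum_{j\le p}\hat{a}_{ij} y_j^* + \sum_{j,k\le p}\hat{c}_{ijk} y_j^* y_k^*\bigr)\ge \hat{r}_i > 0$. I do not expect any serious obstacle; the one point that needs care is to confirm that no inter-faction competitive coefficient (which enters with a minus sign) appears in these loser diagonal entries. This is guaranteed precisely because the opposite faction is identically zero at the equilibrium, and it is what makes the theorem's hypothesis—an equilibrium in which the opposite faction is entirely absent—essential to the conclusion.
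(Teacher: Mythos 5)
Your proposal is correct and takes essentially the same route as the paper: block-triangularize the Jacobian at the boundary equilibrium and extract a positive eigenvalue from the diagonal entries associated with the extinct members of the surviving faction, which equal $r_i\bigl(1+\sum_{j\le l}a_{ij}x_j^*+\sum_{j,k\le l}c_{ijk}x_j^*x_k^*\bigr)\ge r_i>0$ by intra-faction cooperativity. The paper packages this last step as an appeal to the instability of boundary equilibria of the single-faction cooperative subsystem (Theorem~\ref{thm:mt1}(b)), whereas you compute the positive diagonal entries directly; your version is, if anything, the more explicit of the two.
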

%\hjk{to save some space, I think we should always write $0_\bullet$ to denote a vector of zeroes.}\hma{However, the number of 0 is arbitrary here, it may cause some misunderstanding when using vector here}

\begin{proof}
We firstly investigate the first case $( x_1^*,\ldots,x_l^*,\mathbf{0}_{m-l},\mathbf{0}_n), l< m$.
The corresponding Jacobian matrix is
$\mathbf{J}_{(X^{*},  0, \cdots, 0)}=\left(
\begin{array}{cc}
M & \Omega \\
\mathbf{0}_{(m+n-l)\times l} & D \\
\end{array}
\right)$,
where $M$ is an irreducible Metzler matrix and represents the Jacobian of the sub-cooperative-system from the winner faction on a boundary equilibrium point, and $D$ is a diagonal matrix.
We know from Theorem \ref{thm:1}, that $M$ is unstable and thus the equilibrium $( x_1^*,\ldots,x_l^*,\mathbf{0}_{m-l},\mathbf{0}_n), l< m$ is unstable.
The proof, for the second case, is exactly analogous.
\end{proof}

\begin{remark}
Theorem \ref{thm:bec} shows that any equilibrium where the winners are a strict subset of a single faction, is unstable because any of those equilibria can be permuted into the equilibrium $( x_1^*,\ldots,x_l^*,\mathbf{0}_{m-l},\mathbf{0}_n), l< m$ or $(\mathbf{0}_{m}, y_1^*,\ldots,y_l^*,\mathbf{0}_{n-l}), p<n$ by index permutation.
\end{remark}

\begin{thm}\label{thm:bec2}%\hjk{Probably $M^*$ is not the best choice for this equilibrium.}
Consider the system \eqref{eq:1}-\eqref{eq:2}, the boundary equilibrium $M^{*}=(x_{1}^{*},\cdots,  x_{a}^{*}, \mathbf{0}_{m-a}, y_{1}^{*},\cdots,  y_{b}^{*},\mathbf{0}_{n-b})$, if it exists, is locally stable when the coefficients of the first- ($b_{ij},\hat{b}_{ij},i\neq j$) and second-order competitive terms ($d_{ijk},\hat{d}_{ijk},j,k\in Q_i$) of the losers are sufficiently large, such that $1+\sum_{j \in S}a_{ij}x^*_{j}-\sum_{j \in S}b_{ij}y^*_{j}
   +\sum_{j,k \in S}c_{ijk}x^*_{j}x^*_{k}-\sum_{j,k \in S}d_{ijk}y^*_{j}y^*_{k}<0$ and $1+\sum_{j \in S}\hat{a}_{ij}y^*_{j}-\sum_{j \in S}\hat{b}_{ij}x^*_{j}
   +\sum_{j,k \in S}\hat{c}_{ijk}y^*_{j}y^*_{k}-\sum_{j,k \in S}\hat{d}_{ijk}x^*_{j}x^*_{k}<0$, and $(x_{1}^{*},\cdots,  x_{a}^{*}, y_{1}^{*},\cdots,  y_{b}^{*})$ is a stable all-species-coexistence equilibrium point of the sub-system of the winners when ignoring the loser agents.
\end{thm}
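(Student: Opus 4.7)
The plan is to mirror the block-triangular Jacobian argument used in Theorem \ref{thm:1fac} but with the winners/losers split occurring inside \emph{both} factions. After permuting indices so that winners come first in each faction (i.e., reordering to $(x_1^*,\ldots,x_a^*,y_1^*,\ldots,y_b^*,\mathbf{0}_{m-a},\mathbf{0}_{n-b})$), I would compute the Jacobian $\mathbf{J}_{(M^*)}$ block by block. The key observation is that for any loser index $i$ (with $x_i^*=0$ in faction 1, or $y_i^*=0$ in faction 2), the factor $x_i$ in front of every partial derivative $\frac{\partial \dot{x}_i}{\partial x_j}$ or $\frac{\partial \dot{x}_i}{\partial y_j}$ with $j\neq i$ vanishes, so the entire row of the Jacobian corresponding to a loser has only one nonzero entry, which is the diagonal one $r_i L_i(M^*)$ (respectively $\hat r_i \hat L_i(M^*)$).

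Consequently, $\mathbf{J}_{(M^*)}$ has the upper-triangular block form
\begin{equation*}
\mathbf{J}_{(M^*)}=\begin{pmatrix} M_W & \Omega \\ \mathbf{0} & D \end{pmatrix},
\end{equation*}
where $M_W\in\mathbb{R}^{(a+b)\times(a+b)}$ is precisely the Jacobian, evaluated at $(x_1^*,\ldots,x_a^*,y_1^*,\ldots,y_b^*)$, of the two-faction subsystem obtained by discarding the loser species, and $D$ is diagonal of size $(m-a)+(n-b)$ with entries $D_{ii}=r_iL_i(M^*)$ for loser $i$ in faction 1 and $D_{ii}=\hat r_i\hat L_i(M^*)$ for loser $i$ in faction 2. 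Since the spectrum of a block-triangular matrix is the union of the spectra of its diagonal blocks, it suffices to check that both $M_W$ and $D$ are Hurwitz.

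By hypothesis, $(x_1^*,\ldots,x_a^*,y_1^*,\ldots,y_b^*)$ is a stable all-species-coexistence equilibrium of the winners' subsystem, so $M_W$ is Hurwitz (this is exactly the situation handled by Theorem \ref{thm:allco}). For the diagonal block $D$, the two assumed inequalities say precisely that $L_i(M^*)<0$ for every loser $i$ in faction 1 and $\hat L_i(M^*)<0$ for every loser $i$ in faction 2, so each diagonal entry of $D$ is negative. Therefore $\mathbf{J}_{(M^*)}$ is Hurwitz and $M^*$ is locally asymptotically stable, which completes the proof.

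The only mildly delicate step is verifying the block-triangular structure: one has to check that both the cross-partials $\partial\dot{x}_i/\partial y_j$ and same-faction partials $\partial\dot{x}_i/\partial x_j$ for a loser $i$ vanish because they all carry an explicit $x_i$ (or $y_i$) factor that is zero at $M^*$, while the diagonal derivative $\partial\dot{x}_i/\partial x_i$ picks up the surviving term $r_i L_i(M^*)$ from differentiating the outer $x_i$. Once this bookkeeping is in place, the remainder of the argument is an immediate appeal to Theorem \ref{thm:allco} for the winner block and to the sign hypothesis for the loser block, so I do not anticipate any further obstacle.
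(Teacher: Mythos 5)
Your proposal is correct and follows essentially the same route as the paper: permute indices so the winners come first, observe that the Jacobian at $M^*$ is block upper-triangular with the winners' subsystem Jacobian as the leading block and a diagonal loser block with entries $r_iL_i(M^*)$ (resp. $\hat r_i\hat L_i(M^*)$), then conclude Hurwitz-ness from the stability hypothesis on the winners' equilibrium and the assumed sign conditions on the loser entries. Your explicit bookkeeping of why the loser rows vanish off the diagonal is a slightly more detailed justification of the block structure than the paper gives, but the argument is the same.
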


\begin{proof}
Firstly, we perform index permutation, and $M^{*}$ can be permuted as $N^{*}=(x_{1}^{*},\cdots,  x_{a}^{*}, y_{1}^{*},\cdots,  y_{b}^{*},\mathbf{0}_{m-a},\mathbf{0}_{n-b})$.
The corresponding Jacobian matrix after the index permutation is
$\mathbf{J}_{N^*}=\left(
\begin{array}{cc}
M & \Omega \\
\mathbf{0}_{(m+n-a-b)\times(a+b)} & D \\
\end{array}
\right),$
where $M$ is an irreducible Metzler matrix and represents the Jacobian of the sub-system of all winners on an all-species-coexistence equilibrium point, $D$ is a diagonal matrix, with entries 
{\small\begin{equation*}
    \begin{split}
   D_{i}&=r_{i}L_{i}(z^*)=r_{i}(1+\sum_{j \in S}a_{ij}x^*_{j}-\sum_{j \in S}b_{ij}y^*_{j}\\
   &+\sum_{j,k \in S}c_{ijk}x^*_{j}x^*_{k}-\sum_{j,k \in S}d_{ijk}y^*_{j}y^*_{k}) ;
\end{split}
\end{equation*}}\par
\noindent if $i$ denotes the loser agent in the first faction. Otherwise, if $i$ denotes the loser agent in the second faction, $D_i$ takes a similar form. %$D_{i}=\hat{r}_{i}\hat{L}_{i}(z^*)=\hat{r}_{i}(1+\sum_{j \in S}\hat{a}_{ij}y^*_{j}-\sum_{j \in S}\hat{b}_{ij}x^*_{j}+\sum_{j,k \in S}\hat{c}_{ijk}y^*_{j}y^*_{k}-\sum_{j,k \in S}\hat{d}_{ijk}x^*_{j}x^*_{k})$.
In order to have all negative eigenvalues, the first- and second-order cooperative terms of the losers must be sufficiently small, so that all $D_i<0$, and $(x_{1}^{*},\cdots,  x_{a}^{*}, y_{1}^{*},\cdots,  y_{b}^{*})$ must be a stable all-species-coexistence equilibrium point of the sub-system of the winners when ignoring the loser agents so that $M$ is Hurwitz. %We recall that the second condition is satisfied when the HOIs except the self-competition term is sufficiently small for the winners.
\end{proof}

\begin{remark}
  Theorem \ref{thm:bec2} further implies that any equilibrium, where the winners are in the different camps, but not all of them, is stable under the same condition as Theorem \ref{thm:bec2} because any of those equilibria can be permuted into the equilibrium $M^{*}$ or $N^{*}$ by index permutation. In light of Corollary \ref{cor:2} and Lemma \ref{lem:qcp1}, by following a similar procedure of Remark \ref{rem:1}, we can confirm that there are always cases that conditions in Theorem \ref{thm:1fac}, \ref{thm:allco}, and \ref{thm:bec2} are satisfied.
\end{remark}

\section{Purely competitive higher-order Lotka-Volterra}
In this section, we consider the case of competition among $n$ species:
\begin{equation}\label{eq:lvgpc}
\dot{x_{i}}=r_{i}x_{i}\left(1-\sum^{n}_{j = 1}a_{ij}x_{j}-\sum^{n}_{j,k = 1}b_{ijk}x_{j}x_{k}\right),
\end{equation}
with $i=1,\ldots, n$, and where $a_{ij}$ and $b_{ijk}$ are all non-negative. In this scenario, all species have no friends and they all compete with each other.

\begin{lemma}
    The system \eqref{eq:lvgpc} is a positive system. Every positive orbit has an upper bound.
\end{lemma}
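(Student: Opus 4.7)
The plan is to handle the two claims separately, proving positive invariance of $\mathbb{R}^n_+$ first and then bounding each coordinate by comparison with a scalar logistic-type ODE.

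For the positivity part, I would observe that the right-hand side of \eqref{eq:lvgpc} factors through $x_i$, so the coordinate hyperplane $\{x_i = 0\}$ is invariant. Since the vector field is polynomial (hence locally Lipschitz), uniqueness of solutions prevents any trajectory from crossing such a hyperplane. Equivalently, one may use the explicit integral representation
\begin{equation*}
x_i(t) = x_i(0)\exp\!\left(\int_0^t r_i\Bigl(1 - \sum_{j} a_{ij}x_j(s) - \sum_{j,k} b_{ijk}x_j(s)x_k(s)\Bigr)\,ds\right),
\end{equation*}
which manifestly preserves the sign of $x_i(0)$. This is essentially the same argument used already in the excerpt for the existence of boundary invariance of \eqref{eq:lvg}, so it is essentially a citation.

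For the upper bound, I would exploit the purely competitive structure: every $a_{ij}$, $b_{ijk}$ is nonnegative and every interaction enters with a minus sign, and by the first step every $x_j(t) \geq 0$. Dropping all nondiagonal (and necessarily nonpositive) contributions in the bracket yields the scalar differential inequality
\begin{equation*}
\dot{x}_i \leq r_i x_i\bigl(1 - a_{ii}x_i - b_{iii}x_i^2\bigr),\qquad i=1,\dots,n.
\end{equation*}
Under the standing assumption that for each $i$ at least one of $a_{ii}, b_{iii}$ is strictly positive (the usual carrying-capacity requirement in competitive Lotka--Volterra), the right-hand side is negative once $x_i$ exceeds the unique positive root $K_i$ of $1 - a_{ii}y - b_{iii}y^2 = 0$. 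A standard scalar comparison argument then gives $x_i(t) \leq \max\{x_i(0), K_i\}$ for all $t \geq 0$, so every orbit is bounded componentwise.

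The main obstacle is really only notational: one must confirm the self-limitation assumption needed to produce a finite $K_i$. If both $a_{ii}$ and $b_{iii}$ vanished for some $i$ there would be no inherent bound on species $i$, so I would make this strict-self-competition assumption explicit at the start of the boundedness step. Once it is in place, everything else is a routine differential inequality plus the positivity established in the first step.
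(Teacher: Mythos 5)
Your proof is correct and follows essentially the same route as the paper: positivity from the factorization of the vector field through $x_i$, and boundedness by discarding the nonnegative off-diagonal contributions to obtain a scalar comparison. The paper keeps only the $a_{ii}$ term and bounds $\dot{x}_i \leq r_i x_i(1-a_{ii}x_i)$, implicitly assuming $a_{ii}>0$, the same self-limitation hypothesis you rightly make explicit (your version retaining $b_{iii}$ is marginally more general).
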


\begin{proof}
    It is sufficient to check that $\dot{x}_i=0$ when $x_i=0$ and $\dot{x}_i=r_{i}x_{i}\left(1-\sum^{n}_{j = 1}a_{ij}x_{j}-\sum^{n}_{j,k = 1}b_{ijk}x_{j}x_{k}\right)\leq r_i x_i- r_i a_{ii} x_i^2=r_ix_i(1-a_{ii}x_i)<0$ when $x_i>\frac{1}{a_{ii}}$.
\end{proof}

The existence of one unique positive equilibrium is guaranteed by Theorem \ref{thm:gsglvt}. We recall that 
a strictly diagonally dominant tensor with positive diagonal elements is a $\mathcal{S}$-tensor, see \cite{wang2019existence}. For example, we consider $A$ and $B$ are strictly diagonally non-positive dominant tensors and $-A$ and $-B$ are with positive diagonal elements. Then $-A$ and $-B$ are $\mathcal{S}$-tensors. The Theorem \ref{thm:gsglvt} is thus directly applicable. Once there is a positive equilibrium, the positive equilibrium is globally attractive when the condition of corollary \ref{cor:gsglv} is fulfilled. It is worthwhile to mention that in the case of pure competition, we don't need the assumption of the existence of an upper bound in corollary \ref{cor:gsglv}. The general upper bound in the case is $R=max_i\frac{1}{a_{ii}}$. This leads to the following result:

\begin{cor}
    Consider the system \eqref{eq:lvgpc}. Let $\hat{R}=\max_i\frac{1}{a_{ii}}$.
If there holds
$-d_i a_{ii}>\sum_{j\neq i}d_j|a_{ij}|+\hat{R}\sum_{j\neq i, 1\leq k\leq n}(d_j|b_{ijk}|+d_j|b_{ikj}|)$ for all $i$, then the positive equilibrium $x^*$, if it exists, is unique and globally asymptotically stable.
\end{cor}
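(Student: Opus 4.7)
The plan is to obtain the corollary as a direct specialization of Corollary \ref{cor:gsglv} to the purely competitive system \eqref{eq:lvgpc}, exploiting the fact that, unlike in the general case, a bounded positively invariant set is available automatically for \eqref{eq:lvgpc}. The essential task is therefore to check that every hypothesis of Corollary \ref{cor:gsglv} is met.

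First I would use the lemma immediately preceding the statement: every positive orbit of \eqref{eq:lvgpc} is trapped in the orthant and $\dot{x}_i<0$ whenever $x_i>1/a_{ii}$. This yields an a priori bound $\|x(t)\|_\infty\leq \hat{R}=\max_i 1/a_{ii}$. Consequently, for any sufficiently small $\epsilon>0$ (in particular smaller than every coordinate of the purported equilibrium $x^\ast$), the set
\begin{equation*}
\mathcal{W}=\{x\in\mathbb{R}^n : \|x\|_\infty\leq \hat{R},\; x_i\geq\epsilon \;\forall i\}
\end{equation*}
is a bounded positively invariant neighborhood of $x^\ast$, supplying the hypothesis on $\mathcal{W}$ required in Corollary \ref{cor:gsglv}.

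Next I would translate notation: set $\tilde{a}_{ij}=-a_{ij}$ and $\tilde{b}_{ijk}=-b_{ijk}$, so that \eqref{eq:lvgpc} takes the form \eqref{eq:lvg} with coefficient matrix $\tilde{A}$ and tensor $\tilde{B}$. Because $a_{ij},b_{ijk}\geq 0$, the tensor $\tilde{B}$ is non-positive and each diagonal entry $\tilde{a}_{ii}=-a_{ii}$ is strictly negative (this is the sign regime of Corollary \ref{cor:gsglv}). Moreover, the stated inequality becomes
\begin{equation*}
-d_i\tilde{a}_{ii}>\sum_{j\neq i}d_j|\tilde{a}_{ij}|+\hat{R}\sum_{j\neq i,\,1\leq k\leq n}\bigl(d_j|\tilde{b}_{ijk}|+d_j|\tilde{b}_{ikj}|\bigr),
\end{equation*}
which is precisely the hypothesis of Corollary \ref{cor:gsglv}, since the absolute values are insensitive to the sign flip.

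With these identifications, Corollary \ref{cor:gsglv} applies verbatim and delivers uniqueness and global asymptotic stability of $x^\ast$ inside $\mathcal{W}$; combined with the a priori bound, this extends to all positive initial data, since every trajectory eventually enters (and remains in) a set of the form $\{x : \|x\|_\infty\leq \hat{R},\, x_i\geq\epsilon\}$. There is no real obstacle here: the only subtlety is keeping bookkeeping of signs between \eqref{eq:lvg} and \eqref{eq:lvgpc}, and verifying that the automatic upper bound $\hat{R}=\max_i 1/a_{ii}$ indeed plays the role of the quantity $\hat{R}$ appearing in Corollary \ref{cor:gsglv}, as already anticipated in the paragraph preceding the statement.
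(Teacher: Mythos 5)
Your proposal is correct and follows essentially the same route as the paper, which presents this corollary without a separate proof, precisely as the specialization of Corollary \ref{cor:gsglv} in which the upper bound $\hat{R}=\max_i 1/a_{ii}$ is supplied automatically by the preceding lemma, with the same sign bookkeeping between \eqref{eq:lvg} and \eqref{eq:lvgpc} that you carry out. The one step you assert without justification --- that the lower threshold $x_i\geq\epsilon$ is positively invariant for small $\epsilon$, which is not automatic in a competitive system --- is glossed over in exactly the same way by the paper, so your argument is faithful to the paper's own.
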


Otherwise, the existence and stability can be checked with the following result, which is a stronger result.

\begin{thm}\label{thm:gsnpt}
    System \eqref{eq:lvgt} under the competitive setting (i.e., \eqref{eq:lvgpc}) has one unique positive equilibrium if $-B$ is an irreducible nonnegative $\mathcal{H}^+$-tensor, $-A$ is an irreducible nonnegative $\mathcal{H}^+$-matrix. The equilibrium point $x^*$ is globally asymptotically stable.
\end{thm}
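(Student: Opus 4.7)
The plan is to split the argument into (i) existence and uniqueness of the positive equilibrium and (ii) its global asymptotic stability, leveraging the $\mathcal{H}^+$ hypotheses in both steps.

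For step (i), I would invoke Theorem \ref{thm:gsglvt}. An irreducible nonnegative $\mathcal{H}^+$-matrix (respectively, $\mathcal{H}^+$-tensor) is in particular an $\mathcal{S}$-tensor, so both $-A$ and $-B$ fall within the hypothesis of Theorem \ref{thm:gsglvt}, and the equilibrium equation $-Ax-Bx^2=\mathbf{1}$ admits a unique positive solution $x^*$. The only subtlety is that Theorem \ref{thm:st} needs a common positive vector $v$ realizing the $\mathcal{S}$-tensor property for both $-A$ and $-B$; this can be arranged by using that $\langle -A\rangle$ is an irreducible nonsingular $\mathcal{M}$-matrix, so $\langle -A\rangle^{-1}$ is entrywise positive and the set $\{v>0:\langle -A\rangle v>0\}=\langle -A\rangle^{-1}\mathbb{R}^n_{++}$ is an open cone large enough to intersect the analogous cone $\{v>0:\langle -B\rangle v^2>0\}$ for $-B$.

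For step (ii), I propose a weighted Goh-type Lyapunov function
\begin{equation*}
V(x)=\sum_{i=1}^n d_i\Big(x_i-x_i^*-x_i^*\ln\tfrac{x_i}{x_i^*}\Big),
\end{equation*}
where the weights $d_i>0$ are chosen proportionally to the common positive vector $v$ produced in step (i). The function $V$ is positive definite about $x^*$ and radially unbounded on $\mathbb{R}^n_{++}$. Substituting the equilibrium identity $1=\sum_j a_{ij}x_j^*+\sum_{j,k} b_{ijk}x_j^*x_k^*$ into the dynamics, splitting $x_jx_k-x_j^*x_k^*=(x_j-x_j^*)x_k+x_j^*(x_k-x_k^*)$, and writing $u=x-x^*$, a direct computation gives
\begin{equation*}
\dot V(x)=-u^{\top} D\, C(x)\, u,\qquad D=\Dg(d_i r_i),
\end{equation*}
with $C(x)_{ij}=a_{ij}+\sum_k b_{ikj}x_k+\sum_k b_{ijk}x_k^*$. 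Under the $\mathcal{H}^+$ hypotheses, the common vector $v$ yields weighted diagonal dominance of $C(x)$ for every $x$ in the compact positively invariant set $\mathcal{W}=\{x\ge 0:x_i\le 1/a_{ii}\}$ identified in the preceding lemma, and this, together with the positivity of the diagonal of $C(x)$, gives positive definiteness of the symmetric part of $DC(x)$. Consequently $\dot V\le 0$ with equality only at $x^*$, and LaSalle's invariance principle on the positively invariant set $\mathbb{R}^n_{++}\cap\mathcal{W}$ yields global asymptotic stability.

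The main obstacle is the last step of the Lyapunov argument: translating the purely algebraic $\mathcal{H}^+$-conditions on the comparison tensor/matrix into positive-definiteness of $DC(x)+C(x)^{\top}D$ \emph{uniformly} in $x\in\mathcal{W}$. Unlike the classical purely-pairwise case---where $C$ is constant and the dominance is immediate from Volterra--Lyapunov stability---the higher-order interactions make $C(x)$ state-dependent, so one must carefully track how the shared vector $v$ controls both $\sum_k b_{ikj}x_k$ and $\sum_k b_{ijk}x_k^*$ over the entire bound $x\le \hat{R}\mathbf{1}$. An alternative route, if that direct estimate proves too delicate, would be to invoke Corollary \ref{cor:gsglv} after demonstrating that the $\mathcal{H}^+$ assumptions, combined with the a priori bound $\hat{R}=\max_i 1/a_{ii}$, yield a weighted dominance of the required form.
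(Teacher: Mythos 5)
Your step (i) is essentially the paper's: invoke Theorem \ref{thm:gsglvt} after noting that $\mathcal{H}^+$ implies $\mathcal{S}$. You overcomplicate the common-vector issue, though: since $-A$ and $-B$ are assumed \emph{nonnegative} with positive diagonal entries, $-Av>\mathbf{0}$ and $-Bv^2>\mathbf{0}$ hold for \emph{every} $v>\mathbf{0}$, so no cone-intersection argument is needed (and your claim that the two cones are ``large enough to intersect'' would not be justified for general $\mathcal{H}^+$ data anyway).

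Step (ii) contains a genuine gap, and it is exactly the one you flag yourself. After reducing to $\dot V=-u^{\top}DC(x)u$ with $C(x)_{ij}=a_{ij}+\sum_k b_{ikj}x_k+\sum_k b_{ijk}x_k^*$, you need $DC(x)+C(x)^{\top}D\succ 0$ with a \emph{single} positive diagonal $D$, uniformly over the invariant box. The $\mathcal{H}^+$ hypothesis on $-B$ is a statement about the comparison tensor, i.e.\ it furnishes a positive $w$ with $b_{iii}w_i^2>\sum_{(j,k)\neq(i,i)}b_{ijk}w_jw_k$; this is a quadratic-form dominance and does not translate into weighted row (let alone row-and-column) dominance of the flattened, state-dependent matrix $\sum_k b_{i\cdot k}x_k+\sum_k b_{ik\cdot}x_k^*$ for all $x$ in $\mathcal{W}$. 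Classical diagonal stability of $\mathcal{H}$-matrices with positive diagonal applies to a fixed matrix, not to this $x$-dependent family with a common $D$, so the crucial inequality is asserted rather than proved. Your fallback does not close the gap either: Corollary \ref{cor:gsglv} requires $-d_ia_{ii}>\sum_{j\neq i}d_j|a_{ij}|+\hat R\sum_{j\neq i,k}(d_j|b_{ijk}|+d_j|b_{ikj}|)$, and the factor $\hat R=\max_i 1/a_{ii}$ multiplying the higher-order terms makes this condition strictly stronger than the theorem's $\mathcal{H}^+$ hypotheses, so it cannot be deduced from them. The paper avoids this entirely by using a ratio-type Lyapunov function $V=\min_i\bigl((x_i-x_i^*)/x_i^*\bigr)^2$ (and its max counterpart): because all interaction coefficients in the competitive tensor form are nonpositive, the bounds $x_j\ge V_m^{1/2}x_j^*$ (resp.\ $\le$) combine with the equilibrium identity $\sum_j a_{mj}x_j^*+\sum_{jk}b_{mjk}x_j^*x_k^*=-1$ to give $\dot V\le 0$ directly, with no diagonal-stability or dominance estimate needed beyond what already gives existence and uniqueness. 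If you want to salvage the Goh-function route you would have to prove the uniform diagonal stability of $C(x)$ as a separate lemma; as written, the proof is incomplete.
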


\begin{proof}
    Notice that an irreducible nonnegative tensor is a $\mathcal{S}$-tensor and for any $v>\mathbf{0}$, $Av>\mathbf{0}$ and $Bv^2>\mathbf{0}$. Thus, the system has a unique positive equilibrium according to Theorem \ref{thm:gsglvt}. Define the set $U_1=\{x|x\geq x^*\}$ and $U_2=\{x|\min \frac{x_i}{x_i^*}<1\}$.

 Then, we let $V_m= \min_i (\frac{x_i}{x^*_i})^2$ and $V= \min_i (\frac{x_i-x_i^*}{x^*_i})^2$. Further let $m=\arg\min_i (\frac{x_i}{x^*_i})^2$ and $t=\arg\min_i (\frac{x_i-x^*_i}{x^*_i})^2$. It holds $m=t$ because of $\frac{x_i-x^*_i}{x^*_i}=\frac{x_i}{x^*_i}-1$.

    Furthermore, it must hold $-x_i=-\frac{x_ix_i^*}{x_i^*}\leq -V_m^{\frac{1}{2}}x_i^*.$ It holds as a equality when $i=m$, otherwise it holds as an strict inequality.

    Then, we have in the set $U_1$
    {\small\begin{equation*}
        \begin{split}
            \dot{V}&= \frac{2r_mx_m}{(x_m^*)^2}(x_m-x_m^*)(Ax+Bx^2+\mathbf{1})_m \\
            &= \frac{2r_mx_m}{(x_m^*)^2}(x_m-x_m^*)(\sum_j a_{mj} x_j+ \sum_{jk} b_{mjk} x_j x_k+1)\\
            &\leq \frac{2r_mx_m}{(x_m^*)^2}(x_m-x_m^*)(\sum_j a_{mj} x_j^*V_m^{\frac{1}{2}}+ \sum_{jk} b_{mjk} x^*_j x^*_k V_m+1)\\
            &\leq \frac{2r_mx_m}{(x_m^*)^2}(x_m-x_m^*)(\sum_j a_{mj} x_j^*V_m+ \sum_{jk} b_{mjk} x^*_j x^*_k V_m+1)\\
            &=\frac{2r_mx_m}{(x_m^*)^2}(x_m-x_m^*)(-V_m+1)\leq 0 \quad \forall x\in U_1.
        \end{split}
    \end{equation*}}

Otherwise, for the set $U_2$
    {\small\begin{equation*}
        \begin{split}
            \dot{V}
            &\leq \frac{2r_mx_m}{(x_m^*)^2}(x_m-x_m^*)(\sum_j a_{mj} x_j^*V_m^{\frac{1}{2}}+ \sum_{jk} b_{mjk} x^*_j x^*_k V^{\frac{1}{2}}_m+1)\\
            &=\frac{2r_mx_m}{(x_m^*)^2}(x_m-x_m^*)(-V^{\frac{1}{2}}_m+1)\leq 0 \quad \forall x\in U_2.
        \end{split}
    \end{equation*}}
Notice that $V$ is locally positive definite in $U_1$ and $U_2$, and $\dot{V}$ is negative semidefinite. The solution must converge to $\{x|\dot{V}=0\}$. In order to have $\dot{V}=0$, for every $i$, it must hold $i=m$ and $x_m=x_m^*$ which indicates that $\{x|\dot{V}=0\}$ only contains $x^*$. Thus, $x^*$ is asymptotically stable with a domain of attraction $U_1, U_2$.

Next, we further let $L_s=\max_i(\frac{x_i}{x^*_i})^2$ and $s=\arg \max_i(\frac{x_i}{x^*_i})^2$.

Furthermore, it must hold $-x_i=-\frac{x_ix_i^*}{x_i^*}\geq -L_s^{\frac{1}{2}}x_i^*.$ It holds as a equality when $i=s$, otherwise it holds as an strict inequality.

Then, consider the set $U_3=\{x| x<x^*\}$, we have
    \begin{equation*}
        \begin{split}
            \dot{V}&= \frac{2r_sx_s}{(x_s^*)^2}(x_s)(Ax+Bx^2+\mathbf{1})_s \\
            &= \frac{2r_s x_s^2}{(x_s^*)^2}(\sum_j a_{sj} x_j+ \sum_{jk} b_{sjk} x_j x_k+1)\\
            &\geq \frac{2r_s x_s^2}{(x_s^*)^2}(\sum_j a_{sj} x_j^*L_m^{\frac{1}{2}}+ \sum_{jk} b_{sjk} x^*_j x^*_k L_m+1)\\
            &\geq \frac{2r_s x_s^2}{(x_s^*)^2}(\sum_j a_{sj} x_j^*L_m+ \sum_{jk} b_{sjk} x^*_j x^*_k L_m+1)\\
            &=\frac{2r_s x_s^2}{(x_s^*)^2}(-L_m+1)> 0 \quad \forall x\in U_3.
        \end{split}
    \end{equation*}
This guarantees that all trajectories in $U_3$ finally enter $U_1\cup U_2$. Thus, the $x^*$ is globally asymptotically stable.
\end{proof}

While the positive equilibrium corresponds to the winner-share-all consequence, the following result is about the winner-take-all case. This result is an extension of  \cite[Corollary 7.4]{zeeman1995extinction} and \cite[Theorem 15]{sb2010} for the classical competitive Lotka-Volterra model.

\begin{thm}\label{thm:wta}
    Consider the system \eqref{eq:lvgpc} and the following conditions:
    \begin{itemize}
        \item[(A)] $\frac{1}{ a_{j j}}<\frac{1}{a_{i j}}, \quad 1 \leq i<j \leq n$, and
        \item[(B)] $\frac{1}{ a_{j j}}>\frac{1}{ a_{i j}}, n \geq i>j \geq 1$.
    \end{itemize}
    Then, $\left(\frac{-a_{11}+\sqrt{a_{11}^2+4b_{111}}}{ 2b_{111}}, 0, \ldots, 0\right)$ is globally attracting on $\mathbb{R}_{n}^+$.
\end{thm}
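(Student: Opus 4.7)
The plan is to extend the classical Zeeman-type extinction argument for purely competitive Lotka--Volterra systems (\cite[Corollary 7.4]{zeeman1995extinction}) to the present higher-order setting, exploiting the fact that every $b_{ijk}\ge 0$ only reinforces competitive pressure and never introduces a cooperative effect. The scheme has three stages: (i) localize every positive orbit in a compact positively invariant set; (ii) drive the populations $x_n,x_{n-1},\ldots,x_2$ to zero in this order by an iterated comparison argument that uses conditions (A) and (B); and (iii) reduce the remaining dynamics to the scalar equation $\dot x_1=r_1 x_1(1-a_{11}x_1-b_{111}x_1^{2})$, whose unique positive equilibrium is $x_1^{*}=(-a_{11}+\sqrt{a_{11}^{2}+4b_{111}})/(2b_{111})$, and conclude $x_1(t)\to x_1^{*}$ via a standard asymptotically-autonomous semiflow argument.

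For step (i), the preceding lemma already yields boundedness; combined with the scalar estimate $\dot x_i\le r_i x_i(1-a_{ii}x_i)$ (obtained by discarding all other non-negative terms) one obtains $\limsup_{t\to\infty}x_i(t)\le 1/a_{ii}$. For step (ii), I would proceed by downward induction on $j\in\{n,n-1,\ldots,2\}$: at stage $j$, the populations $x_{j+1},\ldots,x_n$ are by inductive hypothesis arbitrarily small for large $t$, so the dynamics of $x_1,\ldots,x_j$ is asymptotic to the $j$-species subsystem, which satisfies conditions (A) and (B) in its own right. Dropping every non-negative cross term yields
\begin{equation*}
\dot x_j\;\le\;r_j\,x_j\bigl(1-a_{j1}x_1-b_{j11}x_1^{2}+o(1)\bigr).
\end{equation*}
By condition (B) we have $a_{j1}>a_{11}$ for every $j\ge 2$, while condition (A)---which bounds the per-capita effect $a_{1k}<a_{kk}$ of every other species on species~$1$---is used in a parallel bootstrap to show that $x_1(t)$ enters a neighbourhood of $x_1^{*}$. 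Together these make the bracket strictly negative for large $t$, so $x_j$ decays exponentially and the induction closes.

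The main obstacle is precisely the coupling in step (ii): we need $x_1$ near $x_1^{*}$ in order to drive $x_j$ to zero, yet we need $x_j$ small in order to drive $x_1$ near $x_1^{*}$. The key to breaking this loop is the \emph{ordering} of the induction: at the initial stage $j=n$ we use only the coarsest lower bound on $x_1$, namely that $x_1$ eventually lies above some $c>1/a_{n1}$, which follows from condition~(B) together with the $\limsup$ bounds from step (i); this already suffices to squeeze $x_n$ to zero. The resulting reduced system then satisfies conditions (A)--(B) afresh, and the argument iterates. Throughout, the non-negativity of the higher-order coefficients is used in two complementary ways: it permits discarding cross terms when upper-bounding $\dot x_j$, and it guarantees that the limiting scalar equation for $x_1$ has $x_1^{*}$ as its unique positive equilibrium, globally attracting on $(0,\infty)$.
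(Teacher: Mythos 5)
Your overall strategy coincides with the paper's: discard the non-negative higher-order terms to dominate \eqref{eq:lvgpc} by its classical competitive Lotka--Volterra counterpart, use conditions (A)--(B) to drive $x_2,\dots,x_n$ to extinction, and then treat $x_1$ as an asymptotically autonomous perturbation of the scalar equation $\dot x_1=r_1x_1(1-a_{11}x_1-b_{111}x_1^2)$, whose positive equilibrium is the stated value. The difference is that the paper simply cites the classical cascade (\cite[Theorem 15]{sb2010}, \cite[Corollary 7.4]{zeeman1995extinction}) for the extinction of species $2,\dots,n$, whereas you attempt to re-derive it inline --- and that re-derivation contains a genuine gap.

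The gap is the persistence bound for $x_1$ that is supposed to start your downward induction. You correctly identify the circularity ($x_j\to 0$ needs $x_1$ large; $x_1$ large needs the $x_j$ small), but the proposed way of breaking it --- that $x_1$ eventually exceeds some $c>1/a_{n1}$ ``from condition (B) together with the $\limsup$ bounds from step (i)'' --- does not follow. The only lower estimate those ingredients yield is
\[
\dot x_1\ \ge\ r_1x_1\Bigl(1-a_{11}x_1-\sum_{j\ge 2}\tfrac{a_{1j}}{a_{jj}}-\varepsilon-\text{(HOI terms)}\Bigr),
\]
and condition (A) only gives $a_{1j}/a_{jj}<1$ \emph{termwise}; the sum over $j\ge 2$, let alone augmented by the additional nonnegative $b_{1jk}x_jx_k$ contributions, can exceed $1$, so no positive $\liminf$ for $x_1$ --- much less one above $1/a_{n1}$ --- is obtained this way. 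Condition (B) ($a_{i1}>a_{11}$) constrains how species $1$ acts on the others and says nothing about a lower bound for $x_1$ itself. Without persistence of $x_1$, neither the decay of $x_n$ at the first stage nor the final asymptotically autonomous step (whose $\omega$-limit argument must exclude the unstable equilibrium $x_1=0$ of the limiting scalar equation) goes through. To close the argument you either need a genuine persistence (e.g.\ average-Lyapunov) argument for species $1$, or you should follow the paper and appeal directly to the cited classical theorem for the extinction of $x_2,\dots,x_n$ after the domination step.
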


\begin{proof}
    We have that
    \begin{equation*}
\dot{x_{i}}\leq r_{i}x_{i}\left(1-\sum^{n}_{j = 1}a_{ij}x_{j}\right),
\end{equation*}

The higher-order system \eqref{eq:lvgpc} is upper-bounded by a classical competitive Lotka-Volterra counterpart (\cite[chapter 5]{sb2010}). We know that $\left(\frac{1}{a_{11}}, 0, \ldots, 0\right)$ is globally attracting on $\mathbb{R}_{n}^+$ for a classical competitive Lotka-Volterra under the conditions (A) and (B) (\cite[Theorem 15]{sb2010}). If $i\neq 1$, then $x_i$ will converge to zero since the classical model converges to zero. If $i=1$, then $\dot{x_{i}}$ is an autonomous system $\dot{x_{i}}=1-a_{ii}x_i-b_{iii}x_i^2$ plus a vanishing perturbation. The solution of $\dot{x_{i}}=1-a_{ii}x_i-b_{iii}x_i^2$ will converge to $\left(\frac{-a_{11}+\sqrt{a_{11}^2+4b_{111}}}{ 2b_{111}}, 0, \ldots, 0\right)$. We then see that $\frac{\frac{-a_{11}+\sqrt{a_{11}^2+4b_{111}}}{ 2b_{111}}} {\frac{1}{a_{11}}} =\frac{a_{11}}{a_{11}+\sqrt{a_{11}^2+4b_{111}}}\leq 1$. This stable equilibrium is also upper-bounded by the equilibrium of its classical competitive Lotka-Volterra counterpart.

Thus, the solution of the system \eqref{eq:lvgpc} will converge to $\left(\frac{-a_{11}+\sqrt{a_{11}^2+4b_{111}}}{ 2b_{111}}, 0, \ldots, 0\right)$.
\end{proof}

\begin{remark}
    Notice that Theorem \ref{thm:wta} describes the case when the first species is the winner. However, via a simple index permutation, we can also describe the case when the other species is the winner.
\end{remark}

There are also some winner-share-all cases, but not all species are winners.

\begin{thm}\label{thm:wsa2}
Consider the system \eqref{eq:lvgpc}. The boundary equilibrium $M^{*}=(x_{1}^{*},\cdots,  x_{a}^{*}, \mathbf{0}_{n-a})$, if it exists, is locally stable when the coefficients of the first- ($a_{ij}, i\neq j$) and second-order competitive terms ($d_{ijk}, j,k\in Q_i$) of the losers are sufficiently large such that $1-\sum_{j \in S}a_{ij}x^*_{j}
   -\sum_{j,k \in S}c_{ijk}x^*_{j}x^*_{k}<0$, and moreover $(x_{1}^{*},\cdots,  x_{a}^{*})$ is a stable positive equilibrium point of the sub-system of the winners when ignoring the loser agents.
\end{thm}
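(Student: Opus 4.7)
The plan is to follow the block-triangular Jacobian strategy already used for Theorem \ref{thm:bec2}, specialized to the purely competitive single-faction setting of \eqref{eq:lvgpc}.

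First, up to an index permutation (which preserves the structural form of \eqref{eq:lvgpc}), I may assume the winners occupy the first $a$ coordinates, so that $M^{*}=(x_{1}^{*},\dots,x_{a}^{*},\mathbf{0}_{n-a})$ with $S=\{1,\dots,a\}$. The next step is to compute $\mathbf{J}_{M^*}$. Because $\dot x_{i}=r_{i}x_{i}L_{i}(x)$, every partial derivative $\partial\dot x_{i}/\partial x_{j}$ with $j\neq i$ carries an explicit factor $x_{i}$, while the diagonal entry equals $r_{i}L_{i}(x)+r_{i}x_{i}\,\partial L_{i}/\partial x_{i}$. Evaluating at $M^{*}$ and using $x_{i}^{*}=0$ for every loser index $i>a$, the whole row $i$ of the Jacobian collapses to a single nonzero entry on the diagonal, equal to $r_{i}L_{i}(M^{*})$. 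Collecting the rows yields the upper block-triangular form
\[
\mathbf{J}_{M^{*}}=\left(\begin{array}{cc} M & \Omega \\ \mathbf{0}_{(n-a)\times a} & D \end{array}\right),
\]
where $M$ is the Jacobian of the winner sub-system (i.e.\ \eqref{eq:lvgpc} restricted to $S$) at the positive equilibrium $(x_{1}^{*},\dots,x_{a}^{*})$, and $D$ is a diagonal matrix whose $i$-th entry is
\[
D_{i}=r_{i}\Bigl(1-\sum_{j\in S}a_{ij}x_{j}^{*}-\sum_{j,k\in S}b_{ijk}x_{j}^{*}x_{k}^{*}\Bigr), \qquad i\notin S.
\]

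The spectrum of an upper block-triangular matrix is the union of the spectra of its diagonal blocks, so local stability reduces to two independent checks. By the second hypothesis, $(x_{1}^{*},\dots,x_{a}^{*})$ is a stable positive equilibrium of the winner sub-system, so $M$ is Hurwitz. By the first hypothesis, every diagonal entry of $D$ is strictly negative, so $D$ is Hurwitz as well. Therefore $\mathbf{J}_{M^{*}}$ is Hurwitz and $M^{*}$ is locally exponentially stable; the original un-permuted boundary equilibrium inherits the same conclusion since permutations conjugate the Jacobian and preserve its spectrum.

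I do not expect any serious obstacle: the argument is essentially a stripped-down version of the proof of Theorem \ref{thm:bec2}, with the second faction removed and no Metzler-exposing permutation needed. The only point requiring care is the clean separation of the Jacobian into winner and loser blocks, and this is guaranteed by the factor $x_{i}$ in $\dot x_{i}=r_{i}x_{i}L_{i}(x)$, which forces every loser row to reduce to its diagonal entry $r_{i}L_{i}(M^{*})$ whose sign is controlled exactly by the stated hypothesis on the loser-competition coefficients.
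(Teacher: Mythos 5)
Your proposal is correct and follows essentially the same route as the paper, which simply refers back to the block-triangular Jacobian argument of Theorem \ref{thm:bec2}: permute the winners to the front, observe that every loser row of the Jacobian collapses to the diagonal entry $r_i L_i(M^*)$, and conclude Hurwitz stability from the two hypotheses on the diagonal block $D$ and the winner-subsystem block $M$. Your write-up is in fact more explicit than the paper's one-line proof, and correctly notes that no Metzler-exposing permutation is needed in the purely competitive case.
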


\begin{proof}
    The proof is similar to the proof of Theorem \ref{thm:bec2}.
\end{proof}

\begin{remark}
    Since there are multiple sub-systems of \eqref{eq:lvgt}, each subsystem may have a stable positive equilibrium. In general, multi-stability may occur.
\end{remark}

\section{{Numerical example}}
In this section, we use some numerical examples to illustrate our analytical results. For the simulation setup, we randomly pick all the parameters from the set $[0,10]$. Furthermore, the initial condition is randomly chosen from $[0,10]$. The cooperative Lotka-Volterra model can be considered as a sub-system of the two-faction-competition Lotka-Volterra model. Thus, we omit the simulations for that case.

Firstly, we conduct simulations for the case of competition between 2 factions. We assume a total of $5$ species, $2$ in one faction and $3$ in the other. 
%We set the first $2$ species in one faction and the rest $3$ species in the other. 
Figures \ref{fig:1faction}-\ref{fig:2zero} include all the possible results we can observe and are three typical examples for each case. Figure \ref{fig:1faction} is in line with Theorem \ref{thm:1fac}, while Figure \ref{fig:allcoexist} corresponds to Theorem \ref{thm:allco}. Similarly, Figure \ref{fig:2zero} reflects the analytical results of Theorem \ref{thm:bec2}. Since all other equilibria are unstable, we don't observe that the solution converges towards them in the simulations. From figure \ref{fig:1faction} and \ref{fig:bista}, we confirm the bistability properties of the two-faction system. %That is because stability conditions for the different equilibria can hold at the same time.
Furthermore, to highlight the influence of HOIs, the system parameters and initial conditions are the same in simulations \ref{fig:allcoexist} and \ref{fig:2zero} except that we change $d_{1jk}$ to $5d_{1jk}$, respectively. As we increase the $d_{1jk}$, according to the Theorem \ref{thm:bec2}, the first faction dies out, which is indeed in line with the simulation results. This shows that our theory can also be seen as a manipulation strategy to adjust the winner species. Since HOIs usually denote the indirect interaction in ecology, HOIs are potentially more suitable to adjust than pairwise direct interaction.
Finally, from simulations, we observe that large self-competition terms ($a_{ii},\hat{a}_{ii},c_{iii},\hat{c}_{iii}$) will improve the chance that the system doesn't diverge to infinity. This is because diagonally dominant tensors with all positive diagonal entries are $\mathcal{S}$-tensors, which potentially yields a globally stable positive equilibrium.

%\hjk{It would be nice if all figures were homogeneous}

%\hjk{I guess you do not have to call them "species" in the plots... $x_i$, $y_i$ would be enough}

\begin{figure*}
\begin{subfigure}[t]{0.24\linewidth}
    \centering
\includegraphics[height=3.5cm]{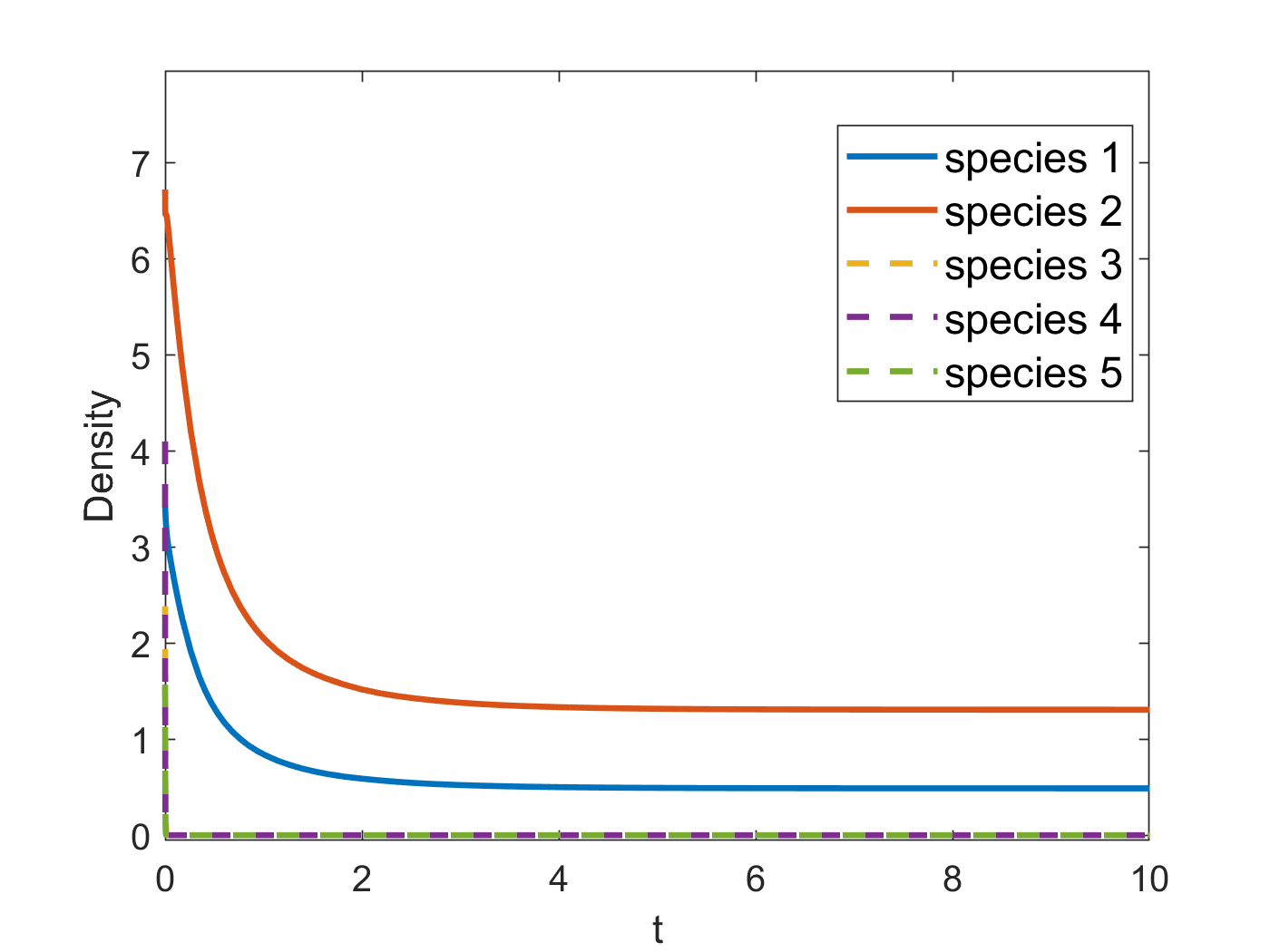}
\captionsetup{width=.95\textwidth}
    \caption{}
    \label{fig:1faction}
  \end{subfigure}
  \begin{subfigure}[t]{0.24\linewidth}
    \centering
\includegraphics[height=3.5cm]{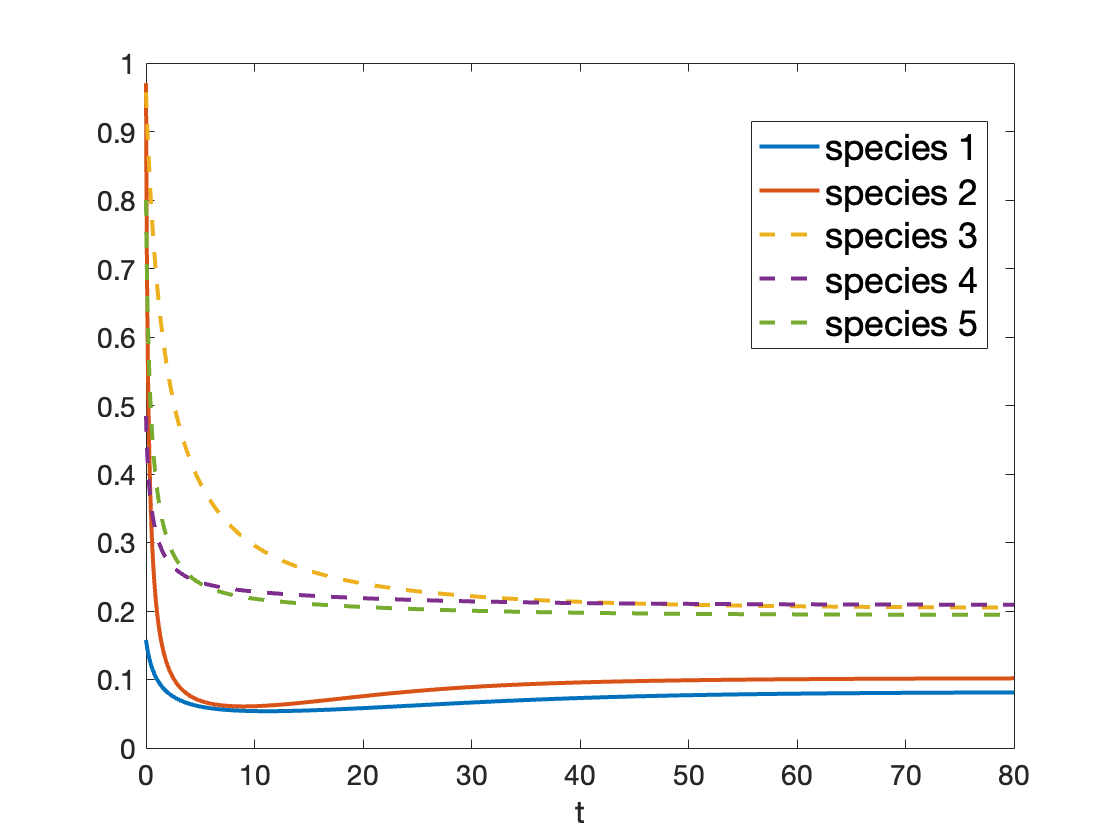}
\captionsetup{width=.95\textwidth}
    \caption{}
    \label{fig:allcoexist}
  \end{subfigure}
  \begin{subfigure}[t]{0.24\linewidth}
    \centering
\includegraphics[height=3.5cm]{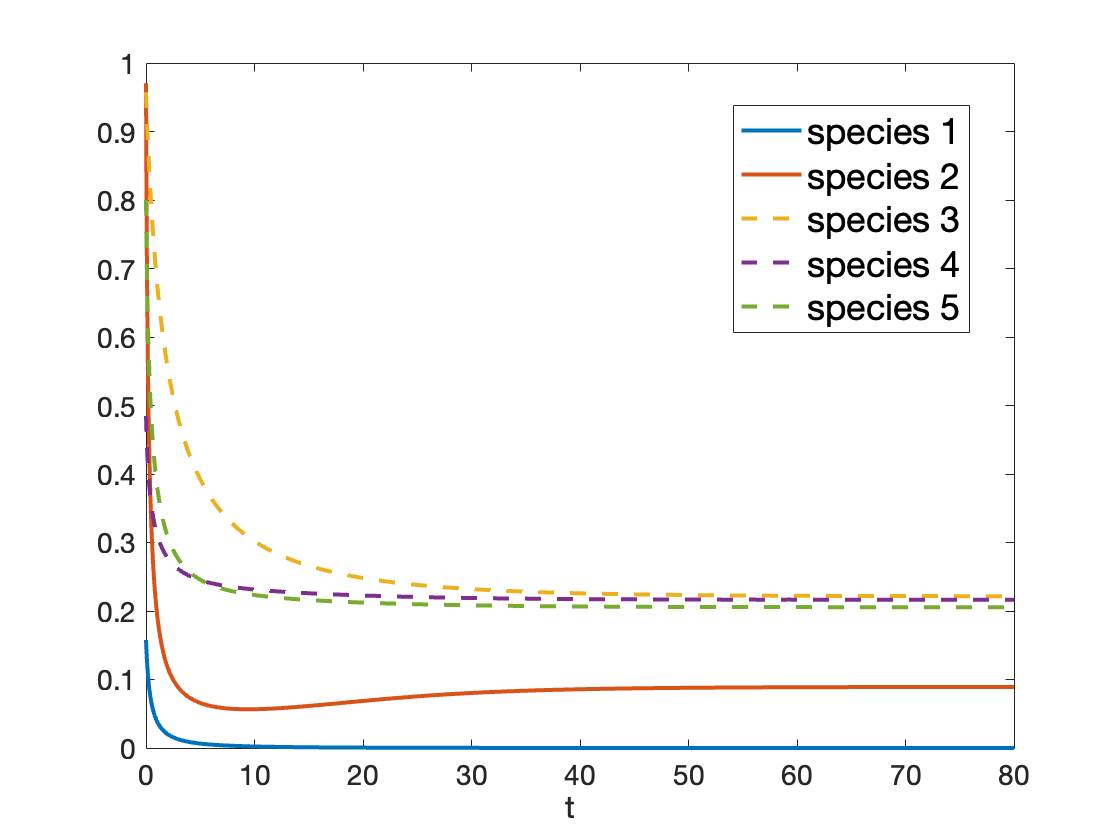}
\captionsetup{width=.95\textwidth}
    \caption{}
    \label{fig:2zero}
\end{subfigure}
\begin{subfigure}[t]{0.24\linewidth}
    \centering
\includegraphics[height=3.5cm]{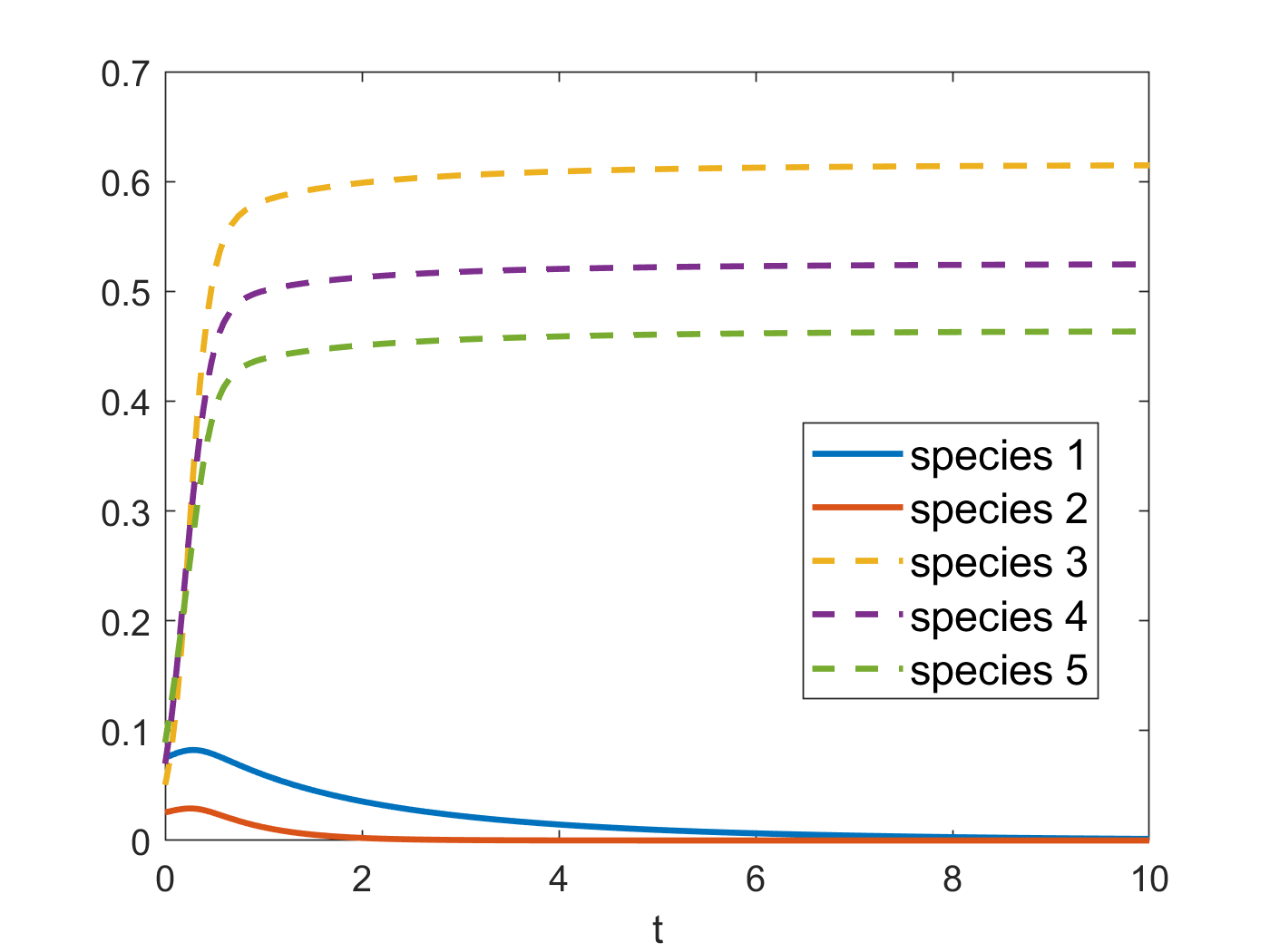}
\captionsetup{width=.95\textwidth}
    \caption{}
    \label{fig:bista}
\end{subfigure}
\caption{
(a) Only one faction wins and all members in the faction are the winners.
(b) All species coexist.
(c) Species from different factions win but some species die out. (d)  From a different initial condition but the same parameters with (a), the solution converges to the different one-faction-wins-all boundary equilibrium. That is, bistability is reflected in an interchange of the winner faction.%The winner factions interchange.} 
}
\end{figure*}

Secondly, we consider a purely competitive case for 5 species. All species compete with each other. Fig \ref{fig:wta} corresponds to Corollary \ref{cor:gsglv}. Fig \ref{fig:wsa1} is in line with Theorem \ref{thm:wsa2}. Moreover, Fig \ref{fig:wsa2} aligns with Theorem \ref{thm:wta}. Then, we further focus on the winner-share-all cases when not all species are winners. From Figure \ref{fig:wsabis} and \ref{fig:wsabis2}, we see that multi-stability may occur. With the same system parameters, from a low-level initial condition, species 2 and 5 are winners; while from a high-level initial condition, species 4 and 5 are winners.

\begin{figure*}
\begin{subfigure}[t]{0.33\linewidth}
    \centering
\includegraphics[height=5cm]{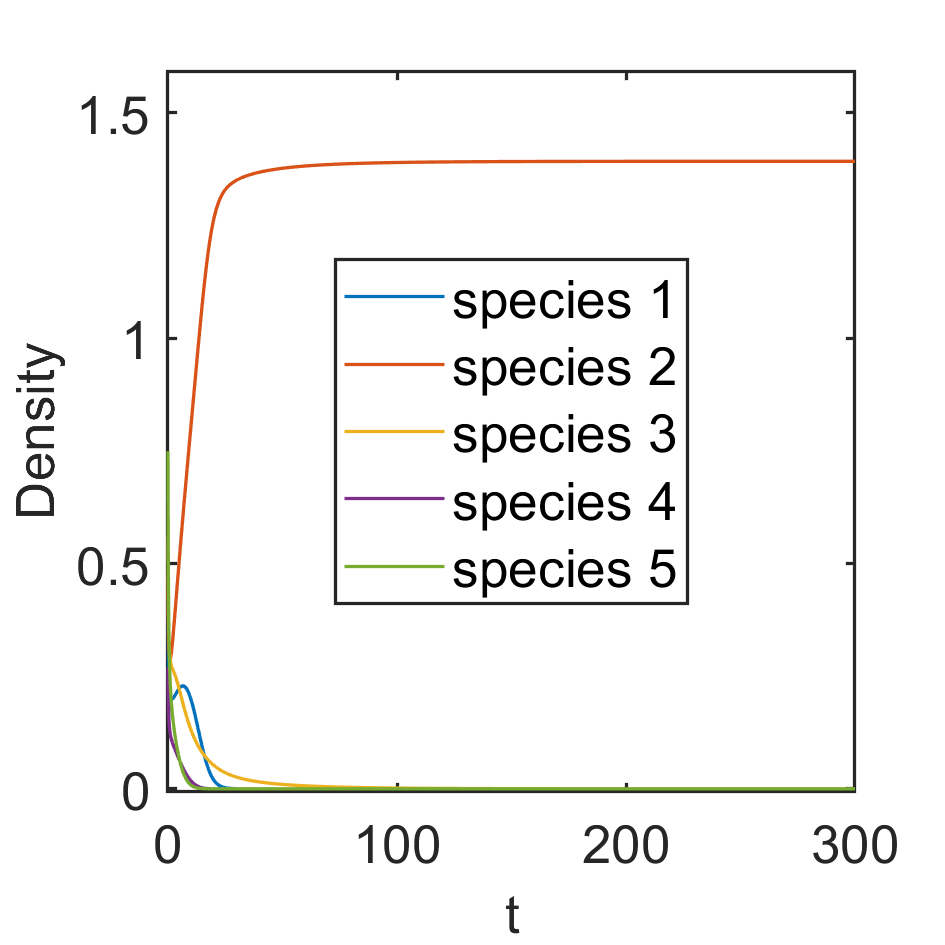}
\captionsetup{width=.95\textwidth}
    \caption{}
    \label{fig:wta}
  \end{subfigure}
  \begin{subfigure}[t]{0.33\linewidth}
    \centering
\includegraphics[height=5cm]{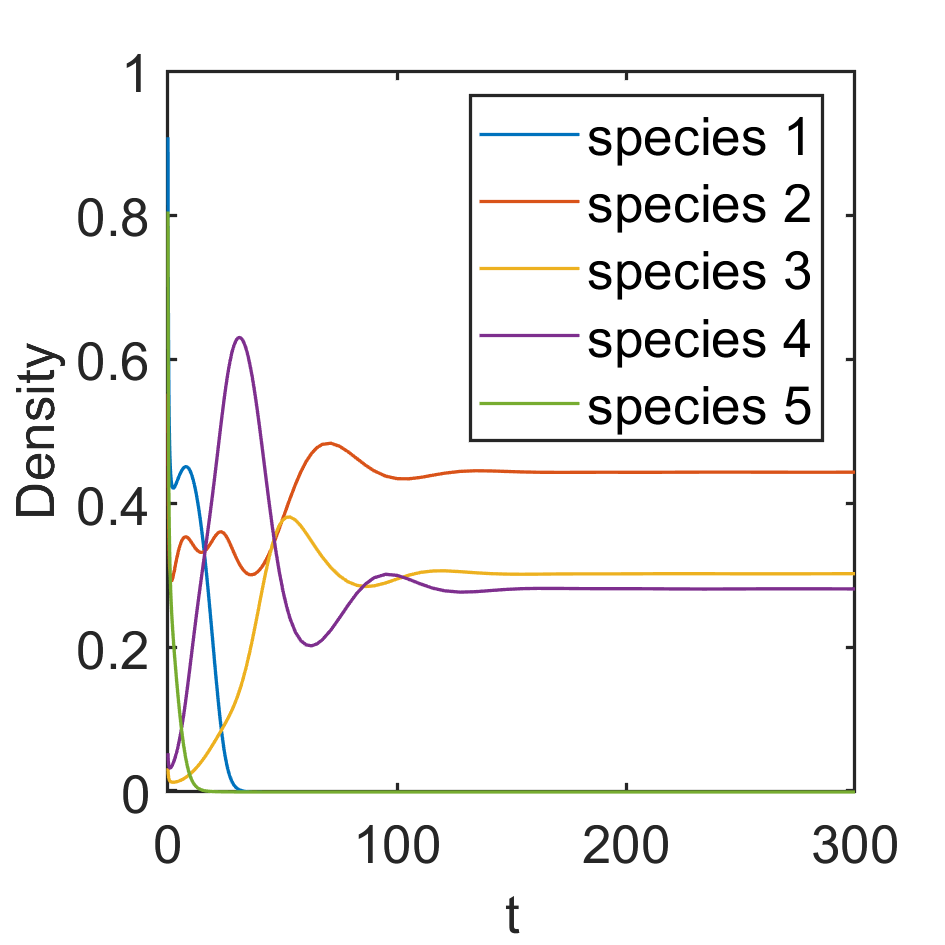}
\captionsetup{width=.95\textwidth}
    \caption{}
    \label{fig:wsa1}
  \end{subfigure}
  \begin{subfigure}[t]{0.33\linewidth}
    \centering
\includegraphics[height=5cm]{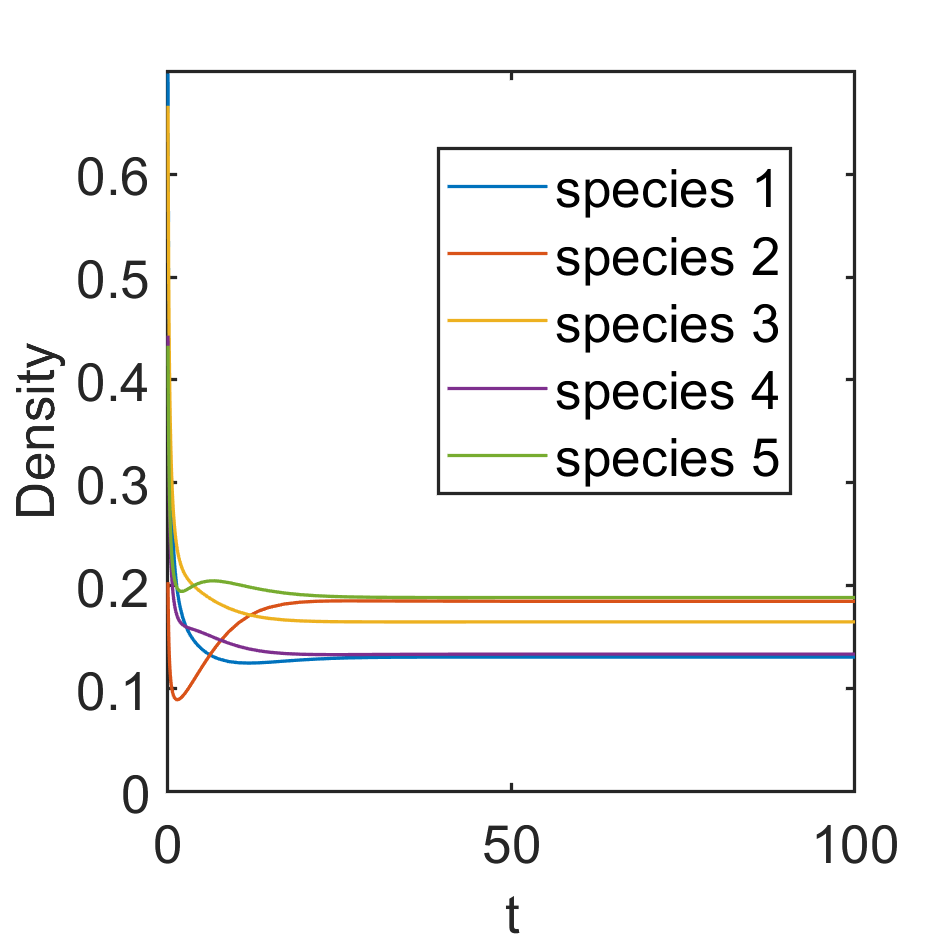}
\captionsetup{width=.95\textwidth}
    \caption{}
    \label{fig:wsa2}
\end{subfigure}
\caption{
(a) Only one species wins and the winner takes all.
(b) Winners share all, but not all species are winners.
(c) All species coexist.  
}
\end{figure*}

\begin{figure*}
\begin{subfigure}[t]{0.5\linewidth}
    \centering
\includegraphics[height=5cm]{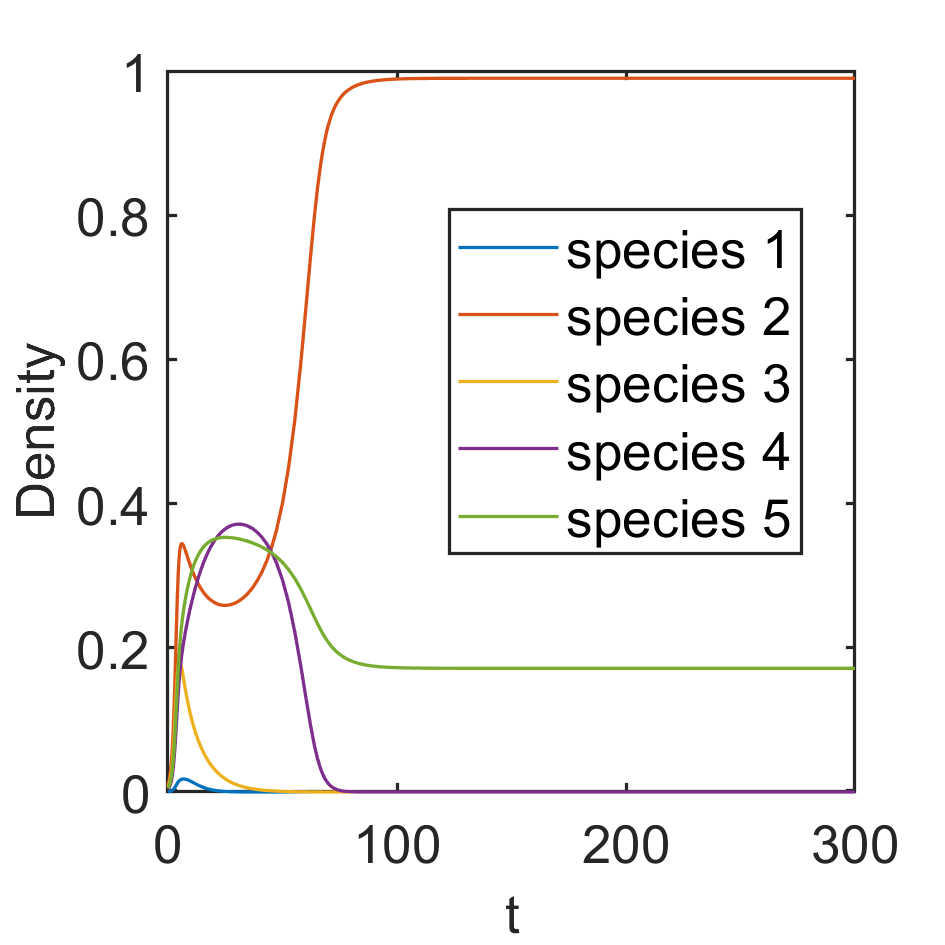}
\captionsetup{width=.95\textwidth}
    \caption{}
    \label{fig:wsabis}
  \end{subfigure}
  \begin{subfigure}[t]{0.5\linewidth}
    \centering
\includegraphics[height=5cm]{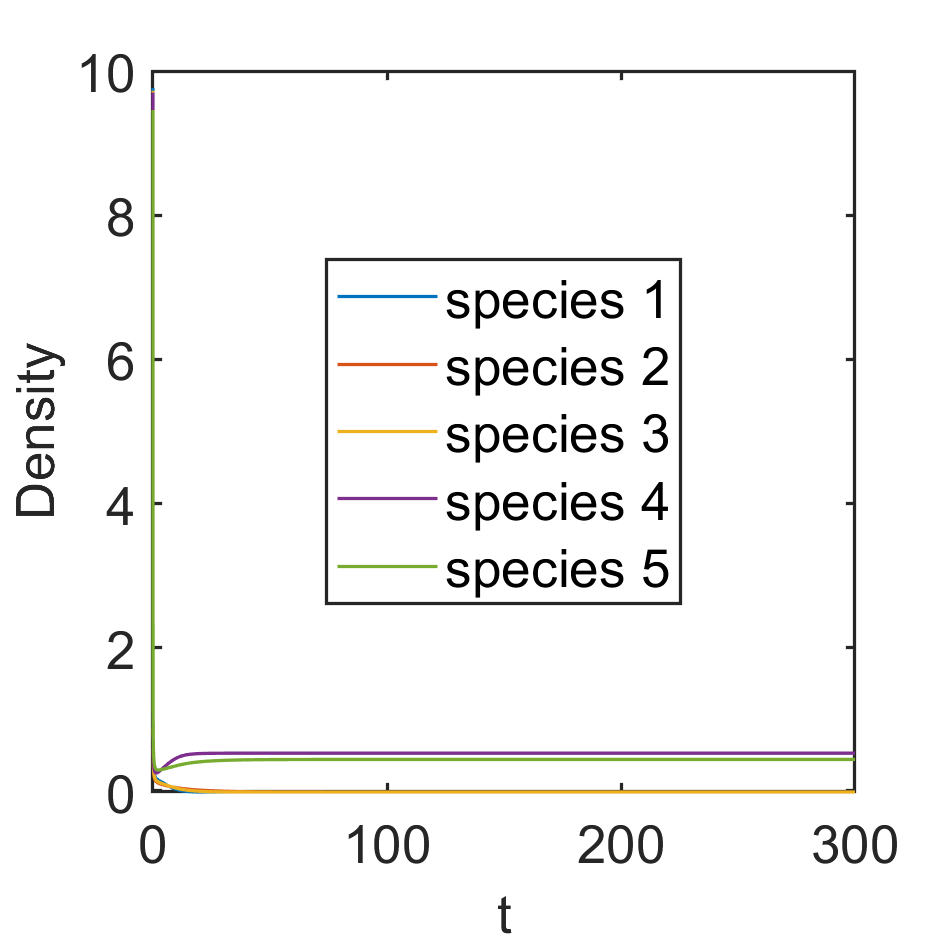}
\captionsetup{width=.95\textwidth}
    \caption{}
    \label{fig:wsabis2}
  \end{subfigure}
\caption{
(a) From a low-level initial condition, species 2 and 5 are winners.
(b) From a high-level initial condition, species 4 and 5 are winners. The system has a bi-stability.
}
\end{figure*}

Finally, we use a numerical example to illustrate Theorem \ref{thm:st}. We consider an equation system $Ax^2+Bx=\mathbf{1}$. 
%We use a simple example. 
For the tensor $A$, its diagonal entries are $A_{111}=A_{222}=11$ and all off-diagonal are all $-1$. For the matrix $B$, $B_{11}=2, B_{12}=-1, B_{21}=-1, B_{22}=2$. Let $v=(1,1)^\top$. We can check that $Av^2>\mathbf{0}$ and $Bv>\mathbf{0}$. Actually, the tensors $A,B$ are all $\mathcal{M}-$, $\mathcal{H}^+-$, $\mathcal{S}-$tensors. Further simulations suggest that $Ax^2+Bx=\mathbf{1}$ has a unique positive solution $(0.27016, 0.27016)^\top.$

Then, we use a more general case. For the tensor $A$, its entries are $A_{111}=A_{222}=11, A_{112}=A_{122}=A_{211}=A_{212}=-1,A_{121}=10,A_{221}=5$. For the matrix $B$, $B_{11}=0.8883, B_{12}=1, B_{21}=0, B_{22}=1$. Let $v=(0.1117,1)^\top$. We can check that $Av^2>\mathbf{0}$ and $Bv>\mathbf{0}$. The tensors $A,B$ are all $\mathcal{S}-$tensors but not $\mathcal{H}^+-$ tensors. Thus, the conditions of Theorem \ref{thm:st} are satisfied. Numerical simulation suggests that $Ax^2+Bx=\mathbf{1}$ has a unique positive solution $(0.49547, 0.43791)^\top.$ 

\section{Conclusion}
This paper studies a general higher-order Lotka-Volterra model. More precisely, we consider 3 typical scenarios: full cooperation, competition between 2 factions, and pure competition respectively. For the analysis tool, we provide an existence result of a positive equilibrium for a non-homogeneous polynomial system and give a lower and upper bound for the solution of a polynomial complementarity problem under mild conditions, which is an extension of the current results in tensor algebras. Then, we further utilize the properties of tensor and monotone system theory to provide the results regarding the existence, uniqueness, and stability of the equilibrium of the system. This paper yields a comprehensive understanding of a higher-order Lotka-Volterra model. Finally, all theoretical results are illustrated by some numerical examples.

\bibliographystyle{IEEEtran}
\bibliography{bib}
\end{document}